\newtheorem{theorem}{Theorem}
\newtheorem{lemma}{Lemma}
\newtheorem{definition}{Definition}
\newtheorem{conjecture}{Conjecture}
\newcommand{\st}{\textsf{s}}
\newcommand{\ct}{\textsf{c}}
\newcommand{\pb}{\textsf{p}}
\newcommand{\ts}{\textbf{\emph{t}}}
\begin{document}

\title{Relay Selection with Channel Probing in Sleep-Wake Cycling 
Wireless Sensor Networks}
\author{K.~P.~Naveen,~\IEEEmembership{Student~Member,~IEEE,} and 
Anurag~Kumar,~\IEEEmembership{Fellow,~IEEE} 
\thanks{Both the authors are with the Department of Electrical Communication
Engineering, Indian Institute of Science, Bangalore 560012, India.
Email: \{naveenkp, anurag\}@ece.iisc.ernet.in}
\thanks{This work was supported in part by the Indo-French Centre for the Promotion of 
Advanced Research (IFCPAR Project 4000-IT-1), and in part by the Department of 
Science and Technology (DST) via a J.C.\ Bose Fellowship.}}
\maketitle

\begin{abstract}
In geographical forwarding of packets in a large wireless sensor network (WSN) with sleep-wake 
cycling nodes, we are interested in the local decision problem faced by a node that has 
``custody'' of a packet and has to choose one among a set of next-hop relay nodes to forward the 
packet towards the sink. Each relay is associated with a ``reward'' that 
summarizes the benefit of forwarding the packet through that relay. We seek a  
solution to this local problem, the idea being that such a solution, if 
adopted by every node, could provide a reasonable heuristic for the end-to-end forwarding problem.
Towards this end, we propose a \emph{relay selection problem} comprising a forwarding node
and a collection of relay nodes, with the relays waking up sequentially at random times.
At each relay wake-up instant the forwarder can choose to \emph{probe} a relay to learn its reward 
value, based on which the forwarder can then decide whether to \emph{stop} (and forward its packet 
to the chosen relay) or to \emph{continue} to wait for further relays to wake-up. The forwarder's 
objective is to select a relay so as to minimize a combination of waiting-delay, reward and 
probing cost. Our problem can be considered as a variant of the asset selling problem studied 
in the operations research literature. We formulate our relay selection problem as a
Markov decision process (MDP) and obtain some interesting structural results on the 
optimal policy (namely, the threshold and the stage-independence properties). 
We also conduct simulation experiments and gain valuable insights into the performance of our 
local forwarding-solution. 
\end{abstract}

\begin{keywords}
Wireless sensor networks, sleep-wake cycling, channel probing, geographical forwarding,
asset selling problem. 
\end{keywords}

\section{Introduction}
Consider a wireless sensor network deployed for the detection of \emph{rare events},
e.g., forest fires, intrusion in border areas, etc. To conserve energy, the nodes in the network 
\emph{sleep-wake} cycle whereby they alternate between an ON state and a low power OFF state.
We are further interested in \emph{asynchronous} sleep-wake cycling where the point processes
of wake-up instants of the nodes are not synchronized
\cite{kim-etal09optimal-anycast,naveen-kumar12relay-selection_TMC_paper}.

In such networks, whenever an event is detected, an alarm packet (containing the event location
and a time stamp) is generated and has to be forwarded, through multiple hops 
(as illustrated in Fig.~\ref{global_figure}), to a control center (\emph{sink}) where appropriate 
action could be taken. Since the network is sleep-wake cycling, a forwarding node (i.e., a node 
holding an alarm packet) has to wait for its neighbors to wake-up before it can choose one for 
the next hop. Thus, due to the sleep-wake process, there is a delay incurred at each hop en-route 
to the sink, and our interest is in minimizing the total average end-to-end delay subject to a 
constraint on some global metric of interest such as the average hop count, or the average total 
transmission power (sum of the transmission power used at each hop). Such a  global problem can be 
considered as a stochastic shortest path problem 
\cite{bertsekas-tsitsiklis91stochastic-shortest-path}, 
for which  the distributed Bellman-Ford algorithm (e.g., the LOCAL-OPT algorithm 
proposed by Kim et al.\ in \cite{kim-etal09optimal-anycast}) can be used to obtain the optimal 
solution.  However, a major drawback with such an approach is that a pre-configuration phase is required to 
run such algorithms, which would involve exchange of several control messages.
Furthermore, such global configuration would need to be performed each time there is a change in 
the network topology, such as due to node failures, or variations in the propagation characteristics, etc.

\begin{figure}
\centering
\includegraphics[scale=0.55]{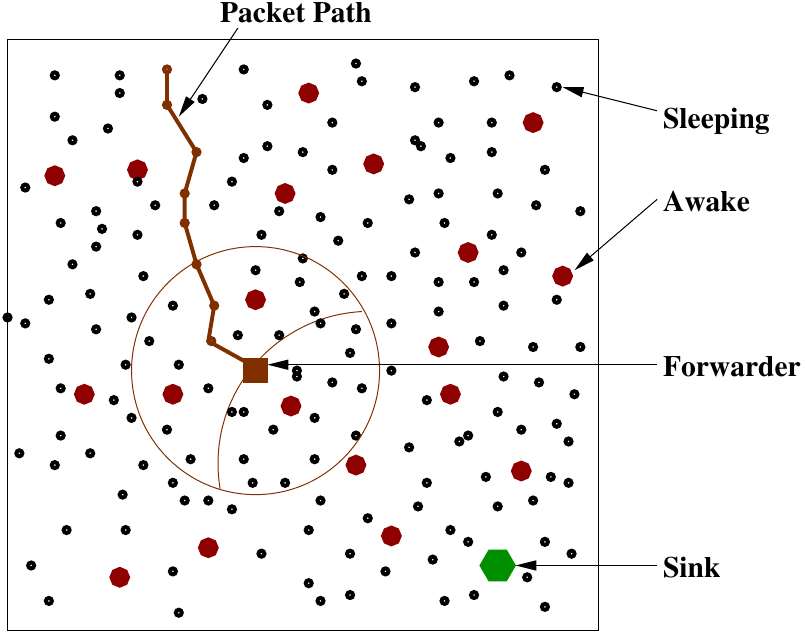}
\caption{\label{global_figure} Illustration of a packet being forwarded to the sink node (green
hexagon) through a sleep-wake cycling network. The square node (labeled as forwarder) is the 
current custodian of the packet.}
\vspace{-6mm}
\end{figure}

The focus of our research is instead towards designing \emph{simple forwarding rules} 
that use only the \emph{local information} available at a forwarding node.
In our own earlier work in this direction 
\cite{naveen-kumar10geographical-forwarding,naveen-kumar12relay-selection_TMC_paper}, we 
formulated the local forwarding problem as one of minimizing the one-hop forwarding delay subject 
to a constraint on the reward offered by the chosen relay. The reward associated with a relay is a 
function of the transmission power and the progress towards the sink made by the packet when 
forwarded via that relay. We considered two variations of the problem, one in which the number of 
potential relays is known \cite{naveen-kumar10geographical-forwarding}, and the other in which 
only a probability mass function of the number of potential relays is known 
\cite{naveen-kumar12relay-selection_TMC_paper}. In each case, we derived the structure of the 
optimal policy. Further, through simulation experiments we found that, in some 
region of operation, 
the end-to-end performance (i.e., total delay and total transmission power)
obtained by applying the solution to the local problem at each hop is comparable with that 
obtained by the global solution (i.e., the LOCAL-OPT proposed by Kim et al.\ 
\cite{kim-etal09optimal-anycast}), 
thus providing additional support for the approach of utilizing local 
forwarding rules, albeit suboptimal.

In our earlier work, however, we assume that the gain 
of the wireless communication channel between
the forwarding node and a relay is a deterministic function of the distance between
the two, whereas, in practice, due to the phenomenon called \emph{shadowing}, the channel 
gain at a given distance from the forwarding node is not a constant, but varies spatially over 
points at the same distance (the variation being typically modeled as log normally 
distributed \cite{rappaport01wireless-communication}). In addition to not being just a function of 
distance,  the path-loss between a pair of locations varies with time; in a forest, for example, 
this would be due to seasonal variations in the foliage.
Therefore, in each instance that a node gets custody of a packet, the node has to send probe 
packets to determine the channel gain to relay nodes that wake up, and thereby ``offer'' 
to forward the packet. Such probing incurs additional cost (for instance, see 
\cite{thejaswi-etal10two-level-probing} where probing allows the transmitter to obtain a finer estimate of the channel gain). 
Hence, ``to probe'' or ``not to probe'' can itself 
become a part of the decision process.
In the current work we incorporate these features (namely, channel probing and the 
associated power cost) while choosing 
a relay for the next hop, leading to an interesting variant of the asset selling problem
\cite[Section~4.4]{bertsekas05optimal-control-vol1},
\cite{karlin62selling-asset} studied in the operations research literature.

\textbf{\emph{Outline and Our Contributions:}} In Section~\ref{local_forwarding_problem_section}
we will formally describe our system model,
following which we will discuss the related work. Sections~\ref{restricted_class_section} 
and \ref{structural_results_section} are devoted towards characterizing the structure 
of the policy RST-OPT (ReSTricted-OPTimal) which is optimal within a restricted class of relay 
selection policies. In Section~\ref{general_class_section} we will discuss the globally optimal 
GLB-OPT policy. Numerical and simulation results are presented in 
Section~\ref{probing_numerical_work_section}. 
Our main technical contributions are the following:
\begin{itemize}
\item We characterize the optimal policy, RST-OPT, in terms of \emph{stopping} sets. 
We prove that the stopping sets have a threshold structure (Theorem~\ref{threshold_nature_lemma}).

\item We further prove that the stopping sets are identical across the decision stages 
(Theorem~\ref{stopping_sets_equal_theorem} and \ref{stopping_sets_l_equal_theorem}).
This result can be considered as a generalization of the \emph{one-step-look ahead}  rule 
(see the remark following Theorem~\ref{stopping_sets_equal_theorem}).

\item Through one-hop numerical work we find that the performance of 
RST-OPT is close to that of GLB-OPT. This result is useful because, the sub-optimal RST-OPT 
is computationally more simpler than GLB-OPT. We have also conducted 
simulations to study the end-to-end performance of RST-OPT.
\end{itemize}
We will finally conclude in Section~\ref{prb_conclusion_section}.
For the ease of readability we have moved most of the proofs to the Appendix.

\section{System Model: The Relay Selection Problem}
\label{local_forwarding_problem_section}
We will describe the system model from the context of \emph{geographical forwarding}, also known 
as location aware routing,
\cite{naveen-kumar10geographical-forwarding,akkaya-younis05survey,zorzi-rao03geographicrandom}.
In geographical forwarding it is assumed that each node in the network knows its location (with 
respect to some reference) as well as the location of the sink.

\begin{figure}[t]
\centering
\includegraphics[scale=0.6]{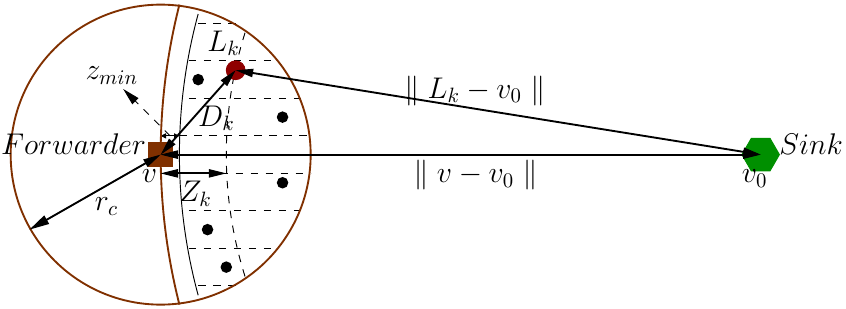} 
\caption{\label{forwarding_set_figure} The hatched area is the forwarding region $\mathcal{L}$. 
For $\ell\in\mathcal{L}$, the progress $Z_\ell$ is the difference between the forwarder-to-sink 
and $\ell$-to-sink distances.}
\vspace{-6mm}
\end{figure}

Consider a forwarding node $\mathscr{F}$ located at $v$ (see Fig.~\ref{forwarding_set_figure}). 
The sink node is situated at $v_0$. Thus, the distance between $\mathscr{F}$ and the sink is 
$V=\ \parallel v-v_0\parallel$ (we use $\parallel\cdot\parallel$ to denote 
the Euclidean norm). The \emph{communication region} is the set of all locations 
where reliable exchange of \emph{control messages} (transmitted using a
low rate robust modulation technique on a separate control channel)
can take place between $\mathscr{F}$ and a receiver, if any, at these locations.  
In Fig.~\ref{forwarding_set_figure}  we have shown the communication region to be circular, 
but in practice this region can be arbitrary. The set of nodes within the communication region 
are referred to as the \emph{neighbors}. 

Let $V_{\ell}=\ \parallel\ell-v_0\parallel$ represent the 
distance of a location $\ell$ (which is a point in $\Re^2$) from the sink. Now define
the \emph{progress} of location $\ell$ as $Z_{\ell}=V-V_{\ell}$, which is simply the difference 
between the $\mathscr{F}$-to-sink and $\ell$-to-sink distances. $\mathscr{F}$ is
interested in forwarding the packet only to a neighbor within the \emph{forwarding region} 
$\mathcal{L}$, which is defined as
\begin{eqnarray}
\label{prb_forwarding_region_figure}
\mathcal{L}=\Big\{\ell\in\mbox{communication region}: Z_{\ell}\ge z_{min}\Big\}
\end{eqnarray}
where, $z_{min}>0$ is the minimum progress constraint (see Fig.~\ref{forwarding_set_figure},
where the hatched area is the forwarding region).
The reason for using $z_{min}>0$ in the definition of $\mathcal{L}$ are:
(1) practically this will ensure that a progress of at least $z_{min}$ is made 
by the packet at each hop, and (2) mathematically this condition will allow us to bound the 
reward functions (to be defined sooner) to take values within an interval 
$[0,\overline{r}]$. 

Next, it is natural to assume that $\mathcal{L}$ is bounded; we will further 
assume that $\mathcal{L}$ is closed. The reason for imposing this condition will become clear in 
Section~\ref{structural_results_section}. Finally, we will refer to the nodes in the forwarding 
region as \emph{relays}.

% \noindent
\textbf{\emph{Sleep-Wake Process:}} 
Without loss of generality, we will assume that 
$\mathscr{F}$ receives an alarm packet (from an upstream node) at time $0$, which has to be
forwarded to one of the relays. There are $N$ relays that wake-up sequentially at the points of 
a Poisson process of rate $\frac{1}{\tau}$.\footnote{\label{sleep-wake-footnote}A practical approach
for sleep-wake cycling is the \emph{asynchronous periodic} process, where each relay $i$ 
wakes up at the periodic instants $T_i + kT$ with $\{T_i\}$ being i.i.d.\ (independent and 
identically distributed) uniform on $[0,T]$ 
\cite{kim-etal09optimal-anycast,naveen-kumar12relay-selection_TMC_paper}.
Now, for large $N$ if $T$ scales with $N$ such that $\frac{N}{T}\rightarrow\frac{1}{\tau}$,  
then the aggregate point process of relay wake-up instants converges to a Poisson process of rate 
$\frac{1}{\tau}$ \cite{cinlar75stochastic-processes}, thus justifying our Poisson process 
assumption.} The wake-up times are denoted, $0\le W_1\le \cdots \le W_N$. The relay waking 
up at the instant $W_k$ is referred to as the $k$-th relay. Let  $U_1=W_1$ and $U_k=W_k-W_{k-1}$ 
$(k=2,\cdots,N$) denote the \emph{inter-wake-up time} between the $k$-th and the $(k-1)$-th relay. 
Then, $\{U_k\}$ are i.i.d.\ exponential random variables with mean $\tau$.

% \noindent
\textbf{\emph{Channel Model:}}
We will consider the following standard model for the transmission power required 
by $\mathscr{F}$ to achieve an 
SNR (signal to noise ratio) constraint of $\Gamma$ at some location $\ell$, 
whose distance from $\mathscr{F}$ is more than $d_{ref}$  
(far-field reference distance beyond which the following expression will hold \cite{kumar-etal08wireless-networking}):
\begin{eqnarray}
\label{power_equn}
P_{\ell}=\frac{\Gamma N_0}{G_{\ell}}{\left(\frac{D_{\ell}}{d_{ref}}\right)}^{\xi}
\end{eqnarray}
where,
$D_{\ell}=\ \parallel \ell-v\parallel$ is the distance between $\mathscr{F}$
and $\ell$, $G_\ell$ is the random component of the channel 
gain between $\mathscr{F}$ and $\ell$,
$N_0$ is the receiver noise variance, and
$\xi$ is the path-loss attenuation factor.
We will assume that $d_{ref}\le z_{min}$ so that $P_\ell$ in (\ref{power_equn}) is the
power required for any $\ell\in\mathcal{L}$. Also, for simplicity, from here on we
will use $\Gamma'$ to  denote $\frac{\Gamma N_0}{d_{ref}^\xi}$.

% Let $G_\ell$ represent the random component of the channel gain between $\mathscr{F}$ and a receiver, 
% if any, at the location $\ell\in\mathcal{L}$. 
Although $G_\ell$ along with the path-loss, ${\left(\frac{D_{\ell}}{d_{ref}}\right)}^{\xi}$,
constitutes the gain of the channel, for simplicity we will throughout refer to $G_\ell$ 
itself as the channel gain between $\mathscr{F}$ and the location $\ell$. 
We will assume that the set of channel gains, 
$\{G_\ell:\ell\in\mathcal{L}\}$, are i.i.d. We will further 
assume that the channel coherence time is large 
so that the channels gains remain unchanged over the entire duration of the decision process, i.e.,
in physical layer wireless terminology, we have a \emph{slowly varying channel}.

% \noindent
\emph{Remark:} There are two remarks we would like to make here. First is regarding
the channel gains being i.i.d.
Since the randomness in the channel is spatially correlated 
\cite{agrawal-patwari09correlated-shadow_journal}, 
if two locations $\ell$ and $u$ are very 
close then the corresponding gains, $G_\ell$ and $G_u$, will not be independent; a minimum
separation between the receivers is required for the gains to be statistically independently.
Thus, our assumption of independence between the channel gains to the relays
requires that the relays should not be close to each other, or, equivalently, the relay density 
should not be large. We will assume that this physical property holds, and, thus, proceed with the 
technical assumption that the channel gains are i.i.d. 

Next, about the slowly varying channel, 
it suffices 
for the channel coherence time to be longer than the sleep-wake cycling period
(recall footnote~\ref{sleep-wake-footnote} from page~\pageref{sleep-wake-footnote}). Under our light 
traffic assumption where the events are rare, with a probability close to $1$, a 
node wakes up and finds no forwarding node 
in its communication range. Thus, with a high probability, when a node wakes up, it stays awake for a 
few milliseconds, e.g., $3$ milliseconds (for sending a control packet, turning the radio from 
send to listen, and then waiting for a possible response).
% using typical parameters from the IEEE 802.15.4 standard \cite{}. 
Thus, for example, with a $1\%$ duty cycle, the inter-wakeup 
time would need to be $300$ milliseconds, imposing a reasonable requirement 
on the channel coherence time.

% $\{G_\ell:\ell\in\mathcal{L}\}$, are i.i.d.

% Now, given $G_\ell$ the minimum power required to achieve a SNR (signal to noise ratio) threshold 
% of $\Gamma$ at a location $\ell$, whose distance from $\mathcal{F}$ is more than $d_{ref}$ 
% (far-field reference distance beyond which the following expression will hold 
% \cite{kumar-etal08wireless-networking}), is
% \begin{eqnarray}
% \label{power_equn}
% P_{\ell}=\frac{\Gamma N_0}{G_{\ell}}{\left(\frac{D_{\ell}}{d_{ref}}\right)}^{\xi}
% \end{eqnarray}
% % $P_{\ell}=\frac{\Gamma N_0}{G_{\ell}}{\left(\frac{D_{\ell}}{d_{ref}}\right)}^{\xi}$,
% where,
% $D_{\ell}=\parallel \ell-v\parallel$ is the distance between $\mathscr{F}$
% and the location $\ell$,
% $N_0$ is the receiver noise variance, and
% $\xi$ is the path-loss attenuation factor.
% % $d_{ref}$ is the far-field reference distance beyond which the above expression
% % is applicable \cite{}. 
% We will assume that $d_{ref}\le z_{min}$ so that $P_\ell$ in (\ref{power_equn}) is the
% power required for any $\ell\in\mathcal{L}$. Also, for simplicity, from here on we
% will use $\Gamma'$ to  denote $\frac{\Gamma N_0}{d_{ref}^\xi}$.\\

% \noindent
\textbf{\emph{Reward Structure:}} 
Finally, combining progress, $Z_\ell$, and power, $P_\ell$, we define the reward 
associated with a location $\ell\in\mathcal{L}$ as,  
\begin{eqnarray}
\label{reward_equn}
R_{\ell}=\frac{Z_{\ell}^a}{P_{\ell}^{(1-a)}}
=\frac{Z_{\ell}^a}{(\Gamma' D_{\ell}^\xi)^{(1-a)}} G_{\ell}^{(1-a)},
\end{eqnarray}
where $a\in[0,1]$ is used to trade-off between $Z_{\ell}$ and $P_{\ell}$. 
The reward being inversely proportional to $P_{\ell}$ is clear because it is advantageous
to use low power to get the packet across; $R_{\ell}$ is proportional to $Z_{\ell}$
to promote progress towards the sink while choosing a relay for the next hop. 

The channel gains, $\{G_\ell\}$, are non-negative; we will further assume that they are bounded 
above by $g_{max}$. These conditions along with $Z_\ell\ge z_{min}$ (which implies that 
$D_\ell\ge z_{min}$) and $\mathcal{L}$ is bounded (so that $Z_\ell\le z_{max}$ for all 
$\ell\in\mathcal{L}$) will provide the following upper bound for the reward functions 
$\Big\{R_\ell:\ell\in\mathcal{L}\Big\}$:
\begin{eqnarray*}
\overline{r}
&=&  \frac{z_{max}^a}{(\Gamma' z_{min}^\xi)^{(1-a)}} g_{max}^{(1-a)}.
\end{eqnarray*}
Thus, the reward values lie within the interval $[0,\overline{r}]$.
 
Let $F_{\ell}$ represent the c.d.f.\ (cumulative distribution function) of $R_{\ell}$, and
\begin{eqnarray}
\label{distribution_set}
\mathcal{F}
&=& \Big\{F_\ell:\ell\in\mathcal{L}\Big\}
\end{eqnarray}
denote the collection of all possible
reward distributions.
From (\ref{reward_equn}), note that, given a location $\ell$ it is only possible
to know the reward distribution $F_\ell$. To know the exact reward $R_\ell$,
$\mathscr{F}$ has to \emph{transmit probe packets} to learn the channel gain $G_\ell$
(we will formalize probing very soon).

% \noindent
\textbf{\emph{Relay Locations:}}
We will assume that  each of the $N$ relays is randomly and mutually independently located in the 
forwarding region $\mathcal{L}$. Formally, if  $L_1,L_2,\cdots,L_N$ denotes the relay locations, 
then these are i.i.d.\ uniform over the forwarding set $\mathcal{L}$ (this assumption holds if the 
nodes are deployed according to a spatial Poisson process). Let $L$ denote the uniform distribution 
over $\mathcal{L}$ so that, for $k=1,2,\cdots,N$, the distribution of $L_k$ is $L$. 

\emph{Remark:}
Although, for 
the sake of motivating the model, we have restricted to a very specific $\mathcal{F}$ (set of 
reward distributions) and $L$ (relay location distribution), it is important to note that
all our analysis in the subsequent sections will follow through for more general $\mathcal{F}$
and $L$ as well. 

At time $0$, $\mathscr{F}$ only knows that there are $N$ relays in its 
forwarding set $\mathcal{L}$, but  does not  know their locations, $L_k$, nor their 
channel gains, $G_{L_k}$. 

% \noindent
\textbf{\emph{Sequential Decision Problem:}}
When the $k$-th relay wakes up, we assume that its location $L_k$, and hence its reward 
distribution $F_{L_k}$ is revealed to $\mathscr{F}$. This can be accomplished by including the 
location information $L_k$ within a control packet (sent using a low rate robust modulation 
technique, and hence, assumed to be error free) transmitted by the $k$-th relay upon waking up. 
However, if $\mathscr{F}$ wishes to learn the channel gain $G_{L_k}$ (and hence the exact reward 
value $R_{L_k}$), it has to transmit  additional probe packets (indeed several packets) in order to 
obtain a reliable estimate of the channel gain, incurring a power cost of $\delta\ge0$ units. 
Thus, when the $k$-th relay wakes up (referred to as \emph{stage} $k$), given the set of previously
probed and unprobed relays (i.e., the history), the actions available to $\mathscr{F}$ are:
\begin{itemize}
\item $\st$: \emph{stop} and forward the packet to a relay with the maximum reward 
(\emph{best relay}) among the probed relays; with this action the decision process ends.

\item $\ct$: \emph{continue} to wait for the next relay to wake-up (average waiting time 
is $\tau$); with this action the decision process enters stage $k+1$. 

\item $\pb$: {probe} a relay from the set of all unprobed relays (provided there is at least 
one unprobed relay). The probed relay's reward value is then revealed, allowing $\mathscr{F}$
to update the best relay. \emph{After probing, the decision process is still at stage $k$ and 
$\mathscr{F}$ has to again decide upon an action}.
\end{itemize}

In the model, for the sake of analysis, we neglect the time taken for the exchange of control 
packets and the time taken to probe a relay to learn its channel gain. We argue that this is 
reasonable for very low duty cycling networks, where the average inter-wake-up time is much larger 
than the time taken for probing and for the exchange of control packets.

At stage $k$, let $b_k$ denote the reward of the best relay, and $\mathcal{F}_k$ be the vector of 
reward distribution of the unprobed relays, i.e., formally, 
\begin{eqnarray*}
b_k=\max\Big\{R_{L_i}: i\le k, \mbox{ relay } i \mbox{ has been probed}\Big\},
\end{eqnarray*}
and
\begin{eqnarray*}
\mathcal{F}_k=\Big(F_{L_i}: i\le k, \mbox{ relay } i \mbox{ is unprobed}\Big).
\end{eqnarray*}
We will regard $(b_k,\mathcal{F}_k)$ to be the state of the system at stage $k$.
Note that, it is possible that until stage $k$ no relay has been probed, in which 
case $b_k=-\infty$, or all the relays are probed so that $\mathcal{F}_k$ is empty.
Whenever $\mathcal{F}_k$ is empty we will represent the state as simply $b_k$.
Now we can define a forwarding policy $\pi$ as follows:
\begin{definition}
A policy $\pi$ is a sequence of mappings $(\mu_1,\mu_2,\cdots,\mu_N)$ where, 
\begin{itemize}
\item for $k=1,2,\cdots,N-1$, 
$\mu_k(b_k,\mathcal{F}_k)\in\{\st,\ct,\pb\}$ and $\mu_k(b_k)\in\{\st,\ct\}$, and

\item $\mu_N(b_N,\mathcal{F}_N)\in\{\st,\pb\}$ and $\mu_N(b_N)\in\st$.
\end{itemize}
Note that the action to continue is not available at the last stage $N$. Let $\Pi$
denote the set of all policies. 
\hfill $\blacksquare$
\end{definition}

For a policy $\pi\in\Pi$, the delay incurred, denoted $D$,  is the time until a relay is chosen.
Let $R$ denote the reward offered by the chosen 
relay. Further, let $M$ denote the total number of relays that 
were probed during the decision process.
Then, recalling that $\delta$ is the probing cost, $\delta M$ represents the total 
cost of probing. We would like to think of $(R-\delta M)$ as the 
\emph{effective reward} achieved using policy $\pi$.
Then, denoting $\mathbb{E}[\cdot]$ to be the expectation operator conditioned on 
using policy $\pi$, the problem we are interested in is the following:
\begin{eqnarray}
\label{unconstrained_equn}
\mbox{Minimize}_{\pi\in\Pi}\ \bigg(\mathbb{E}_\pi[D] - \eta \Big(\mathbb{E}_\pi[R] -
\delta\mathbb{E}_\pi[M]\Big)\bigg),
\end{eqnarray}
where $\eta>0$ is the multiplier used to trade-off between delay and effective reward.

% \noindent
\textbf{\emph{Restricted Class $\overline{\Pi}$:}}
Recall that the state at stage $k$ is of the form $(b_k,\mathcal{F}_k)$
where $\mathcal{F}_k$ is the set of all unprobed relays. The size of $\mathcal{F}_k$
can vary from $0$ (if all the $k$ relays that have woken up thus far have been probed) to $k$ (if 
none have been probed). Further, suppose the size of $\mathcal{F}_k$ is $m$ ($0<m\le k$) then 
% $\mathcal{F}_k\in\times_{1}^{m}\mathcal{F}$
$\mathcal{F}_k\in\mathcal{F}^m$ (the $m$ times Cartesian product of $\mathcal{F}$)
since the reward distribution of
each unprobed relay can be any distribution from $\mathcal{F}$.
Thus, the set of all possible states at stage $k$ is large.
Hence, for analytical tractability, we first  consider (in Sections~\ref{restricted_class_section} 
and \ref{structural_results_section})
solving the problem in (\ref{unconstrained_equn}) over a \emph{restricted class} of policies,
$\overline{\Pi}\subseteq\Pi$, where a policy is restricted to take decisions keeping only
up to two relays awake $-$ one the best among all probed relays and other the best among 
the unprobed ones. Thus, the decision at stage $k$
is based on $(b_k,H_k)$ where $H_k$ is the ``best distribution in $\mathcal{F}_k$''
(our notion of best distribution is based on stochastic ordering; 
we will formally discuss  this  in 
Section~\ref{structural_results_section}). Later in Section~\ref{general_class_section} we will 
discuss the optimal policy within the \emph{unrestricted class} of policies $\Pi$.

% \noindent
\emph{\textbf{Related Work:}}
Suppose the probing cost $\delta=0$, then the objective in (\ref{unconstrained_equn}) will reduce to 
minimizing $\Big(\mathbb{E}_\pi[D] - \eta \mathbb{E}_\pi[R]\Big)$. Further, when $\delta=0$, since 
there is no advantage in not probing, an optimal policy is to always probe relays as they wake-up 
so that their reward value is immediately revealed to $\mathscr{F}$. Alternatively,
if  $\mathscr{F}$ is not allowed to exercise the option to not-probe a relay, then again the model reduces to the 
case where the relay rewards are immediately revealed as and when they wake-up.

We have 
studied this particular case of our relay selection problem (which we will refer to as the 
\emph{basic relay selection model}) in our earlier work 
\cite[Section~6]{naveen-kumar12relay-selection_TMC_paper},\cite{naveen-kumar10geographical-forwarding}, 
and this basic model can be shown to be 
equivalent to a basic version of the \emph{asset selling problem} 
\cite[Section~4.4]{bertsekas05optimal-control-vol1},
\cite{karlin62selling-asset} studied in the operations research literature. 
The asset selling problem comprises a seller (with some asset to sell)
and a collection of buyers who are arriving sequentially in time. The offers made by the
buyers are i.i.d. If the seller wishes to choose an early offer, then he can invest the funds
received for a longer time period. On the other hand, waiting could yield a better
offer, but with the loss of time during which the sale-proceeds could have been invested. 
The seller's objective is to choose an offer so as
to maximize his final revenue (received at the end of the investment period).
Thinking of the offer of a buyer as analogous to the reward of a relay, the seller's objective of 
maximizing revenue is equivalent to the forwarder's objective
of minimizing a combination of delay and reward. 

However, in the present work we generalize this basic version by allowing the probing cost to be
positive (i.e., $\delta>0$) so that a relay's reward value (equivalently, buyer's offer value) is 
not revealed to the forwarder (equivalently, seller) for free. Instead the forwarder can choose to
probe a relay to know its reward value after incurring an additional cost of $\delta$.
Although there is work reported in the asset selling problem literature which is centered around
the idea of the offer (or reward) distribution being unknown, or not knowing a parameter of the 
offer distribution \cite{albright77generalized-house-selling,rosenfield-etal83selling-asset} but
these do not incorporate an additional probe action like in our model here. 
To the best of our knowledge, the particular class of models we study here is not available in the 
asset selling problem literature. 

Problem of choosing a next-hop relay arises in the context of \emph{geographical forwarding}
(as mentioned earlier, \emph{geographical forwarding} 
\cite{akkaya-younis05survey,mauve-hartenstein01survey} 
is a forwarding technique where the prerequisite is that the nodes know their respective locations 
as well as the sink's location). For instance, Zorzi and Rao in \cite{zorzi-rao03geographicrandom} 
propose an algorithm called GeRaF (Geographical Random Forwarding) which, at each forwarding stage,
chooses the relay making the largest progress. For a sleep-wake cycling network, Liu et al.\ in 
\cite{liu-etal07CMAC} propose a relay selection approach as a part of CMAC, a protocol for 
geographical packet forwarding. Under CMAC, node $i$ chooses an $r_0$ that minimizes the
expected normalized latency (which is the average ratio of one-hop delay and progress).
Links to more literature on similar work from the context of geographical forwarding
can be found in \cite{naveen-kumar12relay-selection_TMC_paper}. However, these work do not 
incorporate the action of ``probing a relay'' as in our relay selection model here.

From the context of wireless communication, the action to {probe} generally occurs in the 
problem of channel selection \cite{chaporkar-proutiere08joint-probing,chang-liu07channel-probing}.
For instance, the authors in \cite{chaporkar-proutiere08joint-probing} study the following problem: 
a transmitter, aiming to maximize its throughput, has to choose a channel for its transmissions, 
among several available ones. The transmitter, only knowing the channel gain distributions, has to 
send probe packets to learn the exact channel state information (CSI). Probing many channels 
yields a channel with a good gain but reduces the effective time for transmission within the channel 
coherence period. The problem is to obtain optimal strategies to decide when to stop probing and to 
transmit. An important difference with our work is that, in 
\cite{chaporkar-proutiere08joint-probing,chang-liu07channel-probing} all the channel gain 
distributions are known a priori while here the reward distributions are revealed as and when the 
relays wake-up. We will discuss more about the work in \cite{chaporkar-proutiere08joint-probing}
in Section~\ref{general_class_section}.

% 
% More about the parallels and differences of the work in 
% \cite{chaporkar-proutiere08joint-probing} with ours, will be discussed in 
% Section~\ref{general_class_section}. 

Another work which is close to ours is that of Stadje \cite{stadje97two-levels}, where 
only some initial information  about an offer (e.g., the average size of 
the offer) is revealed to the decision maker upon its arrival. In addition to the actions, {stop} 
and {continue}, the decision maker can also choose to obtain more information about the offer by 
incurring a cost. Recalling previous offers is not allowed.
A similar problem is studied by Thejaswi et al.\ in \cite{thejaswi-etal10two-level-probing},
where initially a coarse estimate of the channel gain is made available to 
the transmitter. The transmitter can choose to probe the channel a second time to get a finer 
estimate. In both of these \cite{stadje97two-levels,thejaswi-etal10two-level-probing}, 
the optimal policy is characterized by a threshold rule.
However, the horizon length of these problems  is infinite, because of which the thresholds
are stage independent. In general, for a finite horizon problem the optimal policy would be stage 
dependent. For our problem, despite being a finite horizon one,
we are able to show that certain  stopping sets are identical across stages. This is
due to the fact that we  allow the best probed relay to stay awake.

\section{Restricted Class $\overline{\Pi}$: An MDP Formulation}
\label{restricted_class_section}
Confining to the restricted class $\overline{\Pi}$, in this section we will 
formulate the problem in (\ref{unconstrained_equn}) 
as a Markov decision process. This will require us to first discuss the 
one-step cost functions and state transitions before
proceeding to write the Bellman optimality equations. 

\subsection{One-Step Costs and State Transitions}
The decision instants or the decision stages are the times at which the relays wake-up. 
Thus, there are $N$ decision stages indexed by $k=1,2,\cdots,N$. 
Recall that for any policy in the restricted class $\overline{\Pi}$,
the decision at stage $k$ is based on $(b_k,H_k)$, where $b_k$ is the best reward so far 
and $H_k\in\mathcal{F}_k$ is the best reward distribution with $\mathcal{F}_k$ being the set of
reward distributions of all the unprobed relays so far. As mentioned earlier, if no relay has been
probed until stage $k$ then $b_k=-\infty$. On the other hand, if all the relays have been probed,
in which case $\mathcal{F}_k$ is empty, then we will denote the state as simple $b_k$.
Hence, the state space can be written as,
\begin{eqnarray*}
\mathcal{X} 
&=& [0,\overline{r}] \cup \Big\{(b,F_\ell): 
b\in\{-\infty\}\cup[0,\overline{r}], \ell\in\mathcal{L}\Big\}\cup\{\textbf{\emph{t}}\}
\end{eqnarray*}
where $\textbf{\emph{t}}$ is the cost-free termination state. We will use $(b,F_\ell)$ to denote
a generic state at stage $k$. 

Now, at stage $k=1,2,\cdots,N-1$, given that the state is $(b,F_\ell)$, if $\mathscr{F}$'s decision 
is to stop then the decision process enters $\textbf{\emph{t}}$, with $\mathscr{F}$ incurring a 
termination cost of $-\eta b$ (recall from (\ref{unconstrained_equn}) that $\eta>0$ is the 
trade-off parameter). On the other hand, if the action is to {continue} then $\mathscr{F}$ will 
first incur a waiting cost of $U_{k+1}$ (the time until the next relay wakes up) and then, when the 
$(k+1)$-th relay wakes-up (whose reward distribution is $F_{L_{k+1}}$),
$\mathscr{F}$ chooses between the two unprobed relays $-$ one the previous relay with reward 
distribution $F_\ell$, and other the new one with distribution $F_{L_{k+1}}$ $-$ so that 
the state at stage $k+1$ will be either $(b,F_{\ell})$ or $(b,F_{L_{k+1}})$. The best reward value 
continues to be $b$ since no new relay has been probed during the state transition.

Alternatively, $\mathscr{F}$ could choose the action to {probe} the available unprobed relay (whose 
reward distribution is $F_\ell$) incurring a cost of $\eta\delta$ (recall that $\delta$ is the 
probing cost). After probing, the decision process is still considered to be at stage $k$ with 
the new state being $b'=\max\{b,R_\ell\}$, where $R_\ell$ is the reward value of the just probed relay  
(thus the distribution of $R_\ell$ is $F_\ell$). $\mathscr{F}$ has to now further decide whether to 
{stop} (incurring a one-step cost of $-\eta b'$ and enter $\textbf{\emph{t}}$), or {continue} (in 
which case the one-step cost is $U_{k+1}$ and the next state is $(b',F_{L_{k+1}})$). 

Summarizing the above we can write the one-step cost, when the state 
at stage $k$ is $(b,F_\ell)$, as
\begin{eqnarray*}
g_k\Big((b,F_\ell),a_k\Big)&=&
\left\{\begin{array}{cl}
-\eta b & \mbox{ if } a_k=\st\\
U_{k+1} & \mbox{ if } a_k=\ct\\
\eta\delta & \mbox{ if } a_k=\pb.                                 
\end{array}\right.
\end{eqnarray*}
The next state, $X'$, is given by
\begin{eqnarray*}
X'&=&
\left\{\begin{array}{cl}
\textbf{\emph{t}} & \mbox{ if } a_k=\st\\
(b,F_\ell) \mbox{ or } (b,F_{L_{k+1}}) & \mbox{ if } a_k=\ct\\
\max\{b,R_\ell\} & \mbox{ if } a_k=\pb.                                 
\end{array}\right.
\end{eqnarray*}
We have used $X'$ to denote the next state instead of $X_{k+1}$ because, if $a_k=\pb$ then
the system is still at stage $k$. Only when the action is $\st$ or $\ct$ 
the system transits to the stage $k+1$.

Next, if the state at stage $k$ is $b$ (states of this form occur after probing the available 
unprobed relay; recall the above expressions when $a_k=\pb$), then
\begin{eqnarray*}
g_k(b,a_k)&=&
\left\{\begin{array}{cl}
-\eta b & \mbox{ if } a_k=\st\\
U_{k+1} & \mbox{ if } a_k=\ct,      
\end{array}\right. 
\end{eqnarray*}
and the next state is 
\begin{eqnarray*}
X_{k+1}&=&
\left\{\begin{array}{cl}
\textbf{\emph{t}} & \mbox{ if } a_k=\st\\
(b,F_{L_{k+1}}) & \mbox{ if } a_k=\ct.      
\end{array}\right. 
\end{eqnarray*}
The action to probe is not available whenever the state is $b$. 

At the last stage $N$, action $\ct$ is not available, so that 
\begin{eqnarray*}
g_N(b,F_\ell)&=&
\left\{\begin{array}{cl}
-\eta b & \mbox{ if } a_k=\st \\
\eta\delta & \mbox{ if } a_k=\pb,
\end{array}\right.
\end{eqnarray*}
with the system entering $\ts$ if $a_k=\st$, otherwise (i.e., if $a_k=\pb$) the state transits to $\max\{b,R_k\}$.
Finally, $g_N(b)=-\eta b$. 
Note that for a policy $\pi$, 
the expected sum of all the one-step costs
starting from stage $1$, plus the average waiting time for the first relay, 
$\mathbb{E}[U_1]=\tau$,\footnote{Since invariably a relay has to be chosen, every policy has to wait for at least the first 
relay to wake-up, at which instant the decision process begins. Thus, $U_1$ need not be accounted 
for in the total cost incurred by any policy.} 
will equal the total cost in (\ref{unconstrained_equn}).

\subsection{Cost-to-go Functions and the Bellman Equation}
Let $J_k$, $k=1,2,\cdots,N$, represent the optimal cost-to-go function at stage $k$.
Thus, $J_k(b)$ and $J_k(b,F_\ell)$ denote the cost-to-go, depending on 
whether there is, or is not an unprobed relay.
For the last stage, $N$, we have, $J_N(b)=-\eta b$, using which we obtain,
\begin{eqnarray}
\label{cost_to_go_stageN_statebF_equn}
J_N(b,F_\ell)&=&\min\Big\{-\eta b, \eta\delta + 
\mathbb{E}_\ell\Big[J_N(\max\{b,R_\ell\})\Big]\Big\}\nonumber\\
&=& \min\Big\{-\eta b, \eta\delta-\eta\mathbb{E}_\ell\Big[\max\{b,R_\ell\}\Big]\Big\},
\end{eqnarray}
where $\mathbb{E}_\ell[\cdot]$ denotes the expectation {with respect to} 
(w.r.t.) $R_\ell$ whose distribution is $F_\ell$. 
The first term in the $\min$-expression above is the cost of stopping
and the second term is the expected cost of probing and then stopping 
(recall that action $\ct$ is not available at the last stage $N$).
Next, for stages $k=1,2,\cdots,N-1$, denoting the expectation w.r.t.\
the distribution, $L$, of the location, $L_{k+1}$, of the next relay
by $\mathbb{E}_L[\cdot]$, we have
\begin{eqnarray}
\label{cost_to_go_stagek_stateb_equn}
J_k(b)&=&\min\Big\{-\eta b, \tau+\mathbb{E}_{L}\Big[J_{k+1}(b,F_{L_{k+1}})\Big]\Big\},
\end{eqnarray}
and
\begin{eqnarray}
\label{cost_to_go_stagek_statebF_equn}
J_k(b,F_\ell)
&=&\min\Big\{-\eta b, \eta\delta + \mathbb{E}_\ell\Big[J_k(\max\{b,R_\ell\})\Big], \nonumber\\
&&\hspace{-1cm} \tau + \mathbb{E}_{L}\Big[\min\{J_{k+1}(b,F_\ell),
J_{k+1}(b,F_{L_{k+1}})\}\Big]\Big\}.
\end{eqnarray}
The first term in both the min-expressions above is the cost of stopping. 
The middle term in (\ref{cost_to_go_stagek_statebF_equn}) is the expected cost of probing, with
$\eta\delta$ being the one-step cost and the remaining term being the future cost. 
The last term in both expressions is the expected cost of continuing, with $\tau$ 
representing the mean waiting time until the next relay wakes up.
The future cost-to-go in the last term of (\ref{cost_to_go_stagek_statebF_equn}) 
can be understood as follows.
When the state at stage $k=1,2,\cdots,N-1$ is $(b,F_\ell)$ and, 
if $\mathscr{F}$ decides to {continue}, then the reward distribution
of the next relay is $F_{L_{k+1}}$. Now, given the distributions $F_\ell$ and 
$F_{L_{k+1}}$, if $\mathscr{F}$ is asked to retain one of them, then it is optimal to go 
with the distribution that fetches a lower cost-to-go from
stage $k+1$ onwards, i.e., it is optimal to retain $F_\ell$ if 
$J_{k+1}(b,F_\ell)\le J_{k+1}(b,F_{L_{k+1}})$, otherwise retain $F_{L_{k+1}}$.\footnote{Formally 
one has to introduce an intermediate state of the form 
$(b,F_\ell,F_{L_{k+1}})$ at stage $k+1$
where the only actions available are, choose $F_\ell$ or $F_{L_{k+1}}$. 
Then $J_{k+1}(b,F_\ell,F_{L_{k+1}})=\min\{J_{k+1}(b,F_\ell),J_{k+1}(b,F_{L_{k+1}})\}$, which, for 
simplicity, we are directly using in (\ref{cost_to_go_stagek_statebF_equn}).}
Later in this section we will show that, given two distributions, $F_\ell$ and $F_u$, 
if $F_\ell$ is \emph{stochastically greater than}  $F_u$ \cite{stoyan83comparison-methods-queues} 
then $J_{k+1}(b,F_\ell)\le J_{k+1}(b,F_u)$  (see Lemma~\ref{cost_to_go_ordering_lemma}-(i))  
so that it is optimal to retain the stochastically greater distribution.

First, for simplicity let us introduce the following notation. For $k=1,2,\cdots,N-1$, 
let $C_k$ represent the 
cost of continuing:
\begin{eqnarray}
\label{continuing_cost_b_equn}
C_k(b)&=&\tau+\mathbb{E}_{L}\Big[J_{k+1}(b,F_{L_{k+1}})\Big]
\end{eqnarray}
\begin{eqnarray}
\label{continuing_cost_bF_equn}
C_k(b,F_\ell)
= \tau + \mathbb{E}_{L}\Big[\min\{J_{k+1}(b,F_\ell),J_{k+1}(b,F_{L_{k+1}})\}\Big].
\end{eqnarray}
For $k=1,2,\cdots,N$, the cost of probing, $P_k$, is given by
\begin{eqnarray}
\label{probing_cost_bF_equn}
P_k(b,F_\ell)&=&\eta\delta + \mathbb{E}_\ell\Big[J_k(\max\{b,R_\ell\})\Big].
\end{eqnarray}
From (\ref{continuing_cost_b_equn}) and (\ref{continuing_cost_bF_equn}) it is 
immediately clear that $C_k(b,F_\ell)\le C_k(b)$
for any $F_\ell$ ($\ell\in\mathcal{L}$). This inequality should be intuitive as well, since
$\mathscr{F}$ can expect to accrue a better cost if, in addition to a probed 
relay, it also possesses an unprobed relay.
It will be useful to note this inequality as a lemma.
\begin{lemma}
\label{continuing_cost_corollary}
For $k=1,2,\cdots,N-1$ and any $(b,F_\ell)$ we have $C_k(b,F_\ell)\le C_k(b)$. 
\end{lemma}
\begin{IEEEproof}
As discussed just before the Lemma statement, the inequality follows 
easily from the expressions of these costs;
recall (\ref{continuing_cost_b_equn}) and (\ref{continuing_cost_bF_equn}).
\end{IEEEproof}

Finally, using the above cost notation, the cost-to-go 
functions in (\ref{cost_to_go_stagek_stateb_equn}) 
and (\ref{cost_to_go_stagek_statebF_equn}) 
can be written as, for $k=1,2,\cdots,N-1$,
\begin{eqnarray}
\label{Jk_b_equn}
J_k(b)&=&\min\Big\{-\eta b, C_k(b)\Big\}\\
\label{Jk_bF_equn}
J_k(b,F_\ell)&=&\min\Big\{-\eta b,  P_k(b,F_\ell), C_k(b,F_\ell)\Big\}.
\end{eqnarray}

\subsection{Ordering Results for the Cost-to-go Functions}
We will examine how the cost-to-go functions $J_k(b)$ and $J_k(b,F_\ell)$ behave as functions of
 $F_\ell$ and the stage index $k$. 
We will first require the definition of stochastic ordering.
\begin{definition}[Stochastic Ordering]
\label{stochastic_ordering_definition}
Given two distributions $F_\ell$ and $F_u$, $F_\ell$ is stochastically greater than
$F_u$, denoted as $F_\ell\ge_{st}F_u$, if $1-F_\ell(r)\ge 1-F_u(r)$, for all $r$. Equivalently 
\cite{stoyan83comparison-methods-queues}, $F_\ell\ge_{st}F_u$ if and only 
if for every non-decreasing function $f:\Re\rightarrow\Re$,
$\mathbb{E}_\ell[f(R_\ell)]\ge\mathbb{E}_u[f(R_u)]$ where 
the distributions of $R_\ell$ and $R_u$ are $F_\ell$ and $F_u$, respectively.
\hfill $\blacksquare$
\end{definition}

Now, consider two relays at locations $\ell$ and $u$. If the corresponding
reward distributions, $F_\ell$ and $F_u$, are such that 
$F_\ell\ge_{st} F_u$ then $\mathscr{F}$ can expect that probing the relay at $\ell$
would yield a better reward value than the relay at $u$. Thus, $\mathscr{F}$ would prefer
the stochastically greater reward distribution $F_\ell$, over $F_u$. 
Extending this observation, it is reasonable to expect that 
$\mathscr{F}$ can accrue lower expected costs (total, continuing and probing costs) 
if the unprobed reward distribution available at stage $k$ is $F_\ell$
than if it is $F_u$. We will formally prove this result next. Also, we will show that 
the expected cost at stage $k$ is less than that at stage $k+1$, i.e.,
$J_k(x)\le J_{k+1}(x)$ for any state $x$. This again should be intuitive because,
starting from stage $k$, $\mathscr{F}$ has the option to observe an additional relay 
than if it were to start from stage $k+1$. With more resource available, and 
with these being i.i.d., $\mathscr{F}$ should achieve a better cost. 
We will state these two results in the following lemma.

\begin{lemma}
\label{cost_to_go_ordering_lemma}
% \begin{enumerate}
% \item[(i)]  For $k=1,2,\cdots,N$, if $F_\ell\ge_{st}F_u$ then $J_{k}(b,F_\ell)\le J_{k}(b,F_u)$.
% \item[(ii)]  For $k=1,2,\cdots,N-1$, $J_k(b)\le J_{k+1}(b)$ and 
% $J_k(b,F_\ell)\le J_{k+1}(b,F_\ell)$.
% \end{enumerate}
\begin{enumerate}
\item[(i)] For $k=1,2,\cdots,N-1$, if $F_\ell\ge_{st}F_u$ then $C_k(b,F_\ell)\le C_k(b,F_u)$, 
(and including $k=N$) $P_k(b,F_\ell)\le P_k(b,F_u)$ and $J_{k}(b,F_\ell)\le J_{k}(b,F_u)$.
\item[(ii)] For $k=1,2,\cdots,N-2$, $C_k(b)\le C_{k+1}(b)$ and $C_k(b,F_\ell)\le C_{k+1}(b,F_\ell)$, 
(and including $k=N-1$) $P_k(b,F_\ell)\le P_{k+1}(b,F_\ell)$ and $J_k(b,F_\ell)\le J_{k+1}(b,F_\ell)$.
\end{enumerate}
\end{lemma}
\begin{IEEEproof}
To prove (i) we first show that the various costs are non-increasing functions of $b$.
We then complete the proof using the definition of stochastic ordering 
(Definition~\ref{stochastic_ordering_definition}).
Part~(ii) follows from induction.   
Detail proofs are available in Appendix~\ref{cost_to_go_ordering_lemma_appendix}.
\end{IEEEproof}

% The following lemma is an immediate consequence\footnote{Rather these 
% were already showed while proving 
% Lemma~\ref{cost_to_go_ordering_lemma}} of 
% Lemma~\ref{cost_to_go_ordering_lemma}.
% \begin{lemma}
% \phantomsection
% \label{cost_ordering_k_corollary}
% \begin{enumerate}
% % \item[(i)] For $k=1,2,\cdots,N-1$, we have $C_k(b,F_\ell)\le C_k(b)$.
% \item[(i)] For $k=1,2,\cdots,N-1$, if $F_\ell\ge_{st}F_u$ then $C_k(b,F_\ell)\le C_k(b,F_u)$ and (including $k=N$)
% $P_k(b,F_\ell)\le P_k(b,F_u)$.
% \item[(ii)] For $k=1,2,\cdots,N-2$, $C_k(b)\le C_{k+1}(b)$ and $C_k(b,F_\ell)\le C_{k+1}(b,F_\ell)$, and
% 	   (including $k=N-1$) $P_k(b,F_\ell)\le P_{k+1}(b,F_\ell)$. 
% \end{enumerate}
% \end{lemma}
% \begin{IEEEproof}
% Recall the expressions of all the cost functions from (\ref{continuing_cost_b_equn}), 
% (\ref{continuing_cost_bF_equn}) and (\ref{probing_cost_bF_equn}). Then, both parts of the 
% lemma follow from the respective parts of Lemma~\ref{cost_to_go_ordering_lemma}.
% \end{IEEEproof}

\section{Restricted Class $\overline{\Pi}$: Structural Results}
\label{structural_results_section}
We begin by defining, at stage $k=1,2,\cdots,N-1$, the \emph{stopping set} $\mathcal{S}_k$ as
\begin{eqnarray}
\label{optimal_stopping_equn}
\mathcal{S}_k
&=& \Big\{b: -\eta b\le C_k(b)\Big\}.
\end{eqnarray}
From (\ref{Jk_b_equn}) it follows that the stopping set $\mathcal{S}_k$ is the set 
of all states $b$ (states of this form are obtained after probing at stage $k$) 
where it is better to {stop} than to {continue}.

Similarly, for a given
distribution $F_\ell$ we define the {stopping set} $\mathcal{S}_k^\ell$ as, for $k=1,2,\cdots,N-1$,
\begin{eqnarray}
\label{optimal_stopping_l_equn}
\mathcal{S}_k^\ell
&=&\Big\{b:-\eta b\le \min\{P_k(b,F_\ell), C_k(b,F_\ell)\} \Big\}.
\end{eqnarray}
Using (\ref{Jk_bF_equn}) the set $\mathcal{S}_{k}^\ell$
has to be interpreted as, for a given distribution $F_\ell$, the set of $b$ such that whenever the 
state at stage $k$ is $(b,F_\ell)$
it is better to {stop} than to either {probe} or {continue}. 
Note that when $b=-\infty$ it is never optimal to stop; hence, both these stopping sets
are subsets of $[0,\overline{r}]$.
Finally, stopping sets can also be
defined for $k=N$ as, $\mathcal{S}_N=[0,\overline{r}]$ (since, at the last stage $N$,
for any $b$ the only action available is to stop), and
\begin{eqnarray}
\label{optimal_stopping_lN_equn}
\mathcal{S}_N^\ell
&=& \Big\{b:-\eta b\le P_k(b,F_\ell) \Big\}.
\end{eqnarray}

The following set inclusion properties easily follow from the definition of these sets 
and the properties of the cost functions in Lemma~\ref{continuing_cost_corollary} and
Lemma~\ref{cost_to_go_ordering_lemma}.
\begin{lemma}
\label{sets_trivial_ordering_corollary}
\verb11
\begin{enumerate}
\item[(i)] For $k=1,2,\cdots,N$ and for any $F_\ell$ we have
$\mathcal{S}_k^\ell\subseteq\mathcal{S}_k$.
\item[(ii)] For $k=1,2,\cdots,N$, if $F_\ell\ge_{st}F_u$ then
$\mathcal{S}_k^\ell\subseteq\mathcal{S}_k^u$.
\item[(iii)] For $k=1,2,\cdots,N-1$ we have
$\mathcal{S}_k\subseteq\mathcal{S}_{k+1}$,
and for any $F_\ell$, $\mathcal{S}_k^\ell\subseteq\mathcal{S}_{k+1}^\ell$.
\end{enumerate}
\end{lemma}
\begin{IEEEproof}
Recall the definition of the stopping sets from  (\ref{optimal_stopping_equn}) and
(\ref{optimal_stopping_l_equn}). Part~(i) follows from Lemma~\ref{continuing_cost_corollary}.
Parts~(ii) and (iii) are due to 
Parts~(i) and (ii) of Lemma~\ref{cost_to_go_ordering_lemma}, respectively.
\end{IEEEproof}

\emph{Discussion:} The above results can be understood as follows. 
Whenever an unprobed relay (say with reward distribution $F_\ell$) is available, $\mathscr{F}$
can be more stringent about the best reward values, $b$, for which it chooses to stop.
This is because, $\mathscr{F}$ can now additionally choose to 
probe $F_\ell$ possibly yielding a better reward than $b$. 
Thus, unless the best reward $b$ is already good (so that there is no gain in probing $F_\ell$), 
$\mathscr{F}$ will not choose to stop. Hence, we have $\mathcal{S}_k^\ell\subseteq\mathcal{S}_k$.
Next, if $F_\ell\ge_{st}F_u$ then since probing $F_\ell$ has a higher chance of yielding a better
reward, the stopping condition is more stringent if the reward 
distribution of the available unprobed relay 
is $F_\ell$ than $F_u$. Hence, the corresponding stopping 
sets are ordered as in Part~(ii) of the above lemma, i.e., 
$\mathcal{S}_k^\ell\subseteq\mathcal{S}_k^u$. Finally, whenever there are more stages to-go,
$\mathscr{F}$ can be more cautious about stopping since it has the 
option to observe more relays. This suggests that  $\mathcal{S}_k\subseteq\mathcal{S}_{k+1}$ 
and $\mathcal{S}_k^\ell\subseteq\mathcal{S}_{k+1}^\ell$.

From our above discussion, the phrase ``$\mathscr{F}$ being more stringent about stopping,''
suggests that it may be better to stop for larger values of $b$.
Equivalently, this would mean that the stopping sets are characterized by \emph{thresholds},
beyond which it is optimal to stop. This is exactly our first main result
(Theorem~\ref{threshold_nature_lemma}). 
Later we will prove a more interesting result 
(Theorem~\ref{stopping_sets_equal_theorem} and \ref{stopping_sets_l_equal_theorem})
where we show that the stopping sets are \emph{stage independent}, 
i.e., $\mathcal{S}_k=\mathcal{S}_{k+1}$ and $\mathcal{S}_k^\ell=\mathcal{S}_{k+1}^\ell$. 
In the following sub-sections we will work the details of these two results.

\vspace{-4mm}
\subsection{Stopping Sets: Threshold Property}
To prove the threshold structure of the stopping sets the following key lemma is required where we
show that the increments in the various costs are bounded by the increments in the cost of stopping.
\begin{lemma}
\label{costs_bounded_lemma}
For $k=1,2,\cdots,N-1$ (for Part~(ii), $k=1,2,\cdots,N$), for any $F_\ell$, 
and for $b_2>b_1$ we have
\begin{enumerate}
\item[(i)] $C_k(b_1)-C_k(b_2)\le\eta(b_2-b_1)$,
\item[(ii)] $P_k(b_1,F_\ell)-P_k(b_2,F_\ell)\le\eta(b_2-b_1)$ 
\item[(iii)] $C_k(b_1,F_\ell)-C_k(b_2,F_\ell)\le\eta(b_2-b_1)$. 
\end{enumerate}
\end{lemma}
\begin{IEEEproof}
Available in Appendix~\ref{threshold_nature_lemma_appendix}.
\end{IEEEproof}

\begin{theorem}
\label{threshold_nature_lemma}
For $k=1,2,\cdots,N$ and for $b_2>b_1$,
\begin{enumerate}
 \item[(i)] If $b_1\in\mathcal{S}_k$ then $b_2\in\mathcal{S}_k$.
 \item[(ii)] For any $F_\ell$, if $b_1\in\mathcal{S}_k^\ell$ then $b_2\in\mathcal{S}_k^\ell$.
\end{enumerate}
\end{theorem}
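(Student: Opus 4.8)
The plan is to derive this theorem as a direct consequence of the ``slope bounds'' in Lemma~\ref{costs_bounded_lemma}, which is really where all the work resides. The guiding intuition is that the stopping cost $-\eta b$ decreases in $b$ at exactly rate $\eta$, whereas Lemma~\ref{costs_bounded_lemma} guarantees that the continuing cost $cc_k(b)$ (and similarly $cp_k(b,F_\ell)$ and $cc_k(b,F_\ell)$) decreases at rate \emph{at most} $\eta$. Hence once the stopping cost has fallen to or below the continuing cost at some $b_1$, it must remain below it for every larger $b_2$, since the stopping cost can only fall faster. I would make this precise by a short rearrangement for each part.

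For part (i), I would start from the hypothesis $b_1\in\mathcal{S}_k$, which by~(\ref{optimal_stopping_equn}) means $-\eta b_1\le cc_k(b_1)$, i.e.\ $cc_k(b_1)+\eta b_1\ge 0$. Invoking Lemma~\ref{costs_bounded_lemma}-(i) gives $cc_k(b_2)\ge cc_k(b_1)-\eta(b_2-b_1)=\big(cc_k(b_1)+\eta b_1\big)-\eta b_2\ge -\eta b_2$, which is exactly the condition defining $b_2\in\mathcal{S}_k$.

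For part (ii), I would note that by the definition~(\ref{optimal_stopping_l_equn}) the hypothesis $b_1\in\mathcal{S}_k^\ell$ splits into the two inequalities $-\eta b_1\le cp_k(b_1,F_\ell)$ and $-\eta b_1\le cc_k(b_1,F_\ell)$. Applying the identical rearrangement as in part (i), now using Lemma~\ref{costs_bounded_lemma}-(ii) on the first and Lemma~\ref{costs_bounded_lemma}-(iii) on the second, yields $-\eta b_2\le cp_k(b_2,F_\ell)$ and $-\eta b_2\le cc_k(b_2,F_\ell)$ respectively. Since $-\eta b_2$ lies below each right-hand side, it lies below their minimum, giving $-\eta b_2\le\min\{cp_k(b_2,F_\ell),cc_k(b_2,F_\ell)\}$, i.e.\ $b_2\in\mathcal{S}_k^\ell$.

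Since both parts collapse to the same one-line argument, the only genuine obstacle is establishing Lemma~\ref{costs_bounded_lemma} itself (deferred to the appendix). There the key observation should be that each of these cost functions is built from expectations of $J_k(\max\{b,R\})$-type terms, and $\max\{b_2,R\}-\max\{b_1,R\}\le b_2-b_1$; combined with the fact (Lemma~\ref{cost_to_go_ordering_lemma}-(i)) that the cost-to-go functions are decreasing in $b$, this bounds the per-unit decrease of each cost by $\eta$, which is precisely the slope of the stopping cost $-\eta b$.
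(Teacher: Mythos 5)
Your argument is correct and is essentially identical to the paper's own proof: both parts follow from the same one-line rearrangement of the slope bounds in Lemma~\ref{costs_bounded_lemma}, applied to $cc_k(b)$ for part (i) and to $cp_k(b,F_\ell)$ and $cc_k(b,F_\ell)$ separately for part (ii). Your closing sketch of why Lemma~\ref{costs_bounded_lemma} holds also matches the paper's appendix argument in spirit.
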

\begin{IEEEproof} 
Since $\mathcal{S}_N=[0,\overline{r}]$, Part~(i) trivially holds for $k=N$. 
Next, for $k=1,2,\cdots,N-1$,
using Lemma~\ref{costs_bounded_lemma}-(i) we can write,
\begin{eqnarray*}
-\eta b_2 \le -\eta b_1 - C_k(b_1) + C_k(b_2).
\end{eqnarray*}
Since $b_1\in\mathcal{S}_k$, from (\ref{optimal_stopping_equn}) 
we know that $-\eta b_1\le C_k(b_1)$,
using which in the above expression we obtain $-\eta b_2\le C_k(b_2)$ 
implying that $b_2\in\mathcal{S}_k$.
\emph{Part~(ii)} can be similarly completed using 
Parts~(ii) and (iii) of Lemma~\ref{costs_bounded_lemma}.
\end{IEEEproof}

\begin{figure}[h]
\centering
\includegraphics[scale=0.8]{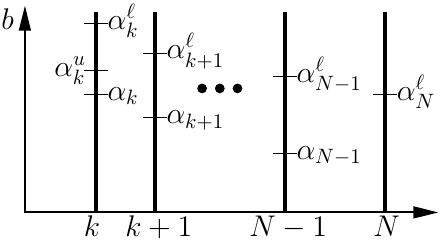}
\caption{\label{threshold_property_figure}
Illustration of the threshold property: the vertical lines are the reward axis, with 
each line corresponding to a different stage. The stopping sets are represented 
by marking their thresholds on the respective vertical lines.}
\vspace{-2mm}
\end{figure}

\emph{{Discussion:}} Thus, the stopping sets $\mathcal{S}_k$ and $\mathcal{S}_k^\ell$ can be 
characterized in terms of lower bounds $\alpha_k$ and $\alpha_k^\ell$,
respectively, as illustrated in Fig.~\ref{threshold_property_figure} (see the vertical line
corresponding to the stage index $k$). 
Also shown in Fig.~\ref{threshold_property_figure} is the threshold, 
$\alpha_k^u$, corresponding to a distribution $F_u\le_{st}F_\ell$. From 
Lemma~\ref{sets_trivial_ordering_corollary}-(i) and \ref{sets_trivial_ordering_corollary}-(ii) 
it follows that these thresholds are ordered, $\alpha_k\le\alpha_k^u\le\alpha_k^\ell$.
Further, in Fig.~\ref{threshold_property_figure} we have depicted these thresholds to be 
decreasing with the stage index $k$ (vertical lines from left to right); this is due to
Lemma~\ref{sets_trivial_ordering_corollary}-(iii) from where we know that the stopping sets
are increasing with $k$. 
Our main result in the next section 
(Theorem~\ref{stopping_sets_equal_theorem} and \ref{stopping_sets_l_equal_theorem}) is 
to show that these thresholds are, in fact, equal (i.e., $\alpha_k=\alpha_{k+1}$ 
and $\alpha_k^\ell=\alpha_{k+1}^\ell$). 
Finally, note that in 
Fig.~\ref{threshold_property_figure} we have
not shown the threshold $\alpha_N$ corresponding to the stopping set $\mathcal{S}_N$; this is
simply because $\alpha_N=0$ (since $\mathcal{S}_N=[0,\overline{r}]$).

% \begin{figure}[t]
% \centering
% \subfigure[]{
% \includegraphics[scale=0.28]{../figures/illustration1}
%         \label{illustration1_figure}
% }
% \subfigure[]{
% \includegraphics[scale=0.28]{../figures/illustration2}
%         \label{illustration2_figure}
% }
% \subfigure[]{
% \includegraphics[scale=0.28]{../figures/illustration3}
%         \label{illustration3_figure} 
% }
% 
% \caption{Illustration of the threshold property. 
% \subref{illustration1_figure} $\mathcal{S}_k$ is characterized by the threshold $\alpha_k$.
% \subref{illustration2_figure} and \subref{illustration3_figure} depicts the stopping sets 
% corresponding to the distributions $F_\ell$ and $F_u$, respectively.
% These distributions are such that $F_\ell\ge_{st}F_u$. $\zeta_k^\ell$ and $\zeta_k^u$ 
% are the thresholds characterizing the probing sets 
% $\mathcal{P}_k^\ell$ and $\mathcal{P}_k^u$, respectively 
% (see Section~\ref{probing_sets_section}).}
% \vspace{-6mm}
% \end{figure}

\subsection{Stopping Sets: Stage Independence Property}
From Lemma~\ref{sets_trivial_ordering_corollary}-(iii) we already know that 
$\mathcal{S}_k\subseteq\mathcal{S}_{k+1}$, and $\mathcal{S}_k^\ell\subseteq\mathcal{S}_{k+1}^\ell$.
In this section we will prove the inclusion in the other direction, thus leading to the result that 
the stopping sets are identical across the stages. We will begin by defining the
{\textbf{\emph{stobing}} (\textbf{\emph{sto}}pping-or-pro\textbf{\emph{bing}}) set} 
$\mathcal{Q}_k^\ell$ as, for $k=1,2,\cdots,N-1$,
\begin{eqnarray}
\label{optimal_stopping_probing_equn}
\mathcal{Q}_{k}^\ell
&=& \Big\{b: \min\{-\eta b, P_k(b,F_\ell)\} \le C_k(b,F_\ell)\Big\}.
\end{eqnarray} 
From (\ref{Jk_bF_equn}) it follows that $\mathcal{Q}_{k}^\ell$ is,
for a given distribution $F_\ell$, the set of all $b$ such that whenever the state at stage $k$
is $(b,F_\ell)$
it is better to either {stop} or {probe} than to {continue}.
From the definition of the sets $\mathcal{S}_k^\ell$ and $\mathcal{Q}_{k}^\ell$ (in 
(\ref{optimal_stopping_l_equn}) and (\ref{optimal_stopping_probing_equn}), 
respectively) it immediately
follows that $\mathcal{S}_k^\ell\subseteq\mathcal{Q}_{k}^\ell$. Also from 
Lemma~\ref{sets_trivial_ordering_corollary}-(i) we already know that
$\mathcal{S}_k^\ell\subseteq\mathcal{S}_k$.
However, it is not immediately clear how the sets $\mathcal{Q}_k^\ell$ and $\mathcal{S}_k$
are ordered. We will show that if $\mathcal{F}=\{F_\ell:\ell\in\mathcal{L}\}$ is \emph{totally 
stochastically ordered} (to be defined next) then $\mathcal{S}_k\subseteq\mathcal{Q}_k^\ell$ 
(Lemma~\ref{all_distributions_corollary}). This result is essential for proving our main theorems.

\begin{definition}[Total Stochastic Ordering]
\label{total_stochastic_ordering_definition}
$\mathcal{F}$ is said 
to be \emph{totally stochastically ordered} if any two distributions from 
$\mathcal{F}$ are stochastically ordered.
Formally, for any $F_\ell,F_u\in\mathcal{F}$ either $F_\ell\ge_{st}F_u$ or $F_u\ge_{st}F_\ell$.
Further, if there exists a {distribution} $F_m\in\mathcal{F}$
such that for every $F_\ell\in\mathcal{F}$ we have $F_\ell\ge_{st}F_m$ then
we say that $\mathcal{F}$
is \emph{totally stochastically ordered with a minimum distribution}.
\hfill $\blacksquare$
\end{definition}

\begin{lemma}
\label{F_total_order_lemma}
The set of reward distributions $\mathcal{F}$ in (\ref{distribution_set}), is 
totally stochastically ordered with a minimum distribution. 
\end{lemma}
\begin{IEEEproof}
The channel gains, $\{G_\ell:\ell\in\mathcal{L}\}$, being identically distributed will be
essential to show that $\mathcal{F}$ is totally stochastically ordered. 
Existence of a minimum distribution will require the assumption we had
made earlier (in Section~\ref{local_forwarding_problem_section}) that
$\mathcal{L}$ is compact (closed and bounded). The
complete proof is available in Appendix~\ref{F_total_order_lemma_appendix}.
\end{IEEEproof}

{\emph{Remark:}}
Our subsequent results are not simply limited to the
$\mathcal{F}$ in (\ref{distribution_set}) which is the distribution set
arising from the particular reward structure, $R_\ell$, we had assumed in (\ref{reward_equn}).
One can consider any collection of bounded reward random variables $\{R_\ell\}$, 
such that the corresponding $\mathcal{F}$ is 
totally stochastically ordered with a minimum distribution, still all the 
subsequent results will hold.

Before proceeding to our main theorems, we need the following results.
\begin{lemma}
\label{equal_costs_lemma}
Suppose $\mathcal{S}_k\subseteq\mathcal{Q}_k^u$, for some $F_u$, and some $k=1,2,\cdots,N-1$. 
Then for every $b\in\mathcal{S}_k$ we have $J_k(b,F_u)=J_N(b,F_u)$.
\end{lemma}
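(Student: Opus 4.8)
The plan is to show that under the two hypotheses --- $b\in\mathcal{S}_k$ and $\mathcal{S}_k\subseteq\mathcal{Q}_k^\ell$ --- the cost-to-go $J_k(b,F_\ell)$ collapses to exactly the expression defining $J_N(b,F_\ell)$ in (\ref{cost_to_go_stageN_statebF_equn}). I would fix an arbitrary $b\in\mathcal{S}_k$ and work from the representation $J_k(b,F_\ell)=\min\{-\eta b, cp_k(b,F_\ell), cc_k(b,F_\ell)\}$ in (\ref{Jk_bF_equn}), first eliminating the \textsf{continue} term and then rewriting the \textsf{probe} term into its terminal-stage form.

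First I would use the inclusion $\mathcal{S}_k\subseteq\mathcal{Q}_k^\ell$: since $b\in\mathcal{S}_k$, we also have $b\in\mathcal{Q}_k^\ell$, and by the definition (\ref{optimal_stopping_probing_equn}) this means $\min\{-\eta b, cp_k(b,F_\ell)\}\le cc_k(b,F_\ell)$. Hence the continuing term never attains the minimum, and $J_k(b,F_\ell)=\min\{-\eta b, cp_k(b,F_\ell)\}$. This disposes of the \textsf{continue} action entirely.

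The key step is to evaluate $cp_k(b,F_\ell)=\eta\delta+\mathbb{E}_\ell[J_k(\max\{b,R_\ell\})]$ from (\ref{probing_cost_bF_equn}). Here I would invoke the threshold property of the stopping set (Theorem~\ref{threshold_nature_lemma}-(i)): for every realization we have $\max\{b,R_\ell\}\ge b$, and since $b\in\mathcal{S}_k$ the theorem gives $\max\{b,R_\ell\}\in\mathcal{S}_k$. By (\ref{Jk_b_equn}) together with the definition (\ref{optimal_stopping_equn}) of $\mathcal{S}_k$, it follows that $J_k(\max\{b,R_\ell\})=-\eta\max\{b,R_\ell\}$ pointwise. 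Taking expectation yields $cp_k(b,F_\ell)=\eta\delta-\eta\mathbb{E}_\ell[\max\{b,R_\ell\}]$, which is precisely the stage-$N$ probing term. Substituting back, $J_k(b,F_\ell)=\min\{-\eta b,\ \eta\delta-\eta\mathbb{E}_\ell[\max\{b,R_\ell\}]\}$, and comparing with (\ref{cost_to_go_stageN_statebF_equn}) this equals $J_N(b,F_\ell)$, as claimed.

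The crux --- and essentially the only nontrivial observation --- is this collapse of the probing cost: the threshold structure of $\mathcal{S}_k$ guarantees that once $b$ is large enough to make stopping optimal in the probed state, so is any larger post-probe value $\max\{b,R_\ell\}$, so the continuation buried inside $J_k(\max\{b,R_\ell\})$ never reactivates and $J_k$ is replaced by its terminal form $-\eta(\cdot)$. I would therefore make sure Theorem~\ref{threshold_nature_lemma}-(i) is in place before citing it, since everything else is a direct unwinding of the definitions (\ref{Jk_b_equn})--(\ref{probing_cost_bF_equn}) and the set definitions (\ref{optimal_stopping_equn})--(\ref{optimal_stopping_probing_equn}), with no genuine obstacle remaining.
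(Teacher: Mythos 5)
Your proof is correct and follows essentially the same route as the paper's: drop the \textsf{continue} term using $b\in\mathcal{Q}_k^\ell$, then use the threshold property (Theorem~\ref{threshold_nature_lemma}) to replace $J_k(\max\{b,R_\ell\})$ by $-\eta\max\{b,R_\ell\}$, recovering the stage-$N$ expression. The only difference is that you spell out the elimination of the continuing cost via the definition of $\mathcal{Q}_k^\ell$ slightly more explicitly than the paper does.
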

\begin{IEEEproof}
Available in Appendix~\ref{equal_costs_lemma_appendix}. 
\end{IEEEproof}

Next we show that the hypothesis in the above lemma indeed holds for 
every $F_\ell\in\mathcal{F}$.
\begin{lemma}
\label{all_distributions_corollary}
For $k=1,2,\cdots,N-1$ and for any $F_\ell\in\mathcal{F}$ we have
$\mathcal{S}_k\subseteq\mathcal{Q}_k^\ell$.
\end{lemma}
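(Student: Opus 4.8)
The plan is to prove, by backward induction on $k$ from $k=N-1$ down to $k=1$, the statement $P(k)$: \emph{for every $F_\ell\in\mathcal{F}$, $\mathcal{S}_k\subseteq\mathcal{Q}_k^\ell$}. The engine is Lemma~\ref{equal_costs_lemma}, which converts the induction hypothesis $P(k+1)$ into a clean description of the cost-to-go at stage $k+1$. The guiding idea is that when $b\in\mathcal{S}_k$, both the \textsf{probe} branch and the \textsf{continue} branch of $J_k(b,F_\ell)$ can be ``flattened'' to the terminal cost-to-go $J_N$, after which the whole claim collapses to a one-line estimate. Throughout, Assumption~\ref{total_stochastic_ordering_definition} is in force, entering through the threshold structure of Theorem~\ref{threshold_nature_lemma} that underlies both flattening steps.

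Fix $F_\ell$ and take $b\in\mathcal{S}_k$. For the \textsf{probe} branch: since $b\in\mathcal{S}_k$, Theorem~\ref{threshold_nature_lemma}-(i) gives $\max\{b,R_\ell\}\in\mathcal{S}_k$ for every realization $R_\ell$, whence $J_k(\max\{b,R_\ell\})=-\eta\max\{b,R_\ell\}$; substituting into (\ref{probing_cost_bF_equn}) yields $cp_k(b,F_\ell)=\eta\delta-\eta\mathbb{E}_\ell[\max\{b,R_\ell\}]$, which is exactly the terminal value, so $\min\{-\eta b,\,cp_k(b,F_\ell)\}=J_N(b,F_\ell)$. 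Thus $b\in\mathcal{Q}_k^\ell$ is equivalent to $J_N(b,F_\ell)\le cc_k(b,F_\ell)$, and it remains to bound the continuation cost from below. For the \textsf{continue} branch I use Lemma~\ref{sets_trivial_ordering_corollary}-(iv), which gives $\mathcal{S}_k\subseteq\mathcal{S}_{k+1}$, so $b\in\mathcal{S}_{k+1}$. In the base case $k=N-1$ continuing already lands at stage $N$, so from (\ref{continuing_cost_bF_equn}), $cc_{N-1}(b,F_\ell)=\tau+\mathbb{E}_A[\min\{J_N(b,F_\ell),J_N(b,F_{L_N})\}]$ with no further work. In the inductive step $k\le N-2$, the hypothesis $P(k+1)$ lets me apply Lemma~\ref{equal_costs_lemma} at stage $k+1$ (for $F_\ell$ and for every possible $F_{L_{k+1}}$), giving $J_{k+1}(b,F_u)=J_N(b,F_u)$ for all $u$ at this $b\in\mathcal{S}_{k+1}$; hence in both cases $cc_k(b,F_\ell)=\tau+\mathbb{E}_A[\min\{J_N(b,F_\ell),J_N(b,F_{L_{k+1}})\}]$.

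It then suffices to prove the key inequality. Writing $a:=J_N(b,F_\ell)$, $X:=J_N(b,F_{L_{k+1}})$, and $(t)^+:=\max\{t,0\}$, and using $\min\{a,X\}=a-(a-X)^+$, the target $a\le cc_k(b,F_\ell)$ is equivalent to $\mathbb{E}_A[(a-X)^+]\le\tau$. I bound this directly: because \textsf{stop} is always available at the last stage, both $a\le-\eta b$ and $X\le-\eta b$, so monotonicity of $t\mapsto(t)^+$ gives $(a-X)^+\le(-\eta b-X)^+=-\eta b-X$. Taking $\mathbb{E}_A$ and invoking $b\in\mathcal{S}_k$ through (\ref{optimal_stopping_equn}) and (\ref{continuing_cost_b_equn}) — namely $-\eta b\le cc_k(b)=\tau+\mathbb{E}_A[J_{k+1}(b,F_{L_{k+1}})]=\tau+\mathbb{E}_A[X]$, the last equality again from the base case or from Lemma~\ref{equal_costs_lemma} — I obtain $\mathbb{E}_A[(a-X)^+]\le-\eta b-\mathbb{E}_A[X]\le\tau$, which closes the induction.

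The conceptual crux, and the step I expect to need the most care, is recognizing that on the stopping set an extra unprobed relay cannot help, so that both branches may be reduced to $J_N$: the \textsf{probe} branch via the threshold result and the \textsf{continue} branch via Lemma~\ref{equal_costs_lemma} fed by the induction hypothesis. The delicate bookkeeping is to make the induction bottom out cleanly, so that Lemma~\ref{equal_costs_lemma} is only ever applied at a stage whose hypothesis $\mathcal{S}_{k+1}\subseteq\mathcal{Q}_{k+1}^u$ is already established for \emph{all} $u$ (this is precisely why $P(k+1)$ is stated uniformly over the whole family $\mathcal{F}$); once the two branches are flattened, the remaining estimate $\mathbb{E}_A[(a-X)^+]\le\tau$ follows in one line from the defining inequality of $\mathcal{S}_k$.
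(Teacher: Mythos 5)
Your proof is correct, and it takes a genuinely different route from the paper's. The paper proves this lemma in two steps that lean heavily on Assumption~\ref{total_stochastic_ordering_definition}: it first shows (Lemma~\ref{dominated_set_lemma}) that the minimal distribution $F_m$ satisfies $\mathcal{S}_k\subseteq\mathcal{Q}_k^m$ because $cc_k(b,F_m)=cc_k(b)$, and then propagates the inclusion upward through the stochastic order (Lemma~\ref{stopping_probing_set_lemma}) via a decomposition of $cc_k(b,F_u)$ into integrals over $\mathcal{L}_\ell$, $\mathcal{L}_u\setminus\mathcal{L}_\ell$ and $\mathcal{L}_u^c$, ending in the algebraic manipulation of an inequality of the form $J_N(b,F_\ell)\le J_N(b,F_u)\le c+pJ_N(b,F_u)$. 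You instead run a single backward induction in $k$, flatten the \textsf{probe} branch to $J_N(b,F_\ell)$ via Theorem~\ref{threshold_nature_lemma}-(i) (exactly as the paper does inside Lemma~\ref{equal_costs_lemma}), flatten the \textsf{continue} branch to $\tau+\mathbb{E}_A[\min\{J_N(b,F_\ell),J_N(b,F_{L_{k+1}})\}]$ by feeding the uniform induction hypothesis into Lemma~\ref{equal_costs_lemma}, and reduce the whole claim to the one-line estimate $\mathbb{E}_A[(J_N(b,F_\ell)-J_N(b,F_{L_{k+1}}))^+]\le\tau$, which follows from $J_N(b,\cdot)\le-\eta b$ and the defining inequality $-\eta b\le cc_k(b)$ of $\mathcal{S}_k$. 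Your argument is shorter and more self-contained: it never needs the minimal distribution or the partition of $\mathcal{F}$ by stochastic dominance, and it invokes Assumption~\ref{total_stochastic_ordering_definition} only insofar as the cited threshold result depends on it, so it would survive a weakening of that assumption. What the paper's route buys in exchange is the intermediate structural facts themselves --- the identity $cc_k(b,F_m)=cc_k(b)$ and the monotone propagation of $\mathcal{S}_k\subseteq\mathcal{Q}_k^\ell$ along the stochastic order --- which illuminate how the probing decision varies with the relay's reward distribution; your reduction to $\mathbb{E}_A[(a-X)^+]\le\tau$ instead makes transparent exactly how the waiting cost $\tau$ controls the comparison between stopping-after-probing and continuing. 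The only point needing care, which you handle correctly, is that the base case $k=N-1$ must not invoke the (nonexistent) hypothesis $P(N)$, and that Lemma~\ref{equal_costs_lemma} must be applied with the inclusion $\mathcal{S}_{k+1}\subseteq\mathcal{Q}_{k+1}^u$ available for \emph{every} $u$ in the support of $A$, which is why the induction statement must be uniform over $\mathcal{F}$.
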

\begin{IEEEproof}
The proof involves two steps:

1)  First we show that 
if there exists an $F_u$ such that, for $k=1,2,\cdots,N-1$, $\mathcal{S}_k\subseteq\mathcal{Q}_k^u$ 
(thus satisfying the hypothesis in Lemma~\ref{equal_costs_lemma}), then for every $F_\ell\ge_{st} F_u$  
we have $\mathcal{S}_k\subseteq\mathcal{Q}_k^\ell$. Lemma~\ref{equal_costs_lemma} and the 
total stochastic ordering of $\mathcal{F}$ are required for this part.

2) Next we show that 
a minimum distribution $F_m$ satisfies the hypothesis in Lemma~\ref{equal_costs_lemma}, 
i.e., for every $k=1,2,\cdots,N-1$, $\mathcal{S}_k\subseteq\mathcal{Q}_k^m$.
The proof is completed by recalling that $F_\ell\ge_{st} F_m$ for every $F_\ell\in\mathcal{F}$ 
and then using in \emph{Step~1}, $F_m$
in the place of $F_u$. The existence of a minimum 
distribution $F_m$ (recall Lemma~\ref{F_total_order_lemma}) is essential here. 

Formal proofs of both steps are available in Appendix~\ref{all_distributions_corollary_appendix}.
\end{IEEEproof}

\noindent
The following are the main theorems of this section:
\begin{theorem}
\label{stopping_sets_equal_theorem}
 For $k=1,2,\cdots,N-2$, $\mathcal{S}_k=\mathcal{S}_{k+1}$.
\end{theorem}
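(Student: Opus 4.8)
The plan is to prove the reverse inclusion $\mathcal{S}_{k+1} \subseteq \mathcal{S}_k$, since the forward inclusion $\mathcal{S}_k \subseteq \mathcal{S}_{k+1}$ is already supplied by Lemma~\ref{sets_trivial_ordering_corollary}-(iv). The whole argument rests on the observation that, on the stopping set, the cost-to-go function ceases to depend on the stage index. Concretely, combining Lemma~\ref{all_distributions_corollary} (which gives $\mathcal{S}_j \subseteq \mathcal{Q}_j^\ell$ for every $F_\ell \in \mathcal{F}$) with Lemma~\ref{equal_costs_lemma} yields, for every $1 \le j \le N-1$ and every $b \in \mathcal{S}_j$, the collapse $J_j(b, F_\ell) = J_N(b, F_\ell)$ for all $\ell$. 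This stage-wise collapse is the engine of the proof.

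I would fix $k$ with $1 \le k \le N-2$ and take any $b \in \mathcal{S}_{k+1}$, then rewrite the two continuation costs using their definitions in (\ref{continuing_cost_b_equn}):
\[
cc_k(b) = \tau + \mathbb{E}_A\big[J_{k+1}(b, F_{L_{k+1}})\big], \qquad cc_{k+1}(b) = \tau + \mathbb{E}_A\big[J_{k+2}(b, F_{L_{k+1}})\big].
\]
Since $b \in \mathcal{S}_{k+1}$ and $k+1 \le N-1$, the collapse above (at stage $j = k+1$) gives $J_{k+1}(b, F_\ell) = J_N(b, F_\ell)$ for every $\ell$, so $cc_k(b) = \tau + \mathbb{E}_A[J_N(b, F_{L_{k+1}})]$. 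For the second expression, note that $b \in \mathcal{S}_{k+1} \subseteq \mathcal{S}_{k+2}$ by Lemma~\ref{sets_trivial_ordering_corollary}-(iv); applying the same collapse at stage $j = k+2$ gives $J_{k+2}(b, F_\ell) = J_N(b, F_\ell)$, whence $cc_{k+1}(b) = \tau + \mathbb{E}_A[J_N(b, F_{L_{k+1}})]$ as well. Therefore $cc_k(b) = cc_{k+1}(b)$.

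The conclusion is then immediate: because $b \in \mathcal{S}_{k+1}$ means $-\eta b \le cc_{k+1}(b)$ by (\ref{optimal_stopping_equn}), and we have just shown $cc_{k+1}(b) = cc_k(b)$, it follows that $-\eta b \le cc_k(b)$, i.e., $b \in \mathcal{S}_k$. The only point needing care is the boundary case $k = N-2$, where $k+2 = N$ lies outside the range of Lemma~\ref{equal_costs_lemma}; there the identity $J_{k+2}(b, F_\ell) = J_N(b, F_\ell)$ holds trivially because $J_N$ is already the terminal function, so the argument goes through unchanged. I expect the main conceptual obstacle to lie not in this short computation but in the stage-wise collapse itself, i.e., in the reliance on Lemma~\ref{all_distributions_corollary}, whose proof needs the total stochastic ordering of $\mathcal{F}$ (Assumption~\ref{total_stochastic_ordering_definition}) together with the minimal-distribution device; once that lemma is established, the present theorem reduces to the argument above.
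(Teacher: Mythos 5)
Your proof is correct and follows essentially the same route as the paper's: reduce to the reverse inclusion via Lemma~\ref{sets_trivial_ordering_corollary}-(iv), then use Lemma~\ref{all_distributions_corollary} together with Lemma~\ref{equal_costs_lemma} to collapse $J_{k+1}(b,F_\ell)$ and $J_{k+2}(b,F_\ell)$ to $J_N(b,F_\ell)$ on the stopping set, giving $cc_k(b)=cc_{k+1}(b)$. Your explicit handling of the boundary case $k=N-2$ (where the collapse at stage $k+2=N$ is trivial) is a small point the paper glosses over, but otherwise the arguments coincide.
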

\begin{IEEEproof}
From Lemma~\ref{sets_trivial_ordering_corollary}-(iii) we already know that
$\mathcal{S}_k\subseteq\mathcal{S}_{k+1}$. 
Here, we will show that $\mathcal{S}_k\supseteq\mathcal{S}_{k+1}$. 
Fix a $b\in\mathcal{S}_{k+1}\subseteq\mathcal{S}_{k+2}$.
From Lemma~\ref{all_distributions_corollary} we know that
$\mathcal{S}_{k+1}\subseteq\mathcal{Q}_{k+1}^\ell$
and $\mathcal{S}_{k+2}\subseteq\mathcal{Q}_{k+2}^\ell$,
for every $F_\ell$. Now, applying Lemma~\ref{equal_costs_lemma} we can write, 
$J_{k+1}(b,F_\ell)=J_{k+2}(b,F_\ell)=J_N(b,F_\ell)$.
Thus,
\begin{eqnarray*}
C_{k+1}(b)&=& \tau + \mathbb{E}_L\Big[J_{k+2}(b,F_{L_{k+2}})\Big]\nonumber\\
&=& \tau + \mathbb{E}_L\Big[J_{k+1}(b,F_{L_{k+1}})\Big]\nonumber\\
&=& C_k(b)
\end{eqnarray*}
Finally, since $b\in\mathcal{S}_{k+1}$ we  have $-\eta b\le C_{k+1}(b)= C_k(b)$
which implies that $b\in\mathcal{S}_k$.
\end{IEEEproof}

\emph{{Discussion:}} It is interesting to compare the above
result with the solution obtained for the basic model 
(i.e., $\delta=0$ case; recall the discussion on related work in
Section~\ref{local_forwarding_problem_section})
or equivalently the basic asset selling problem 
\cite[Section~4.4]{bertsekas05optimal-control-vol1}.
In \cite[Section~4.4]{bertsekas05optimal-control-vol1}, as in our
Theorem~\ref{stopping_sets_equal_theorem}
here, it is shown that similar 
stopping sets are identical across the stages; this policy is referred to as the 
\emph{one-step-look-ahead} rule since the policy, to stop if and only if
the ``cost of stopping'' is less than the ``cost of continuing for one-more step and 
then stopping,'' being optimal for  stage $N-1$, is optimal for all stages. 
The key idea there (i.e., in \cite[Section~4.4]{bertsekas05optimal-control-vol1}),
as in our Lemma~\ref{equal_costs_lemma}, is also to show that 
the cost-to-go functions, at every stage $k$, are identical for every 
state within the stopping set.
However here, to apply Lemma~\ref{equal_costs_lemma}, it was further essential 
for us to prove Lemma~\ref{all_distributions_corollary}
showing that for every $F_\ell$, $\mathcal{S}_k\subseteq\mathcal{Q}_k^\ell$.
Now, note that the result $\mathcal{S}_k\subseteq\mathcal{Q}_k^\ell$ trivially holds
for $\delta = 0$, since if  $\delta = 0$ then for any $(b,F_\ell)$ it is always optimal to probe, 
so that $\mathcal{Q}_k^\ell=[0,\overline{r}]$.
Thus, Theorem~\ref{stopping_sets_equal_theorem}, incorporating the additional case $\delta>0$,
can be considered as a generalization of the one-step-look-ahead rule 
which is optimal for the basic asset selling model.
% (i.e., the $\delta=0$ case).

\begin{theorem}
\label{stopping_sets_l_equal_theorem}
For $k=1,2,\cdots,N-1$ and  any $F_\ell$,
$\mathcal{S}_k^\ell=\mathcal{S}_{k+1}^\ell$.
\end{theorem}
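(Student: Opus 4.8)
The plan is to mirror the proof of Theorem~\ref{stopping_sets_equal_theorem}. By Lemma~\ref{sets_trivial_ordering_corollary}-(iv) we already have $\mathcal{S}_k^\ell\subseteq\mathcal{S}_{k+1}^\ell$, so only the reverse inclusion $\mathcal{S}_{k+1}^\ell\subseteq\mathcal{S}_k^\ell$ remains. Fix $b\in\mathcal{S}_{k+1}^\ell$. Referring to the definition (\ref{optimal_stopping_l_equn}), membership in $\mathcal{S}_k^\ell$ is the single inequality $-\eta b\le\min\{cp_k(b,F_\ell),cc_k(b,F_\ell)\}$, while $b\in\mathcal{S}_{k+1}^\ell$ already guarantees $-\eta b\le\min\{cp_{k+1}(b,F_\ell),cc_{k+1}(b,F_\ell)\}$. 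Hence it suffices to show that, for this particular $b$, the stage-$k$ costs agree with the stage-$(k+1)$ costs, i.e. $cp_k(b,F_\ell)=cp_{k+1}(b,F_\ell)$ and $cc_k(b,F_\ell)=cc_{k+1}(b,F_\ell)$.

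For the probing cost I would argue as follows. By Lemma~\ref{sets_trivial_ordering_corollary}-(ii) and Theorem~\ref{stopping_sets_equal_theorem} we have $b\in\mathcal{S}_{k+1}^\ell\subseteq\mathcal{S}_{k+1}=\mathcal{S}_k$. Because the stopping sets are of threshold type (Theorem~\ref{threshold_nature_lemma}-(i)) and $\max\{b,R_\ell\}\ge b$, every realization $\max\{b,R_\ell\}$ lies in both $\mathcal{S}_k$ and $\mathcal{S}_{k+1}$, where (\ref{Jk_b_equn}) gives $J_k(\max\{b,R_\ell\})=J_{k+1}(\max\{b,R_\ell\})=-\eta\max\{b,R_\ell\}$. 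Substituting into the definition (\ref{probing_cost_bF_equn}) of $cp$ yields $cp_k(b,F_\ell)=cp_{k+1}(b,F_\ell)$ at once.

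The continuing cost is the main obstacle and requires the heavier machinery of the previous subsection. From (\ref{continuing_cost_bF_equn}), $cc_k(b,F_\ell)$ and $cc_{k+1}(b,F_\ell)$ are expectations, both over the common distribution $A$ (since $L_{k+1}$ and $L_{k+2}$ are identically distributed), of $\min\{J_{k+1}(b,F_\ell),J_{k+1}(b,F_u)\}$ and $\min\{J_{k+2}(b,F_\ell),J_{k+2}(b,F_u)\}$, respectively. It therefore suffices to show $J_{k+1}(b,F_u)=J_{k+2}(b,F_u)$ for every $F_u$. Here I would invoke Lemma~\ref{all_distributions_corollary}, which supplies $\mathcal{S}_j\subseteq\mathcal{Q}_j^u$ for all $j$, together with Lemma~\ref{equal_costs_lemma}: since $b\in\mathcal{S}_{k+1}\subseteq\mathcal{S}_{k+2}$ (Lemma~\ref{sets_trivial_ordering_corollary}-(iv)), both $J_{k+1}(b,F_u)$ and $J_{k+2}(b,F_u)$ collapse to $J_N(b,F_u)$ and hence coincide; the same reasoning applied to $F_\ell$ handles the first argument of each $\min$. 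The one edge case is $k=N-2$, where $k+2=N$ so that $J_{k+2}=J_N$ needs no lemma, while $J_{N-1}(b,F_u)=J_N(b,F_u)$ still follows from Lemma~\ref{equal_costs_lemma} applied at stage $N-1$.

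With both cost equalities in hand the conclusion is immediate: $-\eta b\le\min\{cp_{k+1}(b,F_\ell),cc_{k+1}(b,F_\ell)\}=\min\{cp_k(b,F_\ell),cc_k(b,F_\ell)\}$, so $b\in\mathcal{S}_k^\ell$. I expect the continuing-cost step to be the crux, precisely because it is the only place where the stage-independence of the underlying cost-to-go functions on the stopping set---rather than a mere threshold or monotonicity property of the sets themselves---has to be brought to bear, which is exactly what Lemmas~\ref{equal_costs_lemma} and \ref{all_distributions_corollary} provide.
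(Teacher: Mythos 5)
Your proposal is correct and follows essentially the same route as the paper's proof: the same decomposition into the two cost equalities $cp_k(b,F_\ell)=cp_{k+1}(b,F_\ell)$ and $cc_k(b,F_\ell)=cc_{k+1}(b,F_\ell)$ on $\mathcal{S}_{k+1}^\ell$, the same use of Lemma~\ref{sets_trivial_ordering_corollary}, Theorem~\ref{stopping_sets_equal_theorem} and the threshold property for the probing cost, and the same appeal to Lemmas~\ref{all_distributions_corollary} and~\ref{equal_costs_lemma} for the continuing cost. The only difference is that you spell out the continuing-cost step (which the paper delegates to ``the same procedure used in Theorem~\ref{stopping_sets_equal_theorem}''), including the $k=N-2$ edge case.
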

\begin{IEEEproof}
Similar to the proof of Theorem~\ref{stopping_sets_equal_theorem},
 here 
we need to show that
the probing and continuing costs satisfy analogous equalities, i.e., 
for $b\in\mathcal{S}_k^\ell$ we need to show that
$P_{k+1}(b,F_\ell)=P_{k}(b,F_\ell)$ and 
$C_{k+1}(b,F_\ell)=C_{k}(b,F_\ell)$. 
Formal proof is available in Appendix~\ref{stopping_sets_l_equal_theorem_appendix}.
\end{IEEEproof}

\begin{figure}[t]
% \vspace{-2mm}
\centering
\includegraphics[scale=0.8]{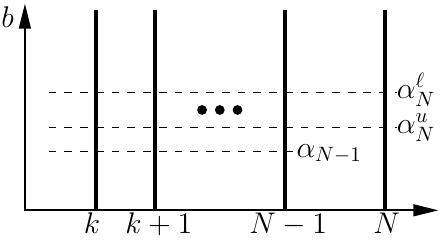}
\vspace{-2mm}
\caption{\label{stage_independence_figure}
Illustration of the stage independence property: only the thresholds corresponding to the 
last stage (and stage $N-1$ for $\mathcal{S}_k$) are shown, since
these alone are sufficient to characterize the stopping sets for any $k$.}
\vspace{-6mm}
\end{figure}

\emph{Discussion:} Owing to Theorem~\ref{stopping_sets_equal_theorem} and
\ref{stopping_sets_l_equal_theorem}, we can now modify the illustration in
Fig.~\ref{threshold_property_figure} to Fig.~\ref{stage_independence_figure}
where we show only a single threshold corresponding to each stopping set. 
Thus, to characterize the stopping set 
$\mathcal{S}_k^\ell$ for any $k$, it is sufficient to compute only the threshold 
$\alpha_N^\ell$ corresponding to the last stage. Similarly, the stopping set 
$\mathcal{S}_k$ is characterized by the threshold $\alpha_{N-1}$ computed for stage 
$N-1$ (recall that $\alpha_N=0$).

\vspace{-3mm}
\subsection{Probing Sets}
\label{probing_sets_section}
Similar to the stopping sets $\mathcal{S}_k^\ell$, one can also 
define the {probing sets} $\mathcal{P}_k^\ell$
as the set of all $b$ such that whenever the state at stage $k$ is $(b,F_\ell)$ 
it is better to {probe} than to either {stop} or {continue}, i.e.,
\begin{eqnarray}
\label{optimal_probing_set_equn}
\mathcal{P}_k^\ell
&=&\Big\{b: P_k(b,F_\ell)\le\min\{-\eta b, C_k(b,F_\ell)\}\Big\}.
\end{eqnarray}
Note that $\mathcal{P}_k^\ell$ is simply the difference of the sets $\mathcal{Q}_k^\ell$
and $\mathcal{S}_k^\ell$, i.e., $\mathcal{P}_k^\ell=\mathcal{Q}_k^\ell\setminus\mathcal{S}_k^\ell$.

From our numerical work we have observed that, similar to the stopping sets,  the probing 
sets $\mathcal{P}_k^\ell$ are characterized by upper bounds $\zeta_k^\ell$ (see 
Fig.~\ref{probing_conjecture_figure}).
% , as illustrated 
% in Fig.~\ref{illustration2_figure} (where $\zeta_k^\ell=\alpha_k^\ell$)
% and \ref{illustration3_figure}. 
The intuition for this
% $\mathcal{P}_k^\ell$ to be characterized by an upper bound 
is as follows. Let $(b,F_\ell)$ be the state at stage $N-1$. If
the value of $b$ is very small, then it is better to probe than to continue, because 
probing will give an opportunity to probe an additional relay at 
stage $N$ in case the process continues after probing at stage $N-1$,
while continuing without probing
will deprive $\mathscr{F}$ of this 
opportunity. This argument can be extended to any stage $k$ to conclude that it may be better
to probe for small values of $b$. However, as $b$ increases, probing may not yield a better
reward than the existing $b$;
hence probing might not be worth the cost, so that
it may be better to simply continue. 
% Thus, as shown in Fig.~\ref{illustration3_figure} 
% (which corresponds to 
% a distribution $F_u$), there is a window between the
% stopping and the probing sets where it is optimal to continue. 

To formally show the threshold property of the probing set $\mathcal{P}_k^\ell$,
the following is sufficient: for any $b_2>b_1$, 
\begin{eqnarray*}
P_k(b_1,F_\ell)-P_k(b_2,F_\ell)
&\le& C_k(b_1,F_\ell)-C_k(b_2,F_\ell).
\end{eqnarray*}
This is because, if $b_2\notin\mathcal{S}_k^\ell$ (so that stopping is not optimal) is such that
$b_2\in\mathcal{P}_k^\ell$ (i.e., $P_k(b_2,F_\ell)\le C_k(b_2,F_\ell)$) then
from the above inequality we obtain  $P_k(b_1,F_\ell)\le C_k(b_1,F_\ell)$,
implying that it is optimal to probe at $b_1$ as well so that probing sets 
are characterized by upper bounds. 
However, we have not yet been able to 
prove or disprove such a result,
but we strongly believe that it is true and make the following conjecture.
\begin{conjecture}
\label{probing_set_conjecture}
For $k=1,2,\cdots,N-1$, for any $F_\ell$, if $b_2\in\mathcal{P}_k^\ell$ 
then for any $b_1<b_2$ we have $b_1\in\mathcal{P}_k^\ell$.
\hfill $\blacksquare$
\end{conjecture}

\begin{figure}
\centering
\subfigure[]{
\includegraphics[scale=0.8]{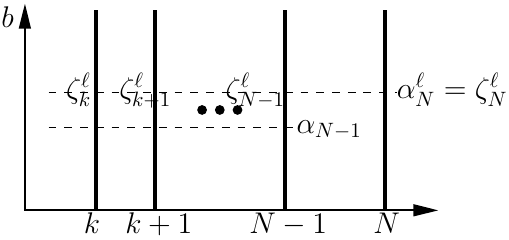}
\label{probingF_ell_figure}
}
\subfigure[]{
\includegraphics[scale=0.8]{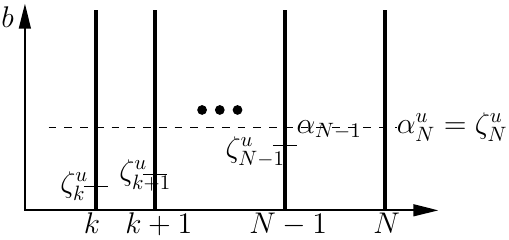}
\label{probingF_u_figure}
}
\vspace{-2mm}
\caption{\label{probing_conjecture_figure}
Structure of the probing sets if Conjecture~\ref{probing_set_conjecture}
is true. \subref{probingF_ell_figure} Probing sets corresponding to a distribution
$F_\ell$ such that $\alpha_N^\ell>\alpha_{N-1}$, 
\subref{probingF_u_figure} Probing sets corresponding to an $F_u$ such that
$\alpha_N^u=\alpha_{N-1}$}
\vspace{-6mm}
\end{figure}

\emph{Discussion:} If the above conjecture is true, then some additional structural results
can be deduced. For instance, suppose for some $F_\ell$, $\alpha_k^\ell>\alpha_k$, or 
equivalently, $\alpha_N^\ell>\alpha_{N-1}$ (refer to Fig.~\ref{probingF_ell_figure}).
Then, since $\mathcal{S}_k\subseteq\mathcal{Q}_k^\ell$ (from 
Lemma~\ref{all_distributions_corollary}), for any $(b,F_\ell)$ such that 
$\alpha_{N-1}<b<\alpha_N^\ell$, it should be optimal to probe. Now, invoking 
Conjecture~\ref{probing_set_conjecture} we can conclude that it is optimal to probe
for any $b<\alpha_N^\ell$, so that $\zeta_k^\ell=\alpha_N^\ell$ for all $k$.
Thus, for such ``good'' distributions, $F_\ell$, (i.e., $F_\ell$ such that
$\alpha_N^\ell>\alpha_{N-1}$) the policy corresponding to it is completely characterized
by a single threshold $\alpha_{N}^\ell$. Next, for distributions $F_u$ 
such that $\alpha_k^u=\alpha_k$ (equivalently, $\alpha_N^\ell=\alpha_{N-1}$;
see Fig.~\ref{probingF_u_figure}), there is a window between $\zeta_k^u$
and $\alpha_N^u$ where, for any $(b,F_\ell)$ such that 
$\zeta_N^u\le b<\alpha_N^u$, it is optimal to continue.
Unlike $\alpha_k^u$, the thresholds $\zeta_k^u$ are stage dependent. 
In fact, from our numerical work, we
observe that $\zeta_k^u$ are increasing with $k$. Finally, as depicted in 
Fig.~\ref{probing_conjecture_figure}, for any distribution
$F_\ell$, at the last stage we invariably should have $\alpha_N^\ell=\zeta_N^\ell$ since
the action to continue is not available at stage $N$.

%%%%%%%%%%%%%%%%%%%%%%%%%%%%%%%%%%%%%%%%%%%%%%%%%%%%%%%%%
\begin{figure*}[b]
\vspace{-6mm}
\begin{eqnarray}
\label{bellman_gen_last_equn}
J_N(b,\mathcal{H})
&=&\min\Big\{-\eta b, \eta\delta + \min_{F_\ell\in\mathcal{H}}\mathbb{E}_{\ell}
\Big[J_N(\max\{b,R_\ell\}, \mathcal{H}\setminus\{F_\ell\})\Big] \Big\}.
\end{eqnarray}
\vspace{-4mm}
\begin{eqnarray}
\label{bellman_gen_k_G_equn}
J_k(b,\mathcal{H}) 
&=& \min\Big\{-\eta b, \eta\delta + 
\min_{F_\ell\in\mathcal{H}}\mathbb{E}_{\ell}\Big[J_k(\max\{b,R_\ell\},
\mathcal{H}\setminus\{F_\ell\})\Big], 
\tau + \mathbb{E}_L\Big[J_{k+1}(b,\mathcal{H}\cup\{F_{L_{k+1}}\})\Big]\Big\}.
\end{eqnarray}
% \hline
\end{figure*}
%%%%%%%%%%%%%%%%%%%%%%%%%%%%%%%%%%%%%%%%%%%%%%%%%%%%%%%%%%%%%%

\vspace{-4mm}
\subsection{Policy Implementation}
\label{policy_implementation_section}
To summarize, from Theorem~\ref{threshold_nature_lemma}, the stopping sets $\mathcal{S}_k$ and 
$\mathcal{S}_k^\ell$ 
% ($F_\ell\in\mathcal{F}$ and $k=1,2,\cdots,N$) 
are characterized by 
lower bounds $\alpha_k$ and $\alpha_k^\ell$. In Theorem~\ref{stopping_sets_equal_theorem}
and \ref{stopping_sets_l_equal_theorem} 
we proved that these thresholds are stage independent. Hence it is sufficient to 
compute only $\alpha_{N-1}$ and $\alpha_{N}^\ell$, thus simplifying the overall 
computation of the optimal policy.
Further, if Conjecture~\ref{probing_set_conjecture} is true, then the 
upper bounds $\zeta_k^\ell$ are sufficient
to characterize the probing sets $\mathcal{P}_k^\ell$. 

Now, $\mathscr{F}$ after computing these thresholds,
operates as follows: At stage $k=1,2,\cdots,N-1$,
whenever the state is $(b,F_\ell)$,
(\textbf{1}) if $b \ge \alpha_{N}^\ell$ then {stop} and forward the packet
to the probed relay,
(\textbf{2}) if $b \le \zeta_k^\ell$ then {probe} the unprobed relay and update the
best reward to $b'=\max\{b,R_\ell\}$. Now, if $b'\ge\alpha_{N-1}$ {stop}, 
otherwise {continue} to wait for the next relay,
(\textbf{3}) otherwise (i.e., if $\zeta_k^\ell < b < \alpha_{N}^\ell$), 
{continue} to wait for the 
next relay to wake-up, at which instant choose, between $F_\ell$ and $F_{L_{k+1}}$, 
whichever is stochastically greater while putting the other unprobed relay to sleep.

If the decision process enters the last stage $N$ and if the state is $(b,F_\ell)$ then 
 if $b\ge\alpha_N^\ell$ stop, otherwise probe
(continue is not available). Finally, if the state at stage $N$ is $b$ 
then stop irrespective of its value.

\section{Unrestricted Class $\Pi$: An Informal Discussion}
\label{general_class_section}
In this section, based on the insights we have obtained from the analysis in the previous sections,
we will informally discuss the possible structure of the optimal policy 
within the unrestricted class of policies, $\Pi$. 

Recall that a policy within $\Pi$, at stage $k$,
is in general allowed to base its decision on ($b_k,\mathcal{F}_k)$
where $b_k$ is the reward of the best probed relay ($b_k=-\infty$ if no relay has been
probed yet) and $\mathcal{F}_k$ is the set of
unprobed relays ($\mathcal{F}_k=\{\}$ if all the relays have been probed). 
Thus, the state space at stage $k$ can be written as
\begin{eqnarray}
\label{state_space_equn}
\mathcal{X}_k 
= \Big\{(b,\mathcal{H}): b\in\{-\infty\}\cup[0,\overline{r}], 
\mathcal{H}\in\mathcal{F}^j, 0\le j \le k \Big\}.
\end{eqnarray}
Again the actions available are {stop}, {probe}, and {continue}. If the action 
is to {probe} then $\mathscr{F}$ has to further decide which relay to probe among 
the several ones available at stage $k$.
When there are no unprobed relays
(i.e., $\mathcal{H}=\{\ \}$) we will represent the state as simply $b$.
We now proceed to write the recursive Bellman optimality 
equation for this more general unrestricted problem.
Although these equations are more involved than the ones in Section~\ref{restricted_class_section} 
(recall (\ref{cost_to_go_stageN_statebF_equn}) through (\ref{cost_to_go_stagek_statebF_equn})),
these can be understood similarly and hence we do not provide an explanation. The sole purpose
for writing these equations here is because we will require 
these (in Section~\ref{probing_numerical_work_section}) 
to perform value iteration and  numerically compute an optimal policy for the unrestricted problem. 
Hence these equations can be omitted
without affecting the readability of the remainder of this section.

Let $J_k$, $k=1,2,\cdots,N$, represent
the optimal cost-to-go at stage $k$ (for simplicity we are again using $J_k$), then,
$J_N(b) = -\eta b$, and $J_N(b,\mathcal{H})$ is as in (\ref{bellman_gen_last_equn}).
For stage $k=1,2,\cdots,N-1$ we have
\begin{eqnarray}
\label{bellman_gen_k_b_equn}
J_k(b) 
&=& \min\Big\{-\eta b, \tau + \mathbb{E}_L\Big[J_k(b,\{F_{L_{k+1}}\})\Big]\Big\},
\end{eqnarray}
and $J_k(b,\mathcal{H})$ as in (\ref{bellman_gen_k_G_equn})

In view of the complexity of the problem, we do not pursue the formal 
analysis of characterizing the structure of the optimal policy 
within the unrestricted
class. However, based on our results from the previous sections and 
a related work by Chaporkar and Proutiere \cite{chaporkar-proutiere08joint-probing},
we will discuss the possible structure of the unrestricted-optimal policy.

\subsection{Discussion on the Last Stage $N$}
Suppose the decision process enters the last stage $N$. 
Now, given the best reward value among the probed relays, $b$, 
and the set $\mathcal{H}$ of reward distributions of the unprobed
relays, $\mathscr{F}$ has to decide whether to {stop}, or {probe} a relay (note that 
{continue} action is not available at the last stage). Suppose the action
is to probe then, after probing and updating the best reward value, 
if still there are some unprobed relays left, $\mathscr{F}$ has to again decide
to stop or probe. This decision problem 
is similar to the one studied by Chaporkar and Proutiere 
in \cite{chaporkar-proutiere08joint-probing}, but from the context of channel 
selection. In the following, we will briefly describe the 
problem in \cite{chaporkar-proutiere08joint-probing}.

Given a set of channels with 
different channel gain distributions, a transmitter has to choose a channel
for its transmissions. The transmitter can probe a channel to know its channel gain.
Probing all the channels will
enable the transmitter to select the best channel but at the cost of reducing the
effective transmission time within the channel coherence period. On the other hand,
probing only a few channels may deprive the transmitter of the opportunity 
to transmit on a better channel. The transmitter is interested in \emph{maximizing}
its \emph{throughput} within the coherence period.

% , which can be shown to be equivalent to 
% \emph{minimizing} the last stage cost (which will be a combination of reward achieved and
% last stage probing cost) in our problem.

The authors in \cite{chaporkar-proutiere08joint-probing}, for their channel probing problem, 
prove that the one-step-look-ahead (OSLA)
rule is optimal: given the channel gain of the best channel (among the channels probed so far)
and a collection of channel gain distributions of the unprobed channels, 
it is optimal to stop and transmit on the best channel 
if and only if the throughput obtained by doing so is greater than the expected throughput obtained 
by probing any unprobed channel and then stopping (by transmitting on the new-best channel). 
Further, they prove that if the set of channel gain distributions is totally stochastically ordered 
(recall Definition~\ref{total_stochastic_ordering_definition}), then it is optimal to probe 
the channel whose distribution is stochastically largest among all the unprobed channels.
However, in their problem maximizing throughput involves optimizing a product of the channel gain and
the remaining transmission time, unlike in our problem where (at the last stage) we optimize
a linear combination of reward and the probing cost. But, from our numerical work we 
have seen that a similar OSLA rule is optimal once 
our decision process enters the last stage $N$: given a state $(b,\mathcal{H})$
at stage $N$, it is optimal to {stop} if the cost of stopping is less than the cost of probing 
any distribution from $\mathcal{H}$ and then stopping; otherwise it is optimal to 
{probe} the stochastically largest distribution from $\mathcal{H}$.

\subsection{Discussion on Stages $k=1,2,\cdots,N-1$}
For the other stages $k=1,2,\cdots,N-1$, one can begin by defining the stopping sets
$\mathcal{S}_k$ and $\mathcal{S}_k^\mathcal{H}$, and the stobing
sets $\mathcal{Q}_k^\mathcal{H}$, analogous to the ones in 
(\ref{optimal_stopping_equn}), (\ref{optimal_stopping_l_equn}) and
(\ref{optimal_stopping_probing_equn}).
Note that, here we need to define  $\mathcal{S}_k^\mathcal{H}$ and  $\mathcal{Q}_k^\mathcal{H}$
for a set of distributions $\mathcal{H}$ unlike in the earlier case where we had defined these
sets only for a given distribution $F_\ell$. We expect that the results 
analogous to the ones in Section~\ref{structural_results_section}, namely
Theorems~\ref{stopping_sets_equal_theorem} and \ref{stopping_sets_l_equal_theorem} where we prove 
that the stopping sets are stage independent, hold true for this more general setting as well.
Further, similar to that at stage $N$, for any stage $k$ we expect that if it is optimal to probe 
at some state $(b,\mathcal{H})$ then it is better to {probe} the stochastically 
largest distribution from $\mathcal{H}$.
Again, we have seen that these observations hold
% , at least for the values we have considered for 
in our numerical work.

\section{Numerical and Simulation Results}
\label{probing_numerical_work_section}

\subsection{One-Hop Study}
% The optimal policy within the restricted class 
% (Sections~\ref{restricted_class_section} and \ref{structural_results_section})
% is allowed to keep only one unprobed relay awake in addition to the best probed relay,
% while within the unrestricted class (Section~\ref{general_class_section}), the optimal policy
% operates by keeping all the unprobed relays awake.
% We will refer to the former policy as {RST-OPT} (to be read as, {ReSTricted-OPTimal})
% and the latter as {UnRST-OPT} (for {UnReSTricted-OPTimal}).
% In this section we will compare the performance of RST-OPT 
% against UnRST-OPT. 

We begin by listing the various parameter values that we have used in our numerical work.
% Recall the local forwarding problem described in Section~\ref{local_forwarding_problem_section}.
The forwarder and the sink are separated by a distance of $V=1000$ meters (m);
recall Fig.~\ref{forwarding_set_figure}. 
The radius of the communication region is $50$ m. We set
$z_{min}=5$ m. There are $N=5$ relays within the 
forwarding region $\mathcal{L}$. These are uniformly located within $\mathcal{L}$.
To enable us to perform value iteration (i.e., recursively solve the Bellman equation to 
obtain optimal value and the optimal policy), we have discretized the forwarding region $\mathcal{L}$
into a grid of $20$ uniformly spaced points within $\mathcal{L}$ and then 
map the location of each relay to a grid point closest to it.
Since the grid is symmetric about the line joining
$\mathscr{F}$ and the sink (with $4$ points lying on the line so that 
these do not have symmetric pairs), we have in total ($\frac{20-4}{2}+4=$) $12$ different possible 
$D_\ell$ values, giving
rise to $12$ different reward distributions constituting the set 
$\mathcal{F}$.

Next, recall the reward expression from (\ref{reward_equn});
we have fixed, $d_{ref}=5$ m,
$\xi = 2.5$, and $a = 0.5$. 
For $\Gamma N_0$, which is referred to as the \emph{receiver sensitivity}, we use 
a value of $10^{-9}$ mW (equivalently $-90$ dBm) specified for
the Crossbow TelosB wireless mote \cite{telosb-datasheet}. To ensure that the transmit power of 
a relay from any grid location is within the range of $1$ mW to $0.003$ mW 
(equivalently $0$ dBm to $-24$ dBm; again from TelosB datasheet \cite{telosb-datasheet}),\footnote{Although
practically only a finite set of transmit power levels will be allowed, 
for our numerical work we assume that the relays can transmit using any power within the specified
range.}
we allow for four different channel gain values:
$0.4\times 10^{-3}, 0.6\times 10^{-3}, 0.8\times  10^{-3}$, and  $1\times 10^{-3}$,
each occurring with equal probability.
% The channel gain distribution is uniform. 
% The distribution of $G_\ell$ is a truncated exponential with mean $1$.
% We normalize the rewards, $\{R_\ell:\ell\in\mathcal{L}\}$, to take values within the interval
% $[0,1]$ and then quantize it to one of the $100$ equally spaced points within $[0,1]$.
Since channel probing is usually performed using the maximum allowable transmit power,
we set the probing cost $\delta$ to be $1$ mW.
Finally, the inter-wake-up times $\{U_k\}$ are exponentially
distributed random variables with mean $\tau = 20$ milliseconds (ms).

% \noindent
\textbf{\emph{One-Hop Policies:}} The following is the description of the policies that 
we will study:
\begin{itemize}
\item RST-OPT (ReSTricted OPTimal): The optimal policy within the restricted class 
(Sections~\ref{restricted_class_section} and \ref{structural_results_section})
where $\mathscr{F}$ is allowed to keep at most two relays awake $-$ the best probed
and the best unprobed relay; recall the implementation summary of this policy from 
Section~\ref{policy_implementation_section}. 

\item GLB-OPT (GLoBal OPTimal): The optimal policy within the unrestricted class
of policies where $\mathscr{F}$ operates by keeping all the unprobed relays
awake. We obtain GLB-OPT by numerically solving the optimality 
equations in (\ref{bellman_gen_last_equn}), (\ref{bellman_gen_k_G_equn}) and (\ref{bellman_gen_k_b_equn}).

\item BAS-OPT (BASic OPTtimal): The optimal policy for the basic relay selection model where $\mathscr{F}$
is not allowed to exercise the option of \emph{not-probing} a relay (recall discussion of the basic model from
related work). Thus, each time a relay wakes up,
it is immediately probed (incurring a cost of $\eta\delta$) and its reward value is revealed to 
$\mathscr{F}$. By incorporating $\eta\delta$ into the term $\tau$ (so that the 
inter-wake-up time is modified to $\tau+\eta\delta$), the solution to this model
can be characterized (see our prior work \cite[Section~6]{naveen-kumar12relay-selection_TMC_paper}) in terms of a single threshold $\alpha$
as follows: at any stage $k=1,2,\cdots,N-1$, stop if and only if the best reward value 
$b_k\ge\alpha$; at stage $N$ stop for any $b_N$. Note that
the threshold $\alpha$ depends on $\eta$.
\end{itemize}

\begin{figure}[t]
% \vspace{-2mm}
\centering
\includegraphics[scale=0.4]{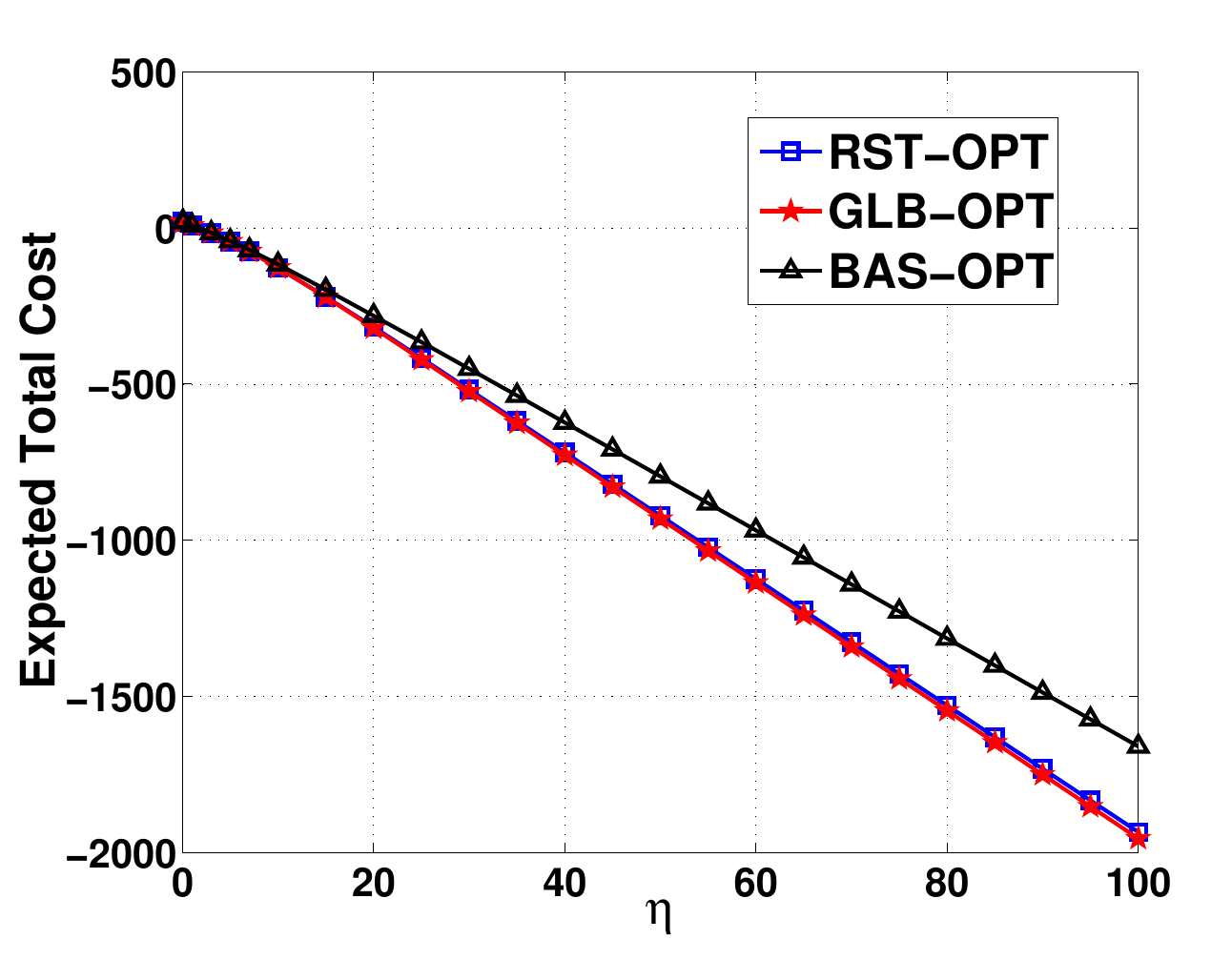}
\vspace{-4mm}
\caption{\label{total_cost_figure} Expected total cost 
as a function of the trade-off multiplier $\eta$; see (\ref{unconstrained_equn}). Recall that a large 
$\eta$ implies less emphasis on expected delay.}
\vspace{-5mm}
\end{figure}

\begin{figure*}
\centering
\subfigure[]{
\includegraphics[scale=0.31]{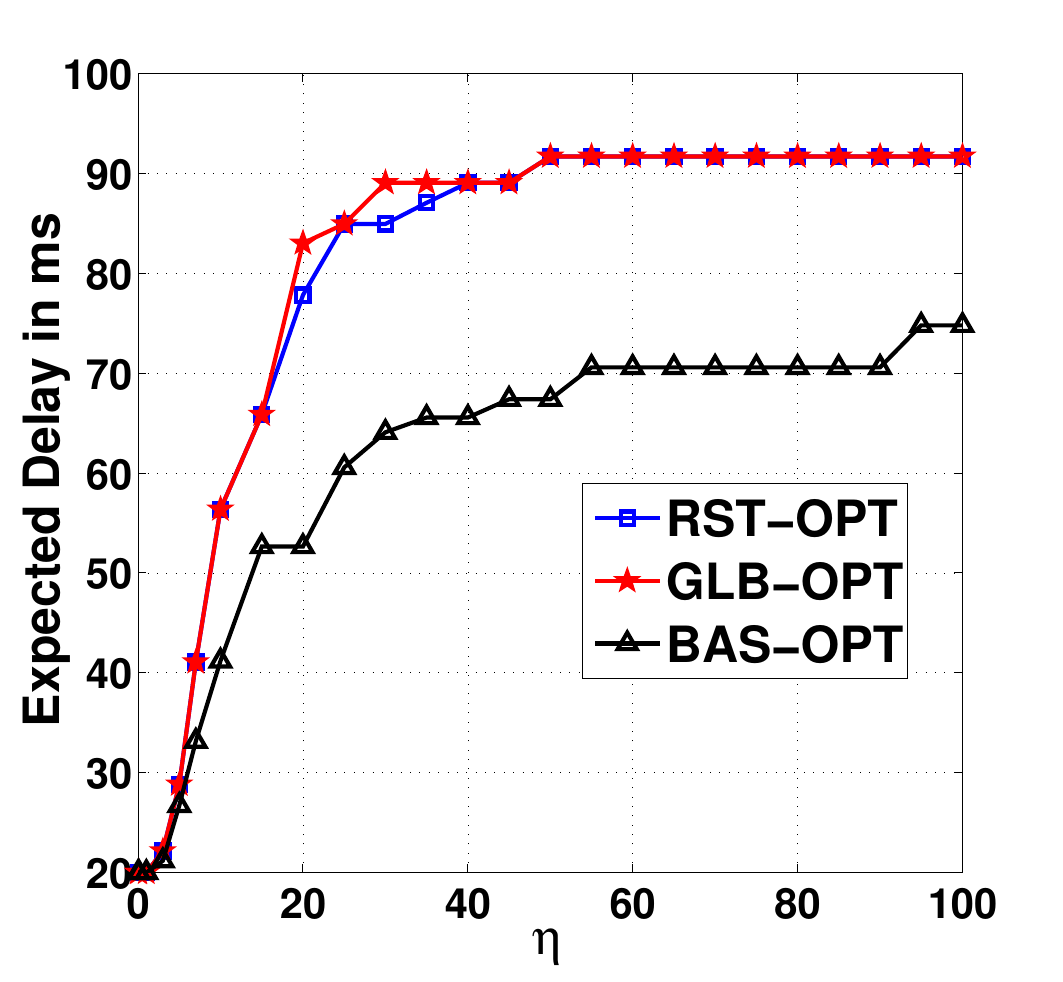}
        \label{delay_figure}
}
\subfigure[]{
\includegraphics[scale=0.31]{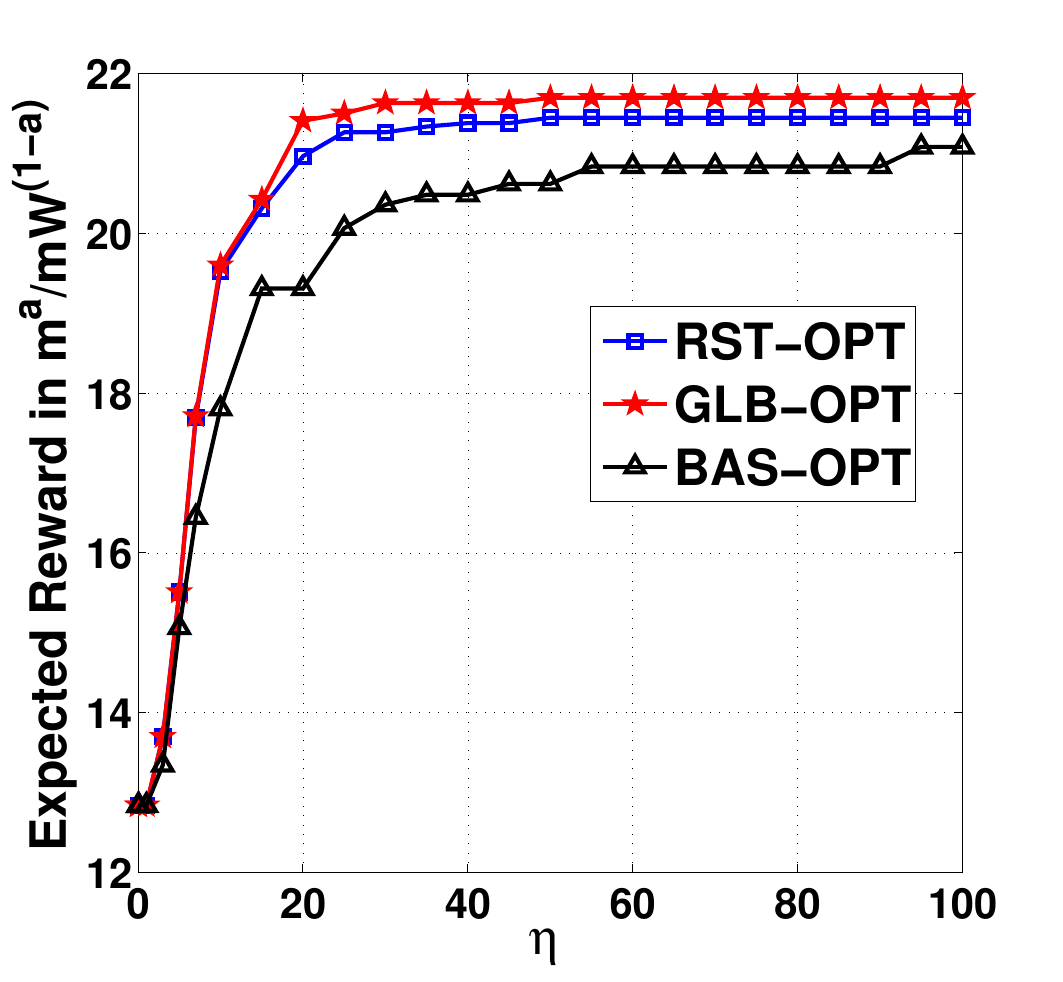}
        \label{reward_figure} 
}
\subfigure[]{
\includegraphics[scale=0.31]{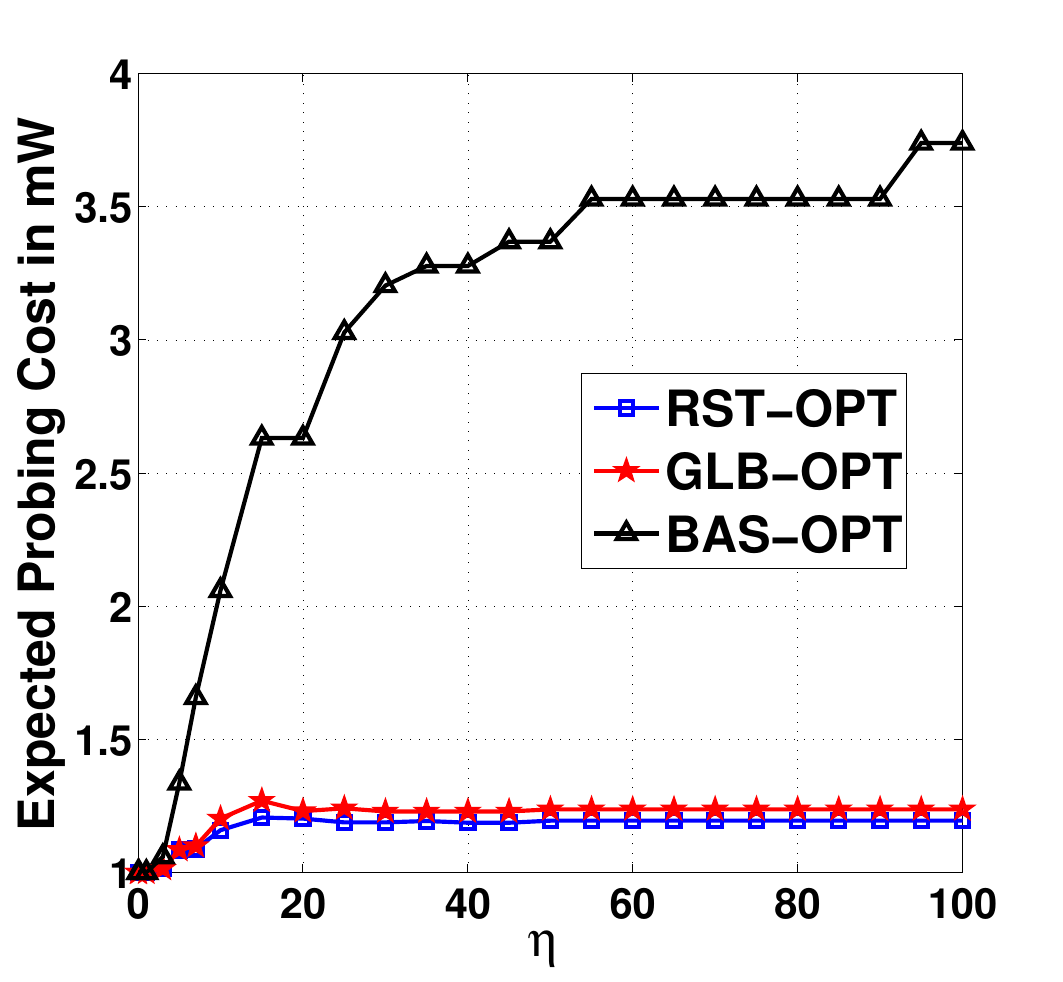}
\label{probing_figure}
}
\caption{\label{individual_components_figure}
Individual components of the total cost in Fig.~\ref{total_cost_figure} as functions of $\eta$:
\subref{delay_figure} Delay \subref{reward_figure} Reward  
and \subref{probing_figure} Probing Cost.
}
\vspace{-4mm}
\end{figure*}

% \noindent
\textbf{\emph{Discussion:}}
In Fig.~\ref{total_cost_figure} we have plotted the total cost 
(i.e., the objective in (\ref{unconstrained_equn})) incurred by 
each of the above policies as a function of the multiplier $\eta$.
GLB-OPT being the globally optimal policy achieves the minimum cost.
However, interestingly we observe that the total cost obtained by RST-OPT
is very close to that of GLB-OPT. While the performance of BAS-OPT is good for 
small values of $\eta$, the performance degrades
as $\eta$ increases illustrating that it is not wise to naively probe every relay
as and when they wake-up.

% Next, since UnRST-OPT is optimal over a larger class of policies,
% for a given $\delta$, UnRST-OPT incurs a smaller cost than RST-OPT.
% However, interestingly, from the plot we observe that the difference between 
% the two costs is very small.  Also, from the figure, note that the cost difference 
% for $\delta = 0.01$ is smaller than that for $\delta = 0.1$.
% This is because, as the probing cost is decreased,
% the two policies will start behaving identically by
% probing most of the relays until they stop.
% Finally, when there is no cost for probing, i.e., for $\delta=0$,
% the two policies will be identical,
% since a new relay is always probed by both the policies.

% In Fig.~\ref{delay_figure}, \ref{reward_figure}, and \ref{probing_figure}

In Fig.~\ref{individual_components_figure} we have shown
the individual components 
of the total cost (namely delay, reward, and probing cost) 
as functions of $\eta$.
As $\eta$ decreases to $0$ we see (from Fig.~\ref{delay_figure})
that the expected delay incurred by all the policies converges to $20$ ms which is the mean 
time, $\tau$, until the first relay wakes up. Similarly, the expected rewards (in Fig.~\ref{reward_figure})  converge to 
reward of the first relay, and the probing costs (in Fig.~\ref{probing_figure}) converge
to the cost of probing a single relay, i.e., $\delta=1$ mW. This is because, 
for small values of $\eta$, since delay is valued more (recall the total cost expression from (\ref{unconstrained_equn})),
all the policies essentially end up probing the first relay and then forwarding the packet to it. 
This also explains as to why similar total cost (recall Fig.~\ref{total_cost_figure}) 
is incurred by all the policies in the low $\eta$ regime (e.g., $\eta\le 20$).

Next, as $\eta$ increases we see that the delay incurred and the reward achieved by all the policies
increases (see Fig.~\ref{delay_figure} and \ref{reward_figure}, respectively).
While the probing cost of BAS-OPT naively increases (see Fig.~\ref{probing_figure}),
probing costs incurred by RST-OPT and GLB-OPT saturate beyond $\eta=20$. This is because, whenever $\eta$ is large,
RST-OPT and GLB-OPT are aware that the gain in reward value obtained by probing more relays is negated by 
the cost term, $\eta\delta$, which is added to the total cost each time a new relay is probed;
BAS-OPT, not allowed to not-probe, ends up probing all the relays until the best reward exceeds 
the threshold $\alpha$. Thus, although BAS-OPT incurs a smaller delay than the other two policies, 
but suffers both in terms of reward and probing cost, leading to an higher total cost. 
On the other hand, RST-OPT and GLB-OPT wait for more relays and then probe only the relays with good reward
distribution to accrue a better total cost. 

Finally, the marginal improvement in performance obtained by GLB-OPT over RST-OPT can be understood as
follows. Although the delay incurred by these two policies is almost identical, 
for large $\eta$ values, GLB-OPT achieves a better reward than RST-OPT by incurring a slightly 
higher probing cost. Thus, whenever the reward offered by the relay with the best distribution
is not good enough, GLB-OPT probes an additional relay to improve the reward; such improvement 
is not possible by RST-OPT since it is restricted to keep only one unprobed relay awake. 
 
% The similar performance obtained by all the policies for small $\eta$ values can be understood as follows:
% whenever $\eta$ is small, since delay is valued more (recall (\ref{unconstrained_equn})),
% all the policies forward the packet mostly to the first relay.
% 
% All the curves in these
% figures are increasing with $\eta$, except for $\delta = 0.1$ where the average reward
% and the probing cost of RST-OPT shows a small decrease when $\eta$ varies from $40$
% to $60$. As $\eta$ decreases, we observe from Fig.~\ref{delay_figure} that all 
% the average delay curves
% converge to $0.2$ (recall that $\tau = 0.2$ is the average time until the first relay wakes up).
% This is because, for small values of $\eta$, $\mathscr{F}$ valuing delay more 
% (recall (\ref{unconstrained_equn})) 
% will always forwards to the relay that wakes up first irrespective of its reward value
% (check from Fig.~\ref{total_cost_figure} that the total cost 
% also converges to $0.2$ as $\eta$ decreases).
% For the same reason, the average probing cost curves in Fig.~\ref{probing_figure} 
% converge to their respective $\delta$ values which is the  cost for probing a single relay, and
% the average reward in Fig.~\ref{reward_figure} converge to the average reward of the first relay.\\

% \noindent
\textbf{\emph{Computational Complexity:}}
Finally on the computational complexity of these policies. To obtain GLB-OPT
we had to recursively solve the Bellman equation (referred to as the \emph{value iteration})
in  (\ref{bellman_gen_last_equn}), (\ref{bellman_gen_k_G_equn}) and (\ref{bellman_gen_k_b_equn}),
for every stage $k$ and every possible state at stage $k$.
The total number of all possible states at stage $k$, i.e., 
the cardinality of the state space $\mathcal{X}_k$ in (\ref{state_space_equn}),
grows {exponentially} with the cardinality of $\mathcal{F}$ (assuming that $\mathcal{F}$
is discrete like in our numerical example). It also grows {exponentially} with the stage index $k$.

In contrast, for computing RST-OPT, since within the restricted class at any time only one 
unprobed relay is kept awake, the state space size grows only 
{linearly} with the cardinality of $\mathcal{F}$.
Also, the size of the state space does not grow with $k$. Furthermore, from our analysis in 
Section~\ref{structural_results_section} we know that the stopping sets are threshold based,
and moreover the thresholds, $\alpha_k$ and $\{\alpha_k^\ell:F_\ell\in\mathcal{F}\}$,
are stage independent. 
Hence, these thresholds have to be computed only once (for stage $N-1$ and $N$, respectively), thus
further reducing the complexity of RST-OPT. BAS-OPT, being a single-threshold based policy,
is much simpler to implement but is not a good choice whenever $\eta$ is large.

% \vspace{-2mm}
\subsection{End-to-End Study}
The good one-hop performance of RST-OPT and its computational simplicity motivates
 us to apply RST-OPT to route packets in an asynchronously
sleep-wake cycling WSN and study its end-to-end performance. We will also obtain
the end-to-end performance of the naive BAS-OPT policy.
% We conduct simulation experiments for this purpose. 
% We will compare the performance of RST-OPT 
% against naive BAS-OPT policy
% the optimal policy for the 
% basic relay selection model, where each node operates 
% by completely ignoring the probing cost, i.e., each node simply optimizes:
% $\mbox{Minimize}\ \Big(\mathbb{E}_\pi[D]-\eta\mathbb{E}_\pi[R]\Big)$ 
% (recall the discussion on related work
% in Section~\ref{local_forwarding_problem_section}). 

First we will describe the setting that we have considered for our end-to-end simulation study.
We construct a network by randomly placing $500$ nodes in
a square region of side $500$~m. The sink node is placed at the location $(500,0)$.
The network nodes are asynchronously and periodically sleep-wake cycling,
i.e., a node $i$ wakes up at the periodic instants, $\{T_i+kT:k\ge0\}$, where $\{T_i\}$ are 
i.i.d.\ uniform on $[0,T]$ with $T$ being the sleep-wake cycling period (recall
our justification for the periodic sleep-wake cycling from footnote~\ref{sleep-wake-footnote} in 
page~\pageref{sleep-wake-footnote}).
We fix $T=100$ ms. A source node is randomly chosen, which generates an alarm packet
at time $0$. This alarm packet has to be routed to the sink node.

\begin{figure*}[t]
\centering
\subfigure[]{
\includegraphics[scale=0.4]{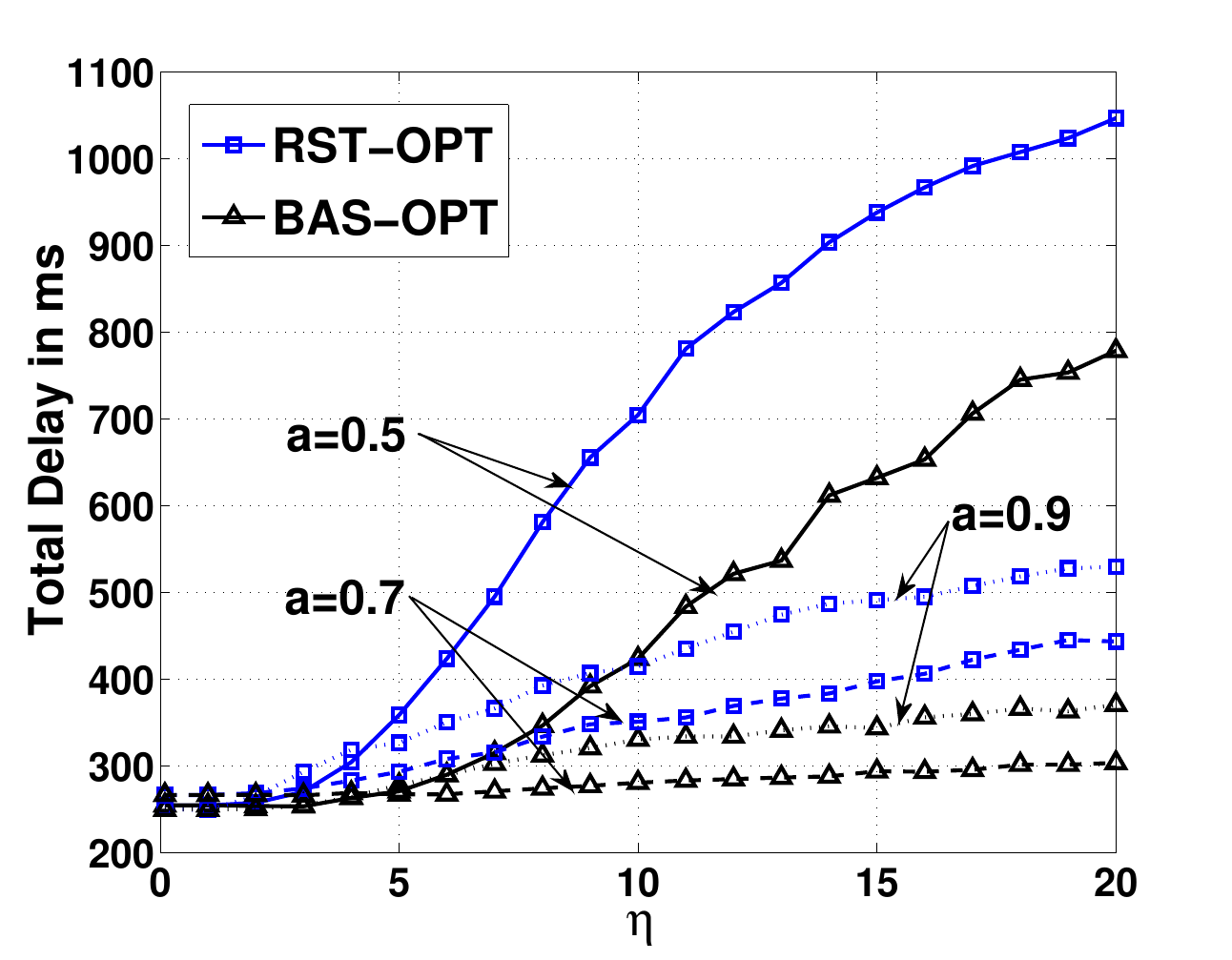}
        \label{end_delay_figure}
}
\subfigure[]{
\includegraphics[scale=0.4]{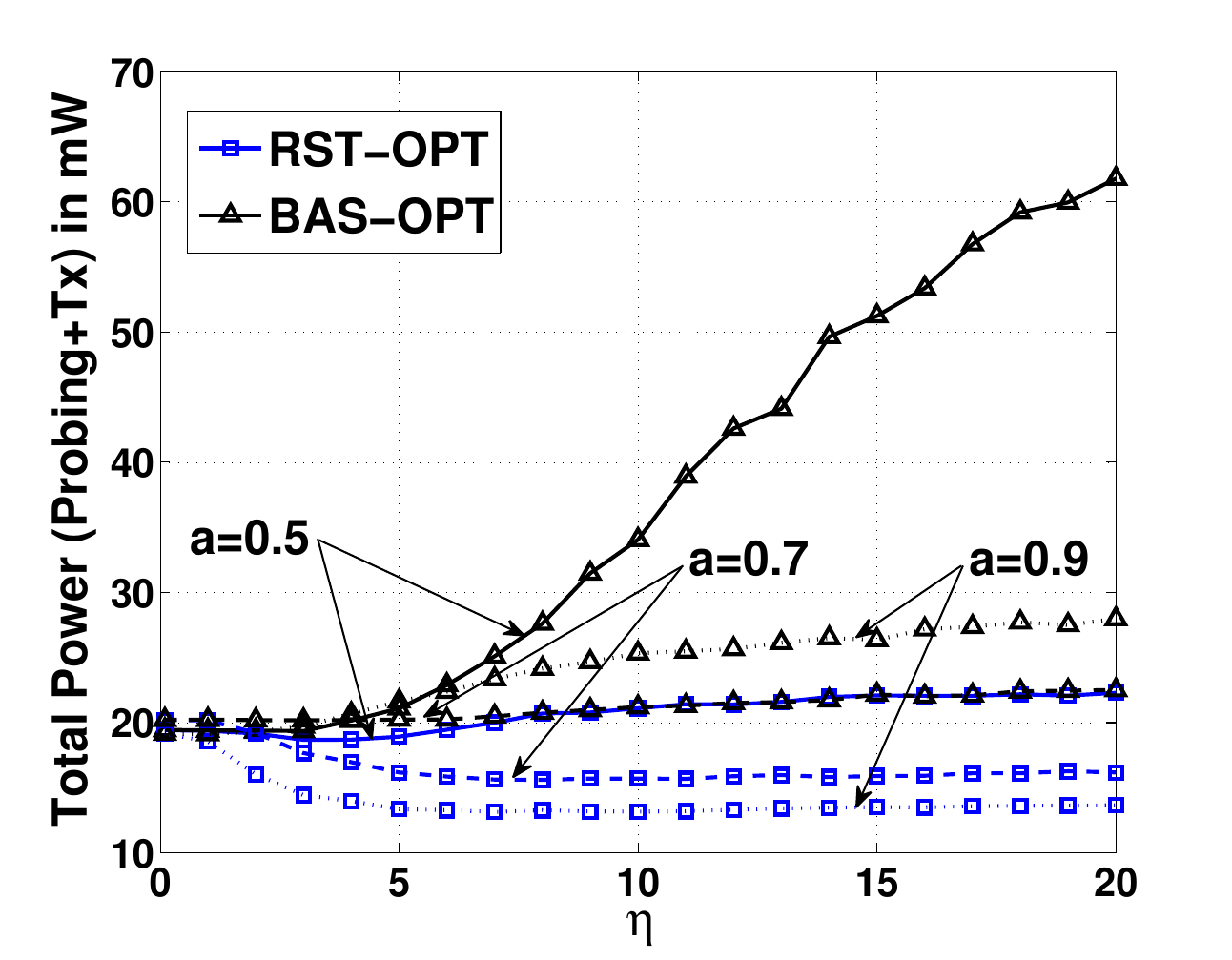}
        \label{end_power_figure} 
}
\vspace{-3mm}
\caption{\label{end_figure} End-to-end performance of RST-OPT and BAS-OPT as functions of $\eta$ for different values of $a$: 
\subref{end_delay_figure} Total delay, and \subref{end_power_figure} Total power.}
\vspace{-4mm}
\end{figure*}

Here, in addition to varying $\eta$, we will also vary the multiplier $a$ 
and study the end-to-end performance. Recall from  (\ref{reward_equn}) that 
$a$ is the multiplier used to trade-off between
progress and power in the reward expression; a larger value of $a$ implies more emphasis on progress. 
The values of all the other parameters, e.g., $r_c$, $\delta$, $\Gamma N_0$, channel gains, etc.,
remain as in our one-hop study.
% As before, the communication radius of each node is set to $r_c=50$ m. 

Now, for a given $\eta$ and $a$,
each node computes the corresponding RST-OPT and BAS-OPT policies assuming a mean 
inter-wake-up time of $\frac{T}{N_i}$ ms,
where $N_i$ is the number of nodes in the forwarding region of node $i$.
In Fig.~\ref{end_figure}, for three different values of $a$ (namely $0.5$, $0.7$, and $0.9$) 
we have plotted, as functions of $\eta$,  the
total delay and the total power (which is the sum of the probing and the transmission powers
incurred at each hop) incurred,  by applying RST-OPT and BAS-OPT policies
at each hop en-route to the sink node. Each data point in Fig.~\ref{end_figure}
is obtained by averaging the respective quantities over $1000$ alarm packets. 

% \noindent
\emph{\textbf{Discussion:}} First, note that both total delay and 
total power incurred by BAS-OPT are increasing with 
$\eta$ for each $a$.  Hence, no favorable trade-off between delay and
power can be obtained using BAS-OPT; it is better to operate BAS-OPT
at a low value of $\eta$, where the total delay incurred is (approximately) 250~ms
while the total power expended is about 20~mW. 
In fact, as $\eta$ decreases to $0$, we see that the performance of all the policies
(i.e., RST-OPT and BAS-OPT for different values of $a$)
converge to these values. This is simply because, whenever $\eta$ is small, since 
(one-hop) delay is valued more, all the policies, at each hop, essentially forward the packet
to the first relay that wakes up.

For RST-OPT, while only a marginal trade-off 
between delay and power can be achieved for $a=0.5$ (see from Fig.~\ref{end_power_figure}
that the corresponding total power decreases only marginally as $\eta$ increases from $1$ to $4$),
but as we increase the value of $a$ to $0.7$ and then to $0.9$,  we see that the total power sharply
decreases with $\eta$. For instance, for $a=0.9$, from Fig.~\ref{end_power_figure} we see that the
total power decrease from $20$~mW to $13$ mW as $\eta$ goes from $0$ to $7$. However, over this range
of $\eta$, total delay increases from $250$ ms to $360$ ms (see the plot corresponding to RST-OPT, $a=0.9$,
from Fig.~\ref{end_delay_figure}). Thus, for these higher values of $a$, trade-off between delay and power can be achieved using
RST-OPT.

% until $\eta=7$ and then saturates
% (see RST-OPT plots corresponding to $a=0.7$ and $a=0.9$ from Fig.~\ref{end_power_figure}).
% Since the corresponding total delay curves are increasing with $\eta$ (see Fig.~\ref{end_delay_figure}),
% trade-off between delay and  power can be achieved using the RST-OPT policy
% for these values of $a$.

Next, for any fixed $\eta$, from Fig.~\ref{end_power_figure} observe that the total power
incurred by RST-OPT is improving (i.e., decreasing) with $a$. This can be understood 
as follows: since a larger $a$ gives less emphasis on power and more 
emphasis on progress in the reward expression (recall (\ref{reward_equn})),
then, although the one-hop transmissions may be of higher power, but
there are fewer hops and hence fewer transmissions, thus resulting in a lower total power.
This observation would suggest that it is advantageous to use RST-OPT by setting
$a=0.9$ rather than $a=0.5$ or $0.7$.
However, from Fig.~\ref{end_delay_figure} we see that the total delay is not decreasing with $a$.
In fact, delay incurred by RST-OPT first decreases as $a$ increases from $0.5$ to $0.7$,
and then increases as $a$ is further increased to $0.9$. Similar is the case for
the plots corresponding to BAS-OPT in Fig.~\ref{end_delay_figure}.
This observation can be understood as follows. When $a=0.5$, since (one-hop) power is valued more, 
the respective forwarding nodes at each hop 
will end up spending more time waiting for a relay which require strictly lesser transmission power. 
% (so that its reward value is good). 
Similarly, when $a=0.9$, larger delay is incurred at each hop since
the forwarding nodes now have to wait for 
relays whose progress value is more (however, since $a=0.9$ results in a fewer hops we see that the delay 
incurred in this case is considerably less than the $a=0.5$ case). On the other hand,  
when $a=0.7$, since a relatively fair trade-off between progress and power exists, 
the waiting time at each hop is 
reduced because now any relay with a moderate progress and  a moderate transmission power would suffice.

% increases as we increase $a$ from $0.7$ to $0.9$.
% This is because, when $a=0.9$, the respective forwarding nodes at each hop 
% will end up spending more time 
% while 
% Also, $a=0.9$ would result in a large number of small-length hops (since progress is not valued much at each hop),
% further adding to the end-to-end delay. 
 
The above argument is precisely the reason as to why the total power incurred by BAS-OPT 
behaves as in Fig.~\ref{end_power_figure}:
% (see Fig.~\ref{end_power_figure}) increases as we increase $a$ from $0.7$ to $0.9$:
when $a=0.5$ or $0.9$, each forwarder, in the process of waiting 
for a relay whose transmission power requirement is low or progress is large, respectively, 
will end up probing more relays. RST-OPT benefits over BAS-OPT
here by  probing  only good relays at each hop, thus yielding a lower total power. 

Finally, summarizing our end-to-end results, we see that no trade-off between 
delay and power can be achieved by the naive BAS-OPT policy, while RST-OPT 
achieves such a trade-off (by varying $\eta$) for $a=0.7$ or $0.9$. Further, 
for a fixed $\eta$, favorable trade-off between delay and power can be obtained 
by varying $a$. For instance, from Fig.~\ref{end_figure} we see that when $\eta=7$,
moving from $a=0.7$ to $0.9$ will result in a power saving of about $3$ mW
while increasing the end-to-end delay by $130$ ms. Thus, depending on the application requirement 
(i.e., delay or power sensitive application) one has to appropriately choose the values of 
$\eta$ and $a$.

\section{Conclusion}
\label{prb_conclusion_section}
Motivated by the problem of end-to-end geographical forwarding in a sleep-wake cycling wireless sensor network,
we formulated a decision problem of choosing a next-hop relay node when a set of potential relay
neighbors are sequentially waking up in time. A power cost is incurred for probing a relay to
learn its channel gain.  
We first studied a restricted class of policies where a policy's decision is 
based only on, in addition to the best probed relay, the best 
unprobed relays (instead of all the unprobed 
relays). We characterized the optimal policy in terms of stopping sets. 
Our first main result (Theorem~\ref{threshold_nature_lemma}) was to show that the 
stopping sets are threshold based. Then we proved that the stopping sets are stage independent
(Theorem~\ref{stopping_sets_equal_theorem} and \ref{stopping_sets_l_equal_theorem}).
A discussion on the more general unrestricted class of policies was provided.
We conducted numerical work to compare the performances of the restricted optimal (RST-OPT)
and the global optimal (GLB-OPT) policies. We observed that the performance of
RST-OPT is close to that of GLB-OPT. We also conducted simulation experiments to 
study the end-to-end performance of RST-OPT. Finally, it is worth noting that our work being a variant of the
asset selling problem, can, in general, find application
wherever the problem of resource-selection occurs,
when a collection of resources are sequentially arriving.

\balance
\bibliographystyle{IEEEtran}
\bibliography{IEEEabrv,naveen-kumar12relay-selection-with-channel-probing}

\onecolumn
\appendices

\section{Proof of Lemma~\ref{cost_to_go_ordering_lemma}}
\label{cost_to_go_ordering_lemma_appendix}
For convenience, here in the appendix, we will recall the respective Lemma/Theorem 
statements before providing their
proofs. Now, before proceeding to the proof of Lemma~\ref{cost_to_go_ordering_lemma},
we will require the following result first.

\begin{lemma}
\label{cost_decreasing_lemma}
For $k=1,2,\cdots,N$, $J_k(b)$ and $J_k(b,F_\ell)$ are decreasing in $b$.
\end{lemma}
\begin{IEEEproof}
Proof is by induction. For stage $N$ we know that
$J_N(b)=-\eta b$, and hence is decreasing in $b$. Also, recalling  $J_N(b,F_\ell)$ from  (\ref{cost_to_go_stageN_statebF_equn}):
\begin{eqnarray*}
J_N(b,F_\ell)=\min\Big\{-\eta b, \eta\delta-\eta\mathbb{E}_\ell\Big[\max\{b,R_\ell\}\Big]\Big\},
\end{eqnarray*}
it is easy to see that $J_N(b,F_\ell)$
is also decreasing in $b$. Thus, the monotonicity properties holds for stage $N$. 
Now, suppose $J_{k+1}(b)$ and $J_{k+1}(b,F_\ell)$ (for all $F_\ell$)
are decreasing in $b$ for some $k+1=2,3,\cdots,N$, then we will show that the 
result holds for stage $k$ as well.

First, recall the expressions of $J_k(b)$ and $J_k(b,F_\ell)$ 
(from (\ref{Jk_b_equn}) and (\ref{Jk_bF_equn}) respectively):
$J_k(b)=\min\Big\{-\eta b, C_k(b)\Big\}$ and
$J_k(b,F_\ell)=\min\Big\{-\eta b, P_k(b,F_\ell), C_k(b,F_\ell)\Big\}$.
Thus to complete the proof it is sufficient to show that $C_k(b)$, $P_k(b,F_\ell)$ and $C_k(b,F_\ell)$ 
are decreasing in $b$.
From the induction hypothesis, it is easy to see that $C_k(b)$ 
(in (\ref{continuing_cost_b_equn})) 
is decreasing in $b$, so that we obtain $J_k(b)$ is decreasing in $b$. 
Now that we have established $J_k(b)$ is decreasing in $b$,
it will immediately follow  that the 
probing cost $P_k(b,F_\ell)$ (in (\ref{probing_cost_bF_equn})) is decreasing in $b$.
Finally, again using the induction argument, observe that 
$\min\Big\{J_{k+1}(b,F_\ell),J_{k+1}(b,F_{L_{k+1}})\Big\}$
is decreasing in $b$ so that the continuing cost $C_k(b,F_\ell)$ 
(in (\ref{continuing_cost_bF_equn})) is also decreasing.
\end{IEEEproof}

\verb11

We are now ready to prove Lemma~\ref{cost_to_go_ordering_lemma}.

\emph{Lemma}~\ref{cost_to_go_ordering_lemma}\emph{:}
\begin{enumerate}
\item[(i)] For $k=1,2,\cdots,N-1$, if $F_\ell\ge_{st}F_u$ then $C_k(b,F_\ell)\le C_k(b,F_u)$, 
(including $k=N$) $P_k(b,F_\ell)\le P_k(b,F_u)$ and $J_{k}(b,F_\ell)\le J_{k}(b,F_u)$.
\item[(ii)] For $k=1,2,\cdots,N-2$, $C_k(b)\le C_{k+1}(b)$ and $C_k(b,F_\ell)\le C_{k+1}(b,F_\ell)$, 
(including $k=N-1$) $P_k(b,F_\ell)\le P_{k+1}(b,F_\ell)$ and $J_k(b,F_\ell)\le J_{k+1}(b,F_\ell)$.
\end{enumerate}

\begin{IEEEproof}[Proof of Part-(i)]
Consider stage $N$ and recall the expression for the optimal cost-to-go function $J_N(b,F_\ell)$ from (\ref{cost_to_go_stageN_statebF_equn}):
\begin{eqnarray*}
J_N(b,F_\ell) 
&=&\min\Big\{-\eta b, P_N(b,F_\ell)\Big\}\\
&=&\min\Big\{-\eta b, \eta\delta-\eta\mathbb{E}_\ell\Big[\max\{b,R_\ell\}\Big]\Big\}.
\end{eqnarray*}
Since the function $f(r)=\max\{b,r\}$ is increasing in $r$,
using the definition of stochastic ordering (Definition~\ref{stochastic_ordering_definition}) 
we can write 
\begin{eqnarray*}
\mathbb{E}_\ell\Big[\max\{b,R_\ell\}\Big]\ge\mathbb{E}_u\Big[\max\{b,R_u\}\Big],
\end{eqnarray*}
so that we have $P_N(b,F_\ell)\le P_N(b,F_u)$ and $J_N(b,F_\ell)\le J_N(b,F_u)$. Thus, the result holds for stage $N$.

Now suppose the result is true for some $k+1=2,3,\cdots,N$. 
% We will show that the cost of probing and the cost
% of continuing  (in (\ref{probing_cost_bF_equn}) and (\ref{continuing_cost_bF_equn}), respectively),
% both satisfy 
% $P_k(b,F_\ell)\le P_k(b,F_u)$ and $C_k(b,F_\ell)\le C_k(b,F_u)$. 
% The result can then be easily completed by recalling (from (\ref{Jk_bF_equn})) that 
% $J_k(b,F_\ell)=\min\Big\{-\eta b, P_k(b,F_\ell), C_k(b,F_\ell)\Big\}$.
From Lemma~\ref{cost_decreasing_lemma} we know that $J_k(b)$ is decreasing in $b$, which would
imply that, for any $b$, the function $f(r)=J_k(\max\{b,r\}))$ is decreasing in $r$.
Again, using the definition of  stochastic ordering (in 
Definition~\ref{stochastic_ordering_definition})
we can conclude that 
\begin{eqnarray*}
\mathbb{E}_\ell\Big[J_k(\max\{b,R_\ell\})\Big]\le\mathbb{E}_u\Big[J_k(\max\{b,R_u\})\Big],
\end{eqnarray*}
so that $P_k(b,F_\ell)\le P_k(b,F_u)$ (see (\ref{probing_cost_bF_equn})).
Next, from the induction argument we know that $J_{k+1}(b,F_\ell)\le J_{k+1}(b,F_u)$ so that
\begin{eqnarray*}
\min\Big\{J_{k+1}(b,F_\ell),J_{k+1}(b,F_{L_{k+1}})\Big\} 
\le \min\Big\{J_{k+1}(b,F_u),J_{k+1}(b,F_{L_{k+1}})\Big\}.
\end{eqnarray*}
Therefore, we also have $C_k(b,F_\ell)\le C_k(b,F_u)$ (see (\ref{continuing_cost_bF_equn})).
The proof can now be easily completed by recalling (from (\ref{Jk_bF_equn})) that 
$J_k(b,F_\ell)=\min\Big\{-\eta b, P_k(b,F_\ell), C_k(b,F_\ell)\Big\}$.\\

\emph{Proof of Part-(ii):}
This result is very intuitive, since with more number of stages 
to go, one is expected to accrue a lower cost. However, we prove it here 
for completeness. Again the proof is by induction. For stage $N-1$ we easily have, 
\begin{eqnarray*}
J_{N-1}(b)
&=&\min\Big\{-\eta b, C_k(b)\Big\}\\
&\le& -\eta b \\
&=& J_N(b).
\end{eqnarray*}
Next, consider a state of the form $(b,F_\ell)$.
The cost of probing $P_{N-1}(b,F_\ell)$ can be bounded as follows:
\begin{eqnarray*}
P_{N-1}(b,F_\ell)
&=&\eta\delta+\mathbb{E}_\ell\Big[J_{N-1}(\max\{b,R_\ell\})\Big]\\
&\overset{*}{\le}&\eta\delta+\mathbb{E}_\ell\Big[J_{N}(\max\{b,R_\ell\})\Big]\\
&\overset{o}{=}&\eta\delta-\eta\mathbb{E}_\ell\Big[\max\{b,R_\ell\}\Big]\\
&\overset{\dagger}{=}& P_N(b,F_\ell),
\end{eqnarray*}
where, to obtain $*$ we have used, $J_{N-1}(b)\le J_{N}(b)$ (which we had just proved),
$o$ is because $J_N(b)=-\eta b$ for all $b$, and  $\dagger$ is simply obtained by recalling 
the expression for $P_N(b,F_\ell)$.
Using the above inequality in the following, we obtain
\begin{eqnarray*}
J_{N-1}(b,F_\ell)
&=&\min\Big\{-\eta b, P_{N-1}(b,F_\ell), C_{N-1}(b,F_\ell)\Big\}\\
&\le&\min\Big\{-\eta b, P_{N-1}(b,F_\ell)\Big\}\\
&\le&\min\Big\{-\eta b, \eta\delta-\eta\mathbb{E}_\ell\Big[\max\{b,R_\ell\}\Big]\Big\}\\
&=& J_N(b,F_\ell).
\end{eqnarray*}
Thus we have shown the result for stage $N-1$. 

Suppose the result is true for some stage $k+1=2,3,\cdots,N-1$.
i.e., $J_{k+1}(b)\le J_{k+2}(b)$ and $J_{k+1}(b,F_\ell)\le J_{k+2}(b,F_\ell)$ (for all $F_\ell$), then,
using the induction hypothesis, the cost of continuing, $C_k(b)$, can be bounded as
\begin{eqnarray*}
C_k(b)
&=&\tau+\mathbb{E}_L\Big[J_{k+1}(b,F_{k+1})\Big]\\
&\le&\tau+\mathbb{E}_L\Big[J_{k+2}(b,F_{k+2})\Big]\\
&=&C_{k+1}(b).
\end{eqnarray*}
Thus, we have $J_k(b)\le J_{k+1}(b)$ (see (\ref{Jk_b_equn})). 
% Recall the expression of $J_k(b,F_\ell)$ from (\ref{Jk_bF_equn}): 
% $J_k(b,F_\ell)=\min\Big\{-\eta b, P_k(b,F_\ell), C_k(b,F_\ell)\Big\}$.
% To show that $J_k(b,F_\ell)\le J_{k+1}(b,F_\ell)$,
% it is sufficient to show that $P_k(b,F_\ell)\le P_{k+1}(b,F_\ell)$ and 
% $C_k(b,F_\ell)\le C_{k+1}(b,F_\ell)$. 
Next, consider the probing cost,
\begin{eqnarray*}
P_k(b,F_\ell)
&=&\eta\delta+\mathbb{E}_\ell\Big[J_{k}(\max\{b,R_\ell\})\Big]\\
&\overset{*}{\le}&\eta\delta+\mathbb{E}_\ell\Big[J_{k+1}(\max\{b,R_\ell\})\Big]\\
&=&P_{k+1}(b,F_\ell)
\end{eqnarray*}
where, to obtain  $*$ we have used $J_k(b)\le J_{k+1}(b)$ which we have already shown.
The cost of continuing can be similarly bounded:
\begin{eqnarray*}
C_k(b,F_\ell)
&=&\tau + \mathbb{E}_{L}\Big[\min\{J_{k+1}(b,F_\ell),J_{k+1}(b,F_{L_{k+1}})\}\Big]\\
&\overset{*}{\le}&\tau + \mathbb{E}_{L}\Big[\min\{J_{k+2}(b,F_\ell),J_{k+2}(b,F_{L_{k+2}})\}\Big]\\
&=&C_{k+1}(b,F_\ell),
\end{eqnarray*}
where $*$ is due to the induction hypothesis and 
the fact that location random variables, $L_{k+1}$
and $L_{k+2}$, are identically distributed. Finally, using the above inequalities in the 
expression of $J_k(b,F_\ell)$ \Big(recall (\ref{Jk_bF_equn}); $J_k(b,F_\ell)=\min\Big\{-\eta b, P_k(b,F_\ell), C_k(b,F_\ell)\Big\}$\Big),
we obtain $J_k(b,F_\ell)\le J_{k+1}(b,F_\ell)$, thus completing the proof.
\end{IEEEproof}

\section{Proof of Lemma~\ref{costs_bounded_lemma}}
\label{threshold_nature_lemma_appendix}
The following simple property about the $\min$-operator will be useful
while proving Lemma~\ref{costs_bounded_lemma}.
\begin{lemma}
\label{min_operator_lemma}
If $x_1,x_2,\cdots,x_j$ and $y_1,y_2,\cdots,y_j$ in $\Re$, are such that, $x_i-y_i\le x_1-y_1$ for 
all $i=1,2,\cdots,j$, then 
\begin{eqnarray}
\label{min_operator_equn}
\min\{x_1,x_2,\cdots, x_j\} -\min\{y_1,y_2,\cdots, y_j\}\le x_1-y_1
\end{eqnarray}
\end{lemma}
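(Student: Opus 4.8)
The plan is to reduce the comparison of the two minima to a single, well-chosen index. Before starting, I would flag that the right-hand side of the displayed inequality in (\ref{min_operator_equn}) should read $x_1-y_1$ rather than $x_1-x_2$: the quantity $x_1-y_1$ is exactly the common upper bound appearing in the hypothesis $x_i-y_i\le x_1-y_1$, whereas the literal inequality with $x_1-x_2$ already fails for $j=2$, $x=(0,5)$, $y=(0,5)$ (there the left side is $0$ but $x_1-x_2=-5$). So the statement I would actually prove is
\[
\min_{1\le i\le j} x_i - \min_{1\le i\le j} y_i \le x_1-y_1 .
\]

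The key step is to pick an index $p$ attaining the left-hand minimum, i.e.\ $x_p=\min_{1\le i\le j} x_i$. Since $\min_{1\le i\le j} y_i \le y_p$, subtracting this from $x_p$ gives $\min_i x_i-\min_i y_i \le x_p-y_p$. Applying the hypothesis at $i=p$, namely $x_p-y_p\le x_1-y_1$, and chaining the two inequalities yields the claim. The whole argument is a two-step chain, so there is no genuine obstacle; the only point requiring care is the direction of $\min_i y_i\le y_p$, which is precisely what lets the difference of minima be bounded \emph{above} (rather than below) by $x_p-y_p$.

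For completeness I would also indicate the intended downstream use in the proof of Lemma~\ref{costs_bounded_lemma}: there one writes the two Bellman $\min$-expressions evaluated at $b_1$ and at $b_2$ as the tuples $\{x_i\}$ and $\{y_i\}$, placing the stopping-cost term $-\eta b$ in position $1$ so that $x_1-y_1=\eta(b_2-b_1)$. The substantive work in that later proof is then to verify term-by-term that every other gap $x_i-y_i$ is at most $\eta(b_2-b_1)$ (using the monotonicity of the cost-to-go functions from Lemma~\ref{cost_to_go_ordering_lemma}), after which the present lemma mechanically converts the term-wise bounds into the desired bound on the difference of the minima.
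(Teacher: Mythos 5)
You are right that the right-hand side of (\ref{min_operator_equn}) is a typo and should read $x_1-y_1$; the paper's own proof indeed concludes with the bound $x_1-y_1$, and your counterexample to the literal statement is valid. However, your proof of the corrected statement contains a genuine error in the choice of index. You set $p$ to be a minimizer of the $x$'s, i.e.\ $x_p=\min_i x_i$, and then claim that $\min_i y_i\le y_p$ gives $\min_i x_i-\min_i y_i\le x_p-y_p$. The inequality goes the other way: subtracting the \emph{smaller} quantity $\min_i y_i$ yields the \emph{larger} difference, so $x_p-\min_i y_i\ge x_p-y_p$, which is a lower bound, not the upper bound you need. Concretely, take $j=2$, $x=(0,-10)$, $y=(-20,0)$: the hypothesis holds since $x_2-y_2=-10\le 20=x_1-y_1$, and the left-hand side is $-10-(-20)=10$, yet your intermediate claim would require $10\le x_p-y_p=x_2-y_2=-10$, which is false. (The final conclusion $10\le x_1-y_1=20$ is of course still true.)

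The fix is to anchor the argument at a minimizer of the $y$'s instead, which is exactly what the paper does: let $y_q=\min_i y_i$; then $\min_i x_i\le x_q$ gives $\min_i x_i-\min_i y_i\le x_q-y_q$, and the hypothesis applied at $i=q$ yields $x_q-y_q\le x_1-y_1$. Your surrounding discussion --- the reduction to a single index, the identification of the typo, and the description of how the lemma is deployed in the proof of Lemma~\ref{costs_bounded_lemma} with the stopping cost $-\eta b$ placed in position $1$ --- is all correct; only the selection of the anchoring index needs to be swapped from $\arg\min_i x_i$ to $\arg\min_i y_i$.
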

\begin{IEEEproof}
Suppose $\min\{y_1,y_2,\cdots, y_j\}=y_i$, for some $1\le i\le j$, then the LHS of (\ref{min_operator_equn})
can be written as,
\begin{eqnarray*}
LHS = \min\{x_1,x_2,\cdots, x_j\} - y_i
\le x_i - y_i.
\end{eqnarray*}
The proof is complete by recalling that we are given, $x_i-y_i\le x_1-y_1$.
\end{IEEEproof}

\verb11

\emph{Lemma}~\ref{costs_bounded_lemma}\emph{:}
For $k=1,2,\cdots,N-1$ (for part~(ii), $k=1,2,\cdots,N$), for any $F_\ell$, 
and for $b_2>b_1$ we have
\begin{enumerate}
\item[(i)] $C_k(b_1)-C_k(b_2)\le\eta(b_2-b_1)$,
\item[(ii)] $P_k(b_1,F_\ell)-P_k(b_2,F_\ell)\le\eta(b_2-b_1)$ 
\item[(iii)] $C_k(b_1,F_\ell)-C_k(b_2,F_\ell)\le\eta(b_2-b_1)$. 
\end{enumerate}

\begin{IEEEproof}
% [Proof of Lemma~\ref{costs_bounded_lemma}]
Since $J_N(b)$ is $-\eta b$ we already have, for stage $N$,
$J_N(b_1)-J_N(b_2)=\eta(b_2-b_1)$.
Also, for a given distribution $F_\ell$ and for $b_2>b_1$,
\begin{eqnarray*}
P_N(b_1,F_\ell)-P_N(b_2,F_\ell) 
&=&\eta\mathbb{E}_\ell\Big[\max\{b_2,R_\ell\}-\max\{b_1,R_\ell\}\Big]\\
&\overset{*}{\le}& \eta(b_2-b_1),
\end{eqnarray*}
where to obtain $*$, first consider all the three cases that are possible: 
(1) $R_\ell\le b_1< b_2$, (2) $b_1<R_\ell<b_2$, and
(3) $b_1<b_2\le R_\ell$, and then note that in all these cases, 
$\Big(\max\{b_2,R_\ell\}-\max\{b_1,R_\ell\}\Big)$, is bounded above by $b_2-b_1$. 
Now, since $J_N(b,F_\ell)=\min\Big\{-\eta b,P_N(b,F_\ell)\Big\}$,
the above inequality along with Lemma~\ref{min_operator_lemma} will yield,
$J_N(b_1,F_\ell)-J_N(b_2,F_\ell)\le\eta(b_2-b_1)$.

Suppose for some stage $k+1=1,2,\cdots,N$ we have $J_{k+1}(b_1)-J_{k+1}(b_2)\le\eta(b_2-b_1)$ and 
$J_{k+1}(b_1,F_\ell)-J_{k+1}(b_2,F_\ell)\le\eta(b_2-b_1)$ for all $b_2>b_1$, and for all $F_\ell$. 
Then we will show that
all the inequalities listed in the lemma will hold for stage $k$ as well.
First, a simple application of the induction hypothesis will yield,
\begin{eqnarray*}
C_k(b_1)-C_k(b_2) 
&=& \mathbb{E}_L\Big[J_{k+1}(b_1,F_{L_k})-J_{k+1}(b_2,F_{L_k})\Big]\\
&\le&\eta(b_2-b_1).
\end{eqnarray*}
Since $J_k(b)=\min\Big\{-\eta b, C_k(b)\Big\}$, the above inequality along with 
 Lemma~\ref{min_operator_lemma} gives, $J_k(b_1)-J_k(b_2)\le\eta(b_2-b_1)$, for any $b_2>b_1$.
Using this we can write
\begin{eqnarray}
\label{probing_diff_equn}
P_k(b_1,F_\ell)-P_k(b_2,F_\ell) 
&=& \mathbb{E}_\ell\Big[J_k(\max\{b_1,R_\ell\})-J_k(\max\{b_2,R_\ell\})\Big]\nonumber\\
&\le&\mathbb{E}_\ell\Big[\eta\Big(\max\{b_2,R_\ell\}-\max\{b_1,R_\ell\}\Big)\Big]\nonumber\\
&\le&\eta(b_2-b_1),
\end{eqnarray}
where the last inequality is again by considering all the three regions where $R_\ell$ can lie.

To show part~(iii), define $\mathcal{L}_\ell$ as the set of all distributions that are 
stochastically greater than
$\ell$, i.e., $\mathcal{L}_\ell=\Big\{F_t\in\mathcal{F}: F_t\ge_{st} F_\ell\Big\}$. 
Let $\mathcal{L}_\ell^c$ denote the set of all the remaining distributions, i.e.,
$\mathcal{L}_\ell^c=\mathcal{F}\setminus\mathcal{L}_\ell$.
From  Lemma~\ref{F_total_order_lemma},
where we have shown that $\mathcal{F}$ is totally stochastically ordered
(see Definition~\ref{total_stochastic_ordering_definition}),
it follows that $\mathcal{L}_\ell^c$ contains all distributions in $\mathcal{F}$ which are
stochastically smaller than $F_\ell$. 
Recalling the expression for $C_k(b,F_\ell)$ from (\ref{continuing_cost_bF_equn}), 
the difference in the cost of continuing can now be bounded as follows:
\begin{eqnarray}
\label{continue_diff_equn}
C_k(b_1,F_\ell)-C_k(b_2,F_\ell) 
&=& \int_\mathcal{F} \Big(\min\{J_{k+1}(b_1,F_\ell),J_{k+1}(b_1,F_t)\} \nonumber\\
&&\hspace{2cm}-\min\{J_{k+1}(b_2,F_\ell),J_{k+1}(b_2,F_t)\}\Big)dL(t) \nonumber\\
&\overset{*}{=}& \int_{\mathcal{L}_\ell} (J_{k+1}(b_1,F_t)-J_{k+1}(b_2,F_t))dL(t) \nonumber \\
&& \hspace{2cm}+\int_{\mathcal{L}_\ell^c} (J_{k+1}(b_1,F_\ell)-J_{k+1}(b_2,F_\ell))dL(t). \nonumber\\
&\overset{o}{\le}& \eta (b_2-b_1).
\end{eqnarray}
In the above derivation, $*$ is obtained by using  Lemma~\ref{cost_to_go_ordering_lemma}-(i),
and $o$ is simply by applying the induction argument. 
% Now, a straight forward application of the induction argument gives the desired result.
Since $J_k(b,F_\ell)=\min\Big\{-\eta b, P_k(b,F_\ell),C_k(b,F_\ell)\Big\}$,
using (\ref{probing_diff_equn}) and (\ref{continue_diff_equn})
along with Lemma~\ref{min_operator_lemma},
we obtain, $J_{k}(b_1,F_\ell)-J_{k}(b_2,F_\ell)\le\eta(b_2-b_1)$, thus completing 
the induction argument. 
\end{IEEEproof}

\section{Proof of Lemma~\ref{F_total_order_lemma}}
\label{F_total_order_lemma_appendix}

\emph{Lemma}~\ref{F_total_order_lemma}\emph{:}
The set of reward distributions $\mathcal{F}$ in (\ref{distribution_set}), is 
totally stochastically ordered with a minimum distribution. 

\begin{IEEEproof}
Recall the reward expression from (\ref{reward_equn}),
\begin{eqnarray*}
R_{\ell}=\frac{Z_{\ell}^a}{P_{\ell}^{(1-a)}}=\frac{Z_{\ell}^a}{(\Gamma' D_{\ell}^\xi)^{(1-a)}} G_{\ell}^{(1-a)}.
\end{eqnarray*}
The distribution, $F_\ell$,  of $R_\ell$ can be written as,
\begin{eqnarray}
\label{dist_kappa_equn}
F_\ell(r) 
&=& \mathbb{P}(R_\ell\le r) \nonumber \\
&=& \mathbb{P}\left(\frac{Z_{\ell}^a}{(\Gamma' D_{\ell}^\xi)^{(1-a)}} G_{\ell}^{(1-a)}\le r\right)\nonumber\\
&=& \mathbb{P}\left(G_{\ell}^{(1-a)}\le {\kappa}_{\ell} r\right),
\end{eqnarray}
where $\kappa_\ell=\frac{(\Gamma' D_{\ell}^\xi)^{(1-a)}}{Z_{\ell}^a}$.

Let $\ell,u$ be any two locations in $\mathcal{L}$. Since the rewards are non-negative, we have
$F_\ell(r) = F_u(r)=0$ for $r<0$. Hence, we  only need to consider the case $r\ge0$.
Now, given $\ell,u\in\mathcal{L}$, either $\kappa_\ell\le\kappa_u$ or $\kappa_\ell>\kappa_u$.
Thus we have, either $\kappa_\ell r\le\kappa_u r$ or  $\kappa_\ell r\ge\kappa_u r$, for every $r\ge0$.
Finally, since $G_\ell$ and $G_u$ are identically distributed, we have,
either $F_\ell(r)\le F_u(r)$ or $F_\ell(r)\ge F_u(r)$, for all $r$, so that $F_\ell$ and $F_u$ are 
stochastically ordered (recall Definition~\ref{stochastic_ordering_definition}).

To show that there exists a minimum distribution, first note that $\kappa_\ell$ as a function of 
$\ell\in\mathcal{L}$ is continuous. Then, since we had assumed that $\mathcal{L}$ is compact
(closed and bounded), there exists an $m\in\mathcal{L}$ where the maximum is achieved, i.e., 
$\kappa_\ell\le\kappa_m$ for all $\ell\in\mathcal{L}$. Again, since the gains $G_\ell$ 
and $G_m$ are identically distributed, from (\ref{dist_kappa_equn}) it follows
that  $F_\ell\ge_{st}F_m$ for all $\ell\in\mathcal{L}$, so that $F_m$ is the minimum distribution.
\end{IEEEproof}

\section{Proof of Lemma~\ref{equal_costs_lemma}}
\label{equal_costs_lemma_appendix}

\emph{Lemma}~\ref{equal_costs_lemma}\emph{:}
Suppose $\mathcal{S}_k\subseteq\mathcal{Q}_k^u$, for some $F_u$, and some $k=1,2,\cdots,N-1$. 
Then for every $b\in\mathcal{S}_k$ we have $J_k(b,F_u)=J_N(b,F_u)$.

\begin{IEEEproof}
Fix a $b\in\mathcal{S}_k\subseteq\mathcal{Q}_k^u$. Then, 
\begin{eqnarray*}
J_k(b,F_u)
&=&\min\Big\{-\eta b, P_k(b,F_u), C_k(b,F_u)\Big\}\\
&\overset{*}{=}&\min\Big\{-\eta b, P_k(b,F_u)\Big\}\\
&\overset{o}{=}&\min\Big\{-\eta b, \eta\delta+\mathbb{E}_u\Big[J_{k}(\max\{b,R_u\})\Big]\Big\}\nonumber\\
&\overset{\dagger}{=}&\min\Big\{-\eta b, \eta\delta-\eta \mathbb{E}_u\Big[\max\{b,R_u\}\Big]\Big\}\nonumber\\
&=& J_N(b,F_u).
\end{eqnarray*}
In the above derivation, $*$ is because, $b$ being in $\mathcal{Q}_k^u$,
at $(b,F_u)$ it is optimal to either {stop} 
or {probe} (recall (\ref{optimal_stopping_probing_equn})). $o$ is simply obtained by substituting for $P_k(b,F_u)$
from (\ref{probing_cost_bF_equn}). Further, after probing the new state, 
$\max\{b,R_u\}\ge b$, is also in $\mathcal{S}_k$
(from Theorem~\ref{threshold_nature_lemma}) so that it is optimal to {stop} after probing.
This observation yields $\dagger$. Finally, the last equality is obtained by 
recalling the expression of $J_N(b,F_u)$ from 
(\ref{cost_to_go_stageN_statebF_equn}).
\end{IEEEproof}

\section{Proof of Lemma~\ref{all_distributions_corollary}}
\label{all_distributions_corollary_appendix}
As discussed in the outline of the proof of Lemma~\ref{all_distributions_corollary},
the result immediately follows once we show \emph{Step\ 1} and \emph{Step\ 2}.
First we will formally state and prove \emph{Step\ 1}.\\

\begin{lemma}
\label{stopping_probing_set_lemma}
Suppose $F_u$ is a distribution such that for all 
$k=1,2,\cdots,N-1$,  $\mathcal{S}_k\subseteq\mathcal{Q}_k^u$.
Then for any distribution $F_\ell\ge_{st}F_u$ we have $\mathcal{S}_k\subseteq\mathcal{Q}_k^\ell$.
\end{lemma}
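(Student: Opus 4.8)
The plan is to carry out a backward induction on $k$, proving the slightly stronger statement that $\mathcal{S}_k\subseteq\mathcal{Q}_k^t$ for \emph{every} distribution $F_t\ge_{st}F_u$ (which is exactly what the lemma asserts for $F_\ell$). The starting point is a reduction that exploits the threshold structure already established. For any $b\in\mathcal{S}_k$, Theorem~\ref{threshold_nature_lemma} guarantees $\max\{b,R_\ell\}\in\mathcal{S}_k$ for every realisation of $R_\ell$, so after probing it is optimal to stop and $J_k(\max\{b,R_\ell\})=-\eta\max\{b,R_\ell\}$. Hence $cp_k(b,F_\ell)=\eta\delta-\eta\mathbb{E}_\ell[\max\{b,R_\ell\}]$ and, comparing with \eqref{cost_to_go_stageN_statebF_equn}, $\min\{-\eta b,\,cp_k(b,F_\ell)\}=J_N(b,F_\ell)$. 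Thus, by the definition \eqref{optimal_stopping_probing_equn}, membership $b\in\mathcal{Q}_k^\ell$ is equivalent to the single inequality $J_N(b,F_\ell)\le cc_k(b,F_\ell)$, and the hypothesis $b\in\mathcal{S}_k\subseteq\mathcal{Q}_k^u$ (applied through the same reduction for $F_u$) furnishes $J_N(b,F_u)\le cc_k(b,F_u)$.

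The heart of the argument is to rewrite the continuing cost in closed form. Fix $b\in\mathcal{S}_k$; since $\mathcal{S}_k\subseteq\mathcal{S}_{k+1}$ we also have $b\in\mathcal{S}_{k+1}$. By the induction hypothesis at stage $k+1$, $\mathcal{S}_{k+1}\subseteq\mathcal{Q}_{k+1}^t$ for every $F_t\ge_{st}F_u$, so Lemma~\ref{equal_costs_lemma} gives $J_{k+1}(b,F_t)=J_N(b,F_t)$ for all such $F_t$. In the continuing cost \eqref{continuing_cost_bF_equn} the two competing distributions are the held $F_\ell$ and the fresh arrival $F_{L_{k+1}}$; by total stochastic ordering (Assumption~\ref{total_stochastic_ordering_definition}) these are always comparable, and when $F_{L_{k+1}}<_{st}F_\ell$ the minimiser is $F_\ell$ (which is $\ge_{st}F_u$), while when $F_{L_{k+1}}\ge_{st}F_\ell$ the minimiser $F_{L_{k+1}}$ is itself $\ge_{st}F_u$. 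In either case the relevant cost-to-go is evaluated at a distribution $\ge_{st}F_u$ and may be replaced by $J_N$, yielding
\begin{eqnarray*}
cc_k(b,F_\ell) &=& \tau+\mathbb{E}_A\Big[\min\{J_N(b,F_\ell),\,J_N(b,F_{L_{k+1}})\}\Big],
\end{eqnarray*}
and the identical formula with $F_\ell$ replaced by $F_u$. At the base stage $k=N-1$ the continuing cost already has this form directly from \eqref{continuing_cost_bF_equn}, so no induction hypothesis is needed there.

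With both continuing costs in this form, it remains to compare them. Writing $a=J_N(b,F_\ell)$ and $c=J_N(b,F_u)$, we have $a\le c$ because $F_\ell\ge_{st}F_u$ and $J_N(b,\cdot)$ is decreasing in the stochastic order (Lemma~\ref{cost_to_go_ordering_lemma}-(ii)). Since $t\mapsto\min\{t,X\}$ is nondecreasing and $1$-Lipschitz, $\min\{c,X\}-\min\{a,X\}\le c-a$ for every value of $X=J_N(b,F_{L_{k+1}})$; taking expectations gives $cc_k(b,F_u)-cc_k(b,F_\ell)\le c-a$, i.e.\ $J_N(b,F_\ell)-J_N(b,F_u)\le cc_k(b,F_\ell)-cc_k(b,F_u)$. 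Adding the inequality $J_N(b,F_u)\le cc_k(b,F_u)$ obtained in the first paragraph then yields $J_N(b,F_\ell)\le cc_k(b,F_\ell)$, which by the reduction is precisely $b\in\mathcal{Q}_k^\ell$, completing the induction.

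I expect the main obstacle to be the middle step: justifying the closed-form reduction of $cc_k$ to an expression involving only the last-stage cost $J_N$. This is where all the hypotheses must be marshalled simultaneously --- the threshold structure to simplify probing, total stochastic ordering to make the fresh arrival comparable to the held distribution, and the induction hypothesis (through Lemma~\ref{equal_costs_lemma}) to collapse $J_{k+1}$ to $J_N$ on exactly the set $\{F_t:F_t\ge_{st}F_u\}$ of distributions that can ever realise the minimum. Once this reduction is in place, the concluding comparison is just the elementary Lipschitz property of the $\min$ operator.
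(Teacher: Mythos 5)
Your proof is correct, and its architecture is the same as the paper's: the reduction of $\min\{-\eta b,\,cp_k(b,F_\ell)\}$ to $J_N(b,F_\ell)$ via the threshold property of $\mathcal{S}_k$, the collapse of $J_{k+1}$ to $J_N$ through Lemma~\ref{equal_costs_lemma} under the backward-induction hypothesis, and the essential use of total stochastic ordering to identify which of the held and fresh distributions realises the minimum in $cc_k$. Where you diverge is in the final comparison: the paper splits the expectation defining $cc_k(b,F_u)$ into integrals over $\mathcal{L}_\ell$, $\mathcal{L}_u\setminus\mathcal{L}_\ell$ and $\mathcal{L}_u^c$, arrives at an inequality of the form $J_N(b,F_u)\le c+pJ_N(b,F_u)$, and then transfers it to $F_\ell$ by a $(1-p)$ rearrangement; you instead observe that both continuing costs are expectations of $\min\{\cdot,X\}$ with $X=J_N(b,F_{L_{k+1}})$ and invoke the monotone $1$-Lipschitz property of that map to get $cc_k(b,F_u)-cc_k(b,F_\ell)\le J_N(b,F_u)-J_N(b,F_\ell)$ in one line. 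The two steps are mathematically equivalent (the paper's $p\in[0,1]$ manipulation is precisely a hand-rolled instance of the Lipschitz bound), but yours is cleaner and avoids the explicit decomposition of $\mathcal{L}$; the only point worth making explicit is that the argmin in $\min\{J_{k+1}(b,F_\ell),J_{k+1}(b,F_{L_{k+1}})\}$ and in $\min\{J_N(b,F_\ell),J_N(b,F_{L_{k+1}})\}$ is the same (the stochastically larger distribution, by Lemma~\ref{cost_to_go_ordering_lemma}-(ii) applied at both stages), which is what licenses replacing the minimum of the $J_{k+1}$'s by the minimum of the $J_N$'s.
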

\begin{IEEEproof}
We will first show that $\mathcal{S}_{N-1}\subseteq\mathcal{Q}_{N-1}^\ell$.
Fix a $b\in\mathcal{S}_{N-1}$. Then $b\in\mathcal{Q}_{N-1}^u$ (because it is given that $\mathcal{S}_{N-1}\subseteq\mathcal{Q}_{N-1}^u$),
so that using the definition of the set $\mathcal{Q}_{N-1}^u$ (from (\ref{optimal_stopping_probing_equn})) we can write
\begin{eqnarray}
\label{using_defn_equn}
\min\Big\{-\eta b, P_{N-1}(b,F_u)\Big\}\le C_{N-1}(b,F_u).
\end{eqnarray}
For any generic distribution $F_s$, whenever  $b\in\mathcal{S}_{N-1}$, 
the minimum of the cost of stopping and the cost of probing
can be simplified as follows:
\begin{eqnarray}
\label{stopping_probing_cost_N-1_equn}
\min\Big\{-\eta b, P_{N-1}(b,F_s)\Big\}
&\overset{*}{=}& \min\Big\{-\eta b, \eta\delta+\mathbb{E}_s\Big[J_{N-1}(\max\{b,R_s\})\Big]\Big\}\nonumber\\
&\overset{o}{=}&\min\Big\{-\eta b, \eta\delta-\eta\mathbb{E}_s\Big[\max\{b,R_s\}\Big]\Big\}\nonumber\\
&\overset{\dagger}{=}&J_N(b,F_s).
\end{eqnarray}
In the above, $*$ is obtained by recalling the expression for the probing cost from 
(\ref{probing_cost_bF_equn}). $o$ is 
because, after probing we are still at stage $N-1$ with the new state $\max\{b,R_s\}$ 
also in $\mathcal{S}_{N-1}$ (Lemma~\ref{threshold_nature_lemma}); in 
$\mathcal{S}_{N-1}$ we know that it  is optimal to stop,
so that $J_{N-1}(\max\{b,R_s\})=-\eta \max\{b,R_s\}$. Finally, to obtain $\dagger$,
recall the expression for $J_N(b,F_s)$ from (\ref{cost_to_go_stageN_statebF_equn}).

Now using (\ref{stopping_probing_cost_N-1_equn}) in (\ref{using_defn_equn})
we see that, the hypothesis $b\in\mathcal{S}_{N-1}$ implies,  $J_N(b,F_u)\le C_{N-1}(b,F_u)$.
Also, from Lemma~\ref{cost_to_go_ordering_lemma}-(i)  we have,
$J_N(b,F_\ell)\le J_N(b,F_u)$ for any $F_\ell\ge_{st}F_u$. Combining these we can write
\begin{eqnarray}
\label{xyz_equn}
J_N(b,F_\ell)\le J_N(b,F_u)\le C_{N-1}(b,F_u).
\end{eqnarray}
To conclude that $b\in\mathcal{Q}_{N-1}^\ell$, we need to show
\begin{eqnarray*}
\min\Big\{-\eta b, P_{N-1}(b,F_\ell)\Big\}\le C_{N-1}(b,F_\ell),
\end{eqnarray*}
or, alternatively, recalling (\ref{stopping_probing_cost_N-1_equn}),
it is sufficient to show, 
% $J_N(b,F_\ell)\le C_{N-1}(b,F_\ell)$.
\begin{eqnarray}
\label{to_show_equn}
J_N(b,F_\ell)\le C_{N-1}(b,F_\ell).
\end{eqnarray}

Now for any generic distribution $F_s\in\mathcal{F}$ define 
$\mathcal{L}_s=\Big\{t\in\mathcal{L}: F_t\ge_{st} F_s\Big\}$ 
i.e., $\mathcal{L}_s$ is the set of all distributions in $\mathcal{F}$ that 
are stochastically greater than $F_s$. 
Let $\mathcal{L}_\ell^c$ denote the set of all the remaining distributions, i.e.,
$\mathcal{L}_\ell^c=\mathcal{F}\setminus\mathcal{L}_\ell$.
Since $\mathcal{F}$ is totally stochastically ordered 
(Lemma~\ref{F_total_order_lemma}), $\mathcal{L}_s^c$
contains all distributions in $\mathcal{F}$ that are stochastically smaller than $F_s$. Further,
for $F_\ell\ge_{st}F_u$ we have $\mathcal{L}_\ell\subseteq\mathcal{L}_u$.
Then, recalling the expression for $C_{N-1}(b,F_u)$ from (\ref{continuing_cost_bF_equn}) we can write
\begin{eqnarray}
C_{N-1}(b,F_u)
&=&\tau + \mathbb{E}_{L}\Big[\min\{J_{N}(b,F_u),J_{N}(b,F_{L_{N}})\}\Big]\nonumber\\
&\overset{*}{=}&\tau + \int_{\mathcal{L}_u}J_{N}(b,F_t)\ dL(t) + \int_{\mathcal{L}_u^c}J_{N}(b,F_u)\ dL(t)\nonumber\\
&\overset{o}{=}&\tau + \int_{\mathcal{L}_\ell}J_{N}(b,F_t)\ dL(t) + \int_{\mathcal{L}_u\setminus\mathcal{L}_\ell}J_{N}(b,F_t)\ dL(t) + \int_{\mathcal{L}_u^c}J_{N}(b,F_u)\ dL(t),\nonumber
\end{eqnarray}
where, $*$ is obtained by using Lemma~\ref{cost_to_go_ordering_lemma}-(i) 
and the definition of $\mathcal{L}_u$, and to obtain $o$ we have split the integral over $\mathcal{L}_u$
(first integral in $*$) into two integrals $-$ one over $\mathcal{L}_\ell$ and the other over
$\mathcal{L}_u\setminus\mathcal{L}_\ell$.
Now, for any $F_t\in\mathcal{L}_u\setminus\mathcal{L}_\ell$ we know that $F_t\ge_{st}F_u$ so that 
$J_{N}(b,F_t)\le J_{N}(b,F_u)$ (again from Lemma~\ref{cost_to_go_ordering_lemma}-(i)). 
Thus, in the above expression,
replacing $J_N(b,F_t)$ by $J_N(b,F_u)$ in the middle integral, and then combining it with the 
last integral, we obtain
\begin{eqnarray}
\label{xy_py_equn}
C_{N-1}(b,F_u)&\le& \tau + \int_{\mathcal{L}_\ell}J_{N}(b,F_t)\ dL(t) +  
\left(\int_{\mathcal{L}_\ell^c}dL(t)\right)\ J_{N}(b,F_u)
\end{eqnarray}
From (\ref{xyz_equn}) and (\ref{xy_py_equn}) we see that we have an inequality of the following form
\begin{eqnarray}
\label{three_order_equn}
J_N(b,F_\ell)\le J_N(b,F_u)\le c+pJ_N(b,F_u),
\end{eqnarray}
where $c=\tau + \int_{\mathcal{L}_\ell}J_{N}(b,F_t)\ dL(t)$ and $p=\int_{\mathcal{L}_\ell^c}dL(t)$. Since
$p\in[0,1]$ we can write 
\begin{eqnarray*}
J_N(b,F_\ell)(1-p)\le J_N(b,F_u)(1-p),
\end{eqnarray*}
rearranging which we obtain,
\begin{eqnarray*}
J_N(b,F_\ell)&\le& pJ_N(b,F_\ell)+J_N(b,F_u)-pJ_N(b,F_u)\nonumber\\
&\overset{*}{\le}& pJ_N(b,F_\ell)+c+pJ_N(b,F_u)-pJ_N(b,F_u)\nonumber\\
&=&c+pJ_N(b,F_\ell)
\end{eqnarray*}
where, to obtain $*$ we have used (\ref{three_order_equn}).
% Finally, the proof for stage $N-1$ is complete by noting that
Finally, note that
\begin{eqnarray*}
c+pJ_N(b,F_\ell) 
&=& \tau + \int_{\mathcal{L}_\ell}J_{N}(b,F_t)\ dL(t) + \left(\int_{\mathcal{L}_\ell^c}dL(t)\right) J_N(b,F_\ell)\\
&=& \tau + \mathbb{E}_{L}\Big[\min\{J_{N}(b,F_\ell),J_{N}(b,F_{L_{N}})\}\Big]\\
&=& C_{N-1}(b,F_\ell). 
\end{eqnarray*}
Thus, as desired we have shown $J_N(b,F_\ell)\le C_{N-1}(b,F_\ell)$ 
(recall the discussion leading to (\ref{to_show_equn})).

Suppose that for some $k+1=2,3,\cdots,N-1$ we have $\mathcal{S}_{k+1}\subseteq\mathcal{Q}_{k+1}^\ell$. We will have to show
that the same holds for stage $k$. Fix any $b\in\mathcal{S}_k$, then
for any generic distribution $F_s$, exactly as in (\ref{stopping_probing_cost_N-1_equn}) we have
\begin{eqnarray}
\min\Big\{-\eta b, P_k(b,F_s)\Big\} 
&=&\min\Big\{-\eta b, \eta\delta+\mathbb{E}_s\Big[J_{k}(\max\{b,R_s\})\Big]\Big\} \nonumber\\
&=& \min\Big\{-\eta b, \eta\delta-\eta\mathbb{E}_s\Big[\max\{b,R_s\}\Big]\Big\}\nonumber\\
&=& J_{N}(b,F_s).
\end{eqnarray}
Thus the hypothesis $\mathcal{S}_k\subseteq\mathcal{Q}_k^u$ implies $J_N(b,F_u)\le C_k(b,F_u)$,
and to show 
$\mathcal{S}_k\subseteq\mathcal{Q}_k^\ell$ it is sufficient to obtain
$J_N(b,F_\ell)\le C_k(b,F_\ell)$.
Proceeding as before (recall how (\ref{xy_py_equn}) was obtained) we can write
\begin{eqnarray*}
C_{k}(b,F_u)
&\le& \tau + \int_{\mathcal{L}_\ell}J_{k+1}(b,F_t)\ dL(t) +  
\left(\int_{\mathcal{L}_\ell^c}dL(t)\right)\ J_{k+1}(b,F_u).
\end{eqnarray*}
Now using Lemma~\ref{equal_costs_lemma}, we conclude
\begin{eqnarray*}
C_{k}(b,F_u)&\le&\tau + \int_{\mathcal{L}_\ell}J_{k+1}(b,F_t)\ dL(t) +  \int_{\mathcal{L}_\ell^c}dL(t)\ J_{N}(b,F_u).
\end{eqnarray*}
Note that the conditions required to apply Lemma~\ref{equal_costs_lemma} 
hold i.e., $b\in\mathcal{S}_{k+1}$ (since $\mathcal{S}_k\subseteq\mathcal{S}_{k+1}$ 
from Lemma~\ref{sets_trivial_ordering_corollary}-(iii)) and $\mathcal{S}_{k+1}\subseteq\mathcal{Q}_{k+1}^u$ (this is given).

Thus, again we have an inequality of the form $J_{N}(b,F_\ell)\le J_{N}(b,F_u)\le c'+p J_{N}(b,F_u)$ 
(where $c'=\tau + \int_{\mathcal{L}_\ell}J_{k+1}(b,F_t)\ dL(t)$). As before we can show that
$J_N(b,F_\ell)\le c'+p J_{N}(b,F_\ell)$.
Finally the proof is complete by showing that  $c'+p J_{N}(b,F_\ell)=C_k(b,F_\ell)$ as follows:
\begin{eqnarray}
C_k(b,F_\ell)&=&\tau + \int_{\mathcal{L}_\ell}J_{k+1}(b,F_t)\ dL(t) + \int_{\mathcal{L}_\ell^c}J_{k+1}(b,F_\ell)\ dL(t)\nonumber\\
&=&c'+p J_{N}(b,F_\ell),
\end{eqnarray}
where to replace $J_{k+1}(b,F_\ell)$ by $J_N(b,F_\ell)$ we have to again apply Lemma~\ref{equal_costs_lemma}.
However this time $\mathcal{S}_{k+1}\subseteq\mathcal{Q}_{k+1}^\ell$, is by the induction hypothesis.
\end{IEEEproof}

\verb11

We still require a distribution $F_u$ satisfying $\mathcal{S}_k\subseteq\mathcal{Q}_k^u$,
for every $k$. The minimum distribution $F_m$ turns out to be useful in this context.
The following lemma thus constitutes \emph{Step\ 2} of the proof of Lemma~\ref{all_distributions_corollary}.

\begin{lemma}
\label{dominated_set_lemma}
For every $k=1,2,\cdots,N-1$, the stobing set $\mathcal{Q}_k^m$ corresponding to the minimum
distribution $F_m$ satisfies, $\mathcal{S}_k\subseteq\mathcal{Q}_k^m$.
\end{lemma}
\begin{IEEEproof}
First note that the existence of a minimum distribution $F_m$ follows from Lemma~\ref{F_total_order_lemma}.
Now,  $F_m$ being minimum we have $F_\ell\ge_{st}F_m$ for all $F_\ell$. Then, using
Lemma~\ref{cost_to_go_ordering_lemma}-(i) we can write
\begin{eqnarray*}
J_{k+1}(b,F_{L_{k+1}})\le J_{k+1}(b,F_m).
\end{eqnarray*}
Using the above expression in (\ref{continuing_cost_bF_equn}) and then recalling 
(\ref{continuing_cost_b_equn}),  we obtain $C_k(b,F_m)=C_k(b)$. Finally, the
result follows from the definition of 
the sets $\mathcal{Q}_k^m$ and $\mathcal{S}_k$.
\end{IEEEproof}

\section{Proof of Theorem~\ref{stopping_sets_l_equal_theorem}}
\label{stopping_sets_l_equal_theorem_appendix}

\emph{Theorem}~\ref{stopping_sets_l_equal_theorem}\emph{:}
For $k=1,2,\cdots,N-1$ and for any $F_\ell$,
$\mathcal{S}_k^\ell=\mathcal{S}_{k+1}^\ell$.

\begin{IEEEproof}
Recalling the definition of the set $\mathcal{S}_k^\ell$ (from (\ref{optimal_stopping_l_equn})),
for any $b\in\mathcal{S}_{k+1}^\ell$ we have (if $k+1=N$, note that the following expression will not contain
the continuing cost),
\begin{eqnarray*}
-\eta b\le \min\Big\{P_{k+1}(b,F_\ell),C_{k+1}(b,F_\ell)\Big\}.
\end{eqnarray*}
Suppose, as in Theorem~\ref{stopping_sets_equal_theorem}, we can 
show that for any $b\in\mathcal{S}_{k+1}^\ell$, the various costs at stages $k$ and $k+1$ are same, i.e.,
$P_k(b,F_\ell)=P_{k+1}(b,F_\ell)$ and $C_k(b,F_\ell)=C_{k+1}(b,F_\ell)$, then the above inequality would imply,
$\mathcal{S}_k^\ell\supseteq\mathcal{S}_{k+1}^\ell$.
The proof is complete by recalling that we already have 
$\mathcal{S}_k^\ell \subseteq \mathcal{S}_{k+1}^\ell$ (from Lemma~\ref{sets_trivial_ordering_corollary}-(iii)).

Fix a $b\in\mathcal{S}_{k+1}^\ell$. To show that
$P_k(b,F_\ell)=P_{k+1}(b,F_\ell)$, first 
using Lemma~\ref{sets_trivial_ordering_corollary}-(i)
and Theorem~\ref{stopping_sets_equal_theorem}, note that
$\mathcal{S}_{k+1}^\ell\subseteq\mathcal{S}_{k+1}=\mathcal{S}_k$.
Since $b\in\mathcal{S}_{k+1}$ the cost of probing is
\begin{eqnarray*}
P_{k+1}(b,F_\ell) 
&=& \eta\delta + \mathbb{E}_\ell\Big[J_{k+1}(\max\{b,R_\ell\})\Big]\\
&=& \eta\delta -\eta\mathbb{E}_\ell\Big[\max\{b,R_\ell\}\Big]
\end{eqnarray*}
where, to obtain the second equality, note that $\max\{b,R_\ell\}\in\mathcal{S}_k$ (from Theorem~\ref{threshold_nature_lemma})
and hence at $\max\{b,R_\ell\}$ it is optimal to stop, so that $J_{k+1}(\max\{b,R_\ell\})=-\eta \max\{b,R_\ell\}$.
Similarly, since $b$ is also in $\mathcal{S}_k$ the cost of probing at stage $k$, $P_k(b,F_\ell)$,
is again $\eta\delta -\eta\mathbb{E}_\ell\Big[\max\{b,R_\ell\}\Big]$.
Finally, following the same procedure used to show $C_k(b)=C_{k+1}(b)$ in Theorem~\ref{stopping_sets_equal_theorem},
we can obtain $C_k(b,F_\ell)=C_{k+1}(b,F_\ell)$, thus completing the proof.
\end{IEEEproof}

\end{document}